\documentclass[num-refs]{wiley-article}

\usepackage{amsmath,amsfonts,amssymb,amscd}
\usepackage{latexsym}
\usepackage{tcolorbox}
\usepackage[all]{xy}
\usepackage{tikz}
\usetikzlibrary{arrows,shapes,positioning,automata,backgrounds,fit}

\usepackage{graphicx}
\usepackage[export]{adjustbox}

\usepackage{algorithm}
\usepackage{algorithmicx}
\usepackage[noend]{algpseudocode}
\usepackage{algpascal}

\newtheorem{claim}{Claim}

\newtheorem{definition}{Definition}

\papertype{Article}

\sloppy

\title{Triangle-Free Equimatchable Graphs}

\author[1]{Yasemin B\"{u}y\"{u}k\c{c}olak}
\author[2]{Didem G\"{o}z\"{u}pek}
\author[1]{Sibel \"{O}zkan}

\affil[1]{Department of Mathematics, Gebze Technical University, Kocaeli, Turkey}
\affil[2]{Department of Computer Engineering, Gebze Technical University, Kocaeli, Turkey}

\corraddress{Yasemin B\"{u}y\"{u}k\c{c}olak, Department of Mathematics, Gebze Technical University, Kocaeli, Turkey}
\corremail{y.buyukcolak@gtu.edu.tr}

\fundinginfo{This work is supported by the Scientific and Technological Research Council of Turkey (TUBITAK) under grant no. 118E799.}

\runningauthor{B\"{u}y\"{u}k\c{c}olak, G\"{o}z\"{u}pek and \"{O}zkan}

\begin{document}

\maketitle

\begin{abstract}
A graph is called equimatchable if all of its maximal matchings have the same size. Frendrup et al. \cite{Frendrup} provided a characterization of equimatchable graphs with girth at least $5$. In this paper, we extend this result by providing a complete structural characterization of equimatchable graphs with girth at least $4$, i.e., equimatchable graphs with no triangle, by identifying the equimatchable triangle-free graph families. Our characterization also extends the result given by Akbari et al. in \cite{AkbariRegular}, which proves that the only connected triangle-free equimatchable $r$-regular graphs are $C_5$, $C_7$, and $K_{r,r}$, where $r$ is a positive integer. Given a non-bipartite graph, our characterization implies a linear time recognition algorithm for triangle-free equimatchable graphs.

\keywords{Equimatchable, Triangle-free, Factor-critical, Girth, Graph families, 2010 AMS Subject Classification Number: 05C70, 05C75}
\end{abstract}

\section{Introduction}
A graph $G$ is \emph{equimatchable}  if every maximal matching of $G$ is a maximum matching, i.e., every maximal matching in $G$ has the same cardinality. The concept of equimatchability was introduced in 1974 independently by Gr\"{u}nbaum \cite{Grunbaum}, Lewin \cite{Lewin} and  Meng \cite{Meng}. In 1984, Lesk et al. \cite{Lesk} accomplished a formal introduction of equimatchable graphs and provided a structural characterization of equimatchable graphs via Gallai-Edmonds decomposition, yielding a polynomial-time recognition algorithm. Equimatchable graphs can also be considered as an analogue of well-covered graphs in terms of matchings. A graph is \emph{well-covered} if all of its maximal independent sets have the same size. Noting that a set of edges in a graph is a matching if and only if the corresponding set of vertices in its line graph is an independent set, a graph is equimatchable if and only if its line graph is well-covered. Since every line graph is claw-free, the polynomial-time recognition algorithm for well-covered claw-free graphs given in \cite{Tankus} determines also the equimatchability of a graph in polynomial-time. Alternatively, Demange and Ekim \cite{DemangeEkim} gave another characterization of equimatchable graphs yielding a more efficient recognition algorithm using alternating chain-based arguments in some properly constructed auxiliary graphs.

In the literature, the structure of equimatchable graphs is extensively studied by several authors. The first study is the characterization of equimatchable graphs with a perfect matching, i.e., randomly matchable graphs. In 1979, Sumner \cite{Sumner} proved that the only connected randomly matchable graphs are the complete graph $K_{2n}$ and complete bipartite graph $K_{n,n}$ for $n \geq 1$. On the other hand, the work in \cite{Lesk} provides a general structure of not only randomly matchable graphs but also equimatchable graphs without a perfect matching. A significant consequence of this characterization is that every $2$-connected equimatchable graph is either bipartite or factor-critical or $K_{2n}$, $n \geq 1$. In 1986, Favaron \cite{Favaron86} investigated equimatchable factor-critical graphs and characterized such graphs with vertex connectivity one and two. In \cite{Eiben13}, Eiben and Kotrb\v{c}\'{\i}k  proved that for a $2$-connected factor-critical equimatchable graph $G$ and a minimal matching $M$ isolating a vertex $v$ of $G$, $G \backslash (V(M) \cup \{v\})$ is a connected randomly matchable graph. More recently, Dibek et al. \cite{Dibek} showed that equimatchable graphs do not have a forbidden subgraph characterization since the equimatchability is not a hereditary property, i.e., it is not necessarily preserved by induced subgraphs. They then provided the first family of forbidden induced subgraphs of equimatchable graphs. Namely, they showed that equimatchable graphs do not contain  odd cycles of length at least nine. Furthermore, Akbari et al. \cite{AkbariClaw-Free} provided the characterization of claw-free equimatchable graphs by identifying the equimatchable claw-free graph families.

In 2010, Frendrup et al. \cite{Frendrup} gave a characterization for equimatchable graphs with girth at least five. Particularly, they showed that an equimatchable graph with girth at least five is either one of $C_5$ and $C_7$ or a member of a graph family which contains $K_2$ and all bipartite graphs with partite sets $V_1$ and $V_2$ such that all vertices in $V_1$ are stems and no vertex from $V_2$ is a stem. Recently, Akbari et al. \cite{AkbariRegular} partially characterized equimatchable regular graphs by showing that for an odd positive integer $r$, if $G$ is a connected equimatchable $r$-regular graph, then $G \in \{K_{r+1}, K_{r,r}\}$. A particular result in \cite{AkbariRegular} is that the only connected triangle-free equimatchable $r$-regular graphs are $C_5$, $C_7$ and $K_{r,r}$ where $r$ is a positive integer. In this paper, we extend both the result by Frendrup et al. \cite{Frendrup} and the result by Akbari et al. \cite{AkbariRegular} by providing a complete structural characterization of triangle-free equimatchable graphs. Our characterization yields a linear time algorithm that recognizes whether a given non-bipartite graph is equimatchable and triangle-free.

Section 2 is devoted to  basic definitions, notations, and previous results on equimatchable graphs. In Section 3, we first point out that non-factor-critical equimatchable triangle-free graphs correspond to equimatchable bipartite graphs, which have already been characterized in \cite{Lesk}. We then discuss the class of factor-critical equimatchable triangle-free graphs. In Section 4, we structurally characterize this graph class by identifying graph families in this class. In Section 5, we summarize our structural characterization for equimatchable triangle-free graphs by giving the main theorem (Theorem \ref{MainTheorem}) and provide the linear-time recognition algorithm that decides whether a given non-bipartite graph is equimatchable and triangle-free. Finally, in Section 6 we conclude the paper and present some open questions.

\section{Preliminaries}
In this section, we first give graph-theoretical definitions and notations, and then present preliminary results that will be used in the characterization of triangle-free equimatchable graphs.

All graphs in this paper are finite, simple, and undirected. For a graph $G$, $V(G)$ and $E(G)$ denote the set of vertices and edges in $G$, respectively. An edge {joining} the vertices $u$ and $v$ in $G$ will be denoted by $uv$. For a vertex $v$ in $G$ and a subset $X \subseteq V(G)$, $N(v)$ denotes the set of neighbors of $v$ in $G$, while $N(X)$ denotes the set of all vertices adjacent to at least one vertex of $X$ in $G$. A vertex of degree one is called a \emph{leaf} and a vertex adjacent to a leaf is called a \emph{stem}. The \emph{order} of $G$ is denoted by $|V(G)|$; besides, $G$ is called \emph{odd} (resp. \emph{even}) if $|V(G)|$ is odd (resp. even).
For a graph $G$ and $U \subseteq V(G)$, the subgraph {induced by} $U$ is denoted by $G[U]$. The \emph{difference} $G\backslash H$ of two graphs $G$ and $H$ is defined as the subgraph induced by the difference of their vertex sets, i.e., $G \backslash H = G[V(G)\backslash V(H)]$. For a graph $G$ and a vertex $v$ of $G$, the subgraph induced by $V(G) - v$ is denoted by $G - v$ for the sake of brevity.
The {path}, {{cycle}}, and {complete graph} on $n$ vertices are denoted by $P_n$, $C_n$, and $K_n$, respectively, while the {complete bipartite graph} with bipartition of  sizes $n$ and $m$ is denoted by $K_{n,m}$. The graph $C_3$ (or equivalently $K_3$) is termed \emph{triangle} and a graph is called \emph{triangle-free} if it contains no induced triangle. The length of a shortest cycle in $G$ is called the \emph{girth} of $G$. For a graph $G$,  $c(G)$ denotes the number of components in $G$. A set of vertices $S$ of a graph $G$ such that $c(G \backslash S) > c(G)$ is called a \emph{cut-set}.  A vertex $v$ is called a \emph{cut-vertex} if $\{v\}$ is a cut-set. A graph is \emph{2-connected} if its cut-sets have at least $2$ vertices.

A \emph{matching} in a graph $G$ is a set $M \subseteq E(G)$ of pairwise nonadjacent edges of $G$. A vertex $v$ of $G$ is \emph{saturated by} $M$ if $v \in V(M)$ and \emph{exposed by} $M$ otherwise. A matching $M$ is called \emph{maximal} in $G$ if there is no other matching of $G$ that contains $M$. A matching is called a \emph{maximum} matching of $G$ if it is a matching of maximum size. A matching $M$ in $G$ is a \emph{perfect matching} if $M$ saturates all vertices in $G$, i.e., $V(M) = V(G)$. For a vertex $v$, a matching $M$ is called a \emph{matching isolating} $v$ if $\{v\}$ is a component of $G \backslash V(M)$. A matching $M$ isolating a vertex $v$ is called \emph{minimal} if no subset of $M$ isolates $v$. A graph $G$ is \emph{equimatchable} if every maximal matching of $G$ is a maximum matching, i.e., every maximal matching has the same cardinality. A graph $G$ is \emph{randomly matchable} if it is an equimatchable graph admitting a perfect matching. A graph $G$ is \emph{factor-critical} if $G - v$ has a perfect matching for every vertex $v$ of G. For the sake of brevity, we denote equimatchable factor-critical graphs by \emph{EFC-graphs}.

The characterization of randomly matchable graphs was provided  by Sumner as follows:

\begin{theorem}\label{sumner}\textbf{(\cite{Sumner})}
A connected graph is randomly matchable if and only if it is isomorphic to $K_{2n}$ or $K_{n,n}$, $n \geq 1$.
\end{theorem}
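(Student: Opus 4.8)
The plan is to prove both directions separately, with the forward direction being a short verification and the reverse direction an induction on the number of vertices. For the easy direction I would first isolate the governing principle: if $M$ is a maximal matching, then the set of $M$-exposed vertices is independent, since otherwise an edge between two exposed vertices could be added to $M$. In $K_{2n}$ any two vertices are adjacent, so an independent exposed set has at most one vertex; as $2n$ is even this forces the exposed set to be empty, i.e. $M$ is perfect. In $K_{n,n}$ an independent set lies entirely within one part, so all exposed vertices lie in a single part, say $A$; then the opposite part $B$ is fully saturated, the matching has $n$ edges, and these saturate all of $A$ as well, so again $M$ is perfect. Hence both families are randomly matchable.

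For the reverse direction I would induct on $n$, where $|V(G)| = 2n$, with base case $G = K_2 = K_{1,1}$. The engine of the induction is the following deletion sub-lemma: for any edge $uv \in E(G)$, the graph $G - u - v$ is randomly matchable, and consequently so is each of its connected components. To see this, take any maximal matching $M'$ of $G - u - v$; then $M' \cup \{uv\}$ is a matching of $G$ whose exposed vertices coincide exactly with those of $M'$, and since any edge joining two such vertices would lie in $G - u - v$ and contradict maximality of $M'$, the matching $M' \cup \{uv\}$ is maximal in $G$. Random matchability of $G$ makes it perfect, so $M'$ saturates every vertex of $G - u - v$. Thus every maximal matching of $G - u - v$ is perfect, and the same holds componentwise. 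By the induction hypothesis each component of $G - u - v$ is isomorphic to some $K_{2m}$ or $K_{m,m}$.

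The remaining task is to reconstruct $G$ from these pieces by analyzing how $u$ and $v$ attach to the components of $G - u - v$. Here I would exploit the fact that random matchability forces every edge of $G$ to lie in a perfect matching: any missing adjacency creates a pair of vertices that some maximal matching can strand unless further edges are present, and by repeatedly producing such maximal matchings one forces $u$ and $v$ to be completely joined to the clique pieces (yielding $K_{2n}$) or to respect and complete a global bipartition (yielding $K_{n,n}$). In particular I would split into the cases where some component is complete versus complete bipartite, and propagate that type, using the parity of clique sizes and the two classes of the bipartite pieces to show a consistent global structure.

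The principal obstacle, and the step requiring the most care, is precisely this reattachment analysis: I must rule out ``hybrid'' configurations in which clique-type and bipartite-type components coexist, or in which $G - u - v$ splits into several components that cannot be merged consistently. The crux is to show that such mixtures always admit a maximal matching that exposes a pair of adjacent vertices, contradicting random matchability, thereby collapsing $G$ to a single complete or single complete-bipartite graph; controlling the number of components of $G - u - v$ and the placement of $u, v$ relative to the bipartition classes is where the bulk of the bookkeeping lies.
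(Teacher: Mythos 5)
The paper never proves Theorem \ref{sumner}: it is imported from Sumner's 1979 paper \cite{Sumner} and used as a black box, so your attempt can only be judged on its own merits. Your first two steps are correct and complete. The forward direction (exposed vertices of a maximal matching form an independent set; in $K_{2n}$ parity then forces perfection, in $K_{n,n}$ counting against the fully saturated side does) is fine, and your deletion sub-lemma is exactly the classical engine of Sumner's argument: for any edge $uv$, a maximal matching $M'$ of $G-u-v$ extends by $uv$ to a maximal matching of $G$, hence is perfect, so $G-u-v$ is randomly matchable and induction applies to its components. This style of argument --- building a maximal matching that strands an independent set --- is also the workhorse of the present paper (compare Lemma \ref{independent-set}).

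The genuine gap is your third step, which you announce but never execute, and it is not bookkeeping: it is the actual content of the theorem. Two concrete pieces are missing. First, you never prove that $G-u-v$ is connected. This needs an argument such as: if $C_1,C_2$ are distinct components of $G-u-v$ with $u$ adjacent to $x\in C_1$ and $v$ adjacent to $w\in C_2$, then $\{ux,vw\}$ together with maximal matchings of $C_1-x$ and $C_2-w$ and perfect matchings of the remaining components is a maximal matching of $G$ (its exposed vertices lie in distinct components, hence form an independent set) that is not perfect, a contradiction; iterating this forces one of $u,v$ to have degree $1$, which yields the same contradiction, so $G-u-v$ is a single $K_{2n-2}$ or $K_{n-1,n-1}$. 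Second, the reconstruction itself: claims like ``one forces $u$ and $v$ to be completely joined to the clique pieces or to respect and complete a global bipartition'' must each be backed by an explicit bad-matching construction. For instance, in the bipartite case with parts $A,B$, from $u\sim x\in A$ one shows $v$ is adjacent to \emph{every} vertex of $B$ (a maximal matching of $K_{n-2,n-1}$ can expose any prescribed vertex of $B$); one must then separately rule out $u$ having neighbors in both parts, and rule out $v$ meeting $A$, in each case by exhibiting a maximal matching exposing two nonadjacent vertices. None of these constructions appears in your proposal; the phrases ``I would exploit,'' ``I would split into cases,'' and ``the crux is to show'' stand precisely where the proof has to live. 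As written, this is a correct strategy with a correct first half, not a proof.
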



A characterization of  EFC-graphs having a cut-set of size one was provided by Favaron as follows:

\begin{theorem}\label{fa-thm1}\textbf{(\cite{Favaron86})}
  A graph with vertex-connectivity $1$ is an EFC-graph if and only if:
  \begin{enumerate}
    \item  There exists exactly one cut-vertex $v$,
    \item  Every component of $G-v$ is randomly matchable, i.e. either $K_{2n}$ or $K_{n,n}$,
    \item  $v$ is adjacent to at least two adjacent vertices of every component of $G-v$.
  \end{enumerate}
\end{theorem}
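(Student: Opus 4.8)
The plan is to prove both implications, and the engine of the whole argument is a single consequence of the hypothesis ``EFC'': a factor-critical graph has odd order and only near-perfect matchings, so equimatchability forces \emph{every} maximal matching of an EFC-graph to expose exactly one vertex. I will also use that a cut-vertex $v$ of a connected graph has a neighbour in each component of $G-v$ and that distinct components induce no edges between them. For necessity, fix a cut-vertex $v$ with components $C_1,\dots,C_k$ ($k\ge 2$) of $G-v$; factor-criticality gives a perfect matching of $G-v$, which restricts to a perfect matching on each $C_i$, so every $C_i$ has even order. The crux --- which I expect to be the main obstacle --- is condition~2: each $C_i$ is equimatchable, hence randomly matchable, hence $K_{2n}$ or $K_{n,n}$ by Theorem~\ref{sumner}. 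The naive extension of a non-perfect maximal matching $N_i$ of $C_i$ to all of $G$ can fail, because if $v$ is adjacent to the vertices $N_i$ leaves exposed then $v$ may absorb the deficiency and the global matching need expose only one vertex. The resolution is to \emph{park $v$ in another component}: using $k\ge 2$, match $v$ to a neighbour $w'$ in some $C_j$, $j\ne i$, complete $C_j-w'$ by a maximal matching and the remaining components by perfect matchings, and retain $N_i$ on $C_i$; the resulting maximal matching of $G$ exposes the $\ge 2$ vertices left by $N_i$, contradicting the one-exposed-vertex principle.

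Conditions~3 and~1 then follow from factor-criticality together with this structure. For condition~3 I would delete a vertex $u\in C_i$ and inspect the perfect matching of $G-u$: since $C_i-u$ has odd order and is joined to the rest of $G$ only through $v$, exactly one vertex of $C_i-u$ is matched to $v$. In the $K_{2n}$ case this already forces at least two neighbours of $v$ in $C_i$ (deleting a unique neighbour would isolate an odd complete graph), and any two vertices of $K_{2n}$ are adjacent; in the $K_{n,n}$ case, performing the deletion once with $u$ in each part forces $v$ to have a neighbour in both parts, and such a pair is adjacent, giving the required ``two adjacent vertices''. For condition~1, condition~3 guarantees $v$ has at least two neighbours in each $C_i$, while $C_i-v'$ stays connected for every $v'\in C_i$ (deleting one vertex of $K_{2n}$ or $K_{n,n}$ leaves a connected graph still attached to $v$); hence no $v'\ne v$ can disconnect $G$, so $v$ is the unique cut-vertex.

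For sufficiency, assume conditions~1--3. Factor-criticality is direct: $G-v$ is a disjoint union of randomly matchable graphs, and to match $G-u$ for $u\in C_i$ I send $v$ into $C_i$ along the adjacent pair of condition~3 (a neighbour in the appropriate part for $K_{n,n}$, any neighbour for $K_{2n}$) to preserve balance, completing the other components perfectly. For equimatchability, take an arbitrary maximal matching $M$: $v$ meets at most one component, maximality forces every other component to be perfectly matched, and the component interacting with $v$ reduces to the routine fact that a maximal matching of $K_{2n-1}$ or of $K_{n-1,n}$ exposes exactly one vertex. Thus every maximal matching is near-perfect, so $G$ is equimatchable and hence an EFC-graph. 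The only genuinely delicate point is the necessity of condition~2 via the parking argument; the remaining steps are bookkeeping with the $K_{2n}$/$K_{n,n}$ structure and the one-exposed-vertex principle.
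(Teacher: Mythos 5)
This statement is one the paper does not prove at all: it is imported verbatim from Favaron's 1986 paper as a known result (Theorem~\ref{fa-thm1} is stated with a citation and no argument), so there is no in-paper proof to compare against; your proposal has to stand on its own, and it does. Your argument is correct and complete. The necessity direction correctly identifies and resolves the one genuinely delicate point: a non-perfect maximal matching of a component $C_i$ need not extend to a ``bad'' maximal matching of $G$ if $v$ can absorb an exposed vertex of $C_i$, and your fix of matching $v$ into a different component $C_j$ (which exists since $v$ is a cut-vertex and has a neighbour in every component) produces a maximal matching exposing at least two vertices of $C_i$ plus, by parity, at least one vertex of $C_j$, contradicting the one-exposed-vertex principle; Sumner's theorem (Theorem~\ref{sumner}) then yields the $K_{2n}$/$K_{n,n}$ structure. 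Your derivations of conditions~3 and~1 from factor-criticality plus that structure are sound --- in particular the parity argument that in $K_{n,n}$ the vertex matched to $v$ in a perfect matching of $G-u$ must lie in the part opposite to $u$, which forces neighbours of $v$ in both parts. The sufficiency direction is also correct: condition~3 supplies exactly the neighbour needed to preserve bipartite balance when constructing a perfect matching of $G-u$, and equimatchability follows because any maximal matching of $G$ restricts to maximal matchings of the components (or of $C_i-w$ when $v$ is matched to $w\in C_i$), each of which exposes the right number of vertices since maximal matchings of $K_{2n-1}$ and $K_{n-1,n}$ expose exactly one vertex. In effect you have supplied a self-contained elementary proof, resting only on Theorem~\ref{sumner}, of a result the paper takes on faith.
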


For a $2$-connected EFC-graph $G$, the structure of the graph $G \backslash (V (M) \cup \{v\})$,  where $v \in V(G)$ and $M$ is a minimal matching isolating $v$, was given in \cite{Eiben13} as follows:

\begin{theorem}\label{eiben-thm2}\textbf{(\cite{Eiben13})}
  Let $G$ be a $2$-connected EFC-graph. Let $v$ be a vertex of $G$ and $M$ be a minimal matching isolating $v$. Then $G \backslash (V (M) \cup \{v\})$ is isomorphic to $K_{2n}$ or $K_{n,n}$ for some nonnegative integer $n$.
\end{theorem}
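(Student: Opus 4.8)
The plan is to identify $H := G \backslash (V(M) \cup \{v\})$ as a connected randomly matchable graph and then invoke Sumner's theorem (Theorem \ref{sumner}). Since $G$ is factor-critical it has odd order, say $|V(G)| = 2k+1$, every maximum matching exposes exactly one vertex, and hence every maximal matching of the equimatchable graph $G$ has size exactly $k$. First I would record the consequence of the minimality of $M$: every edge of $M$ must have an endpoint in $N(v)$, for otherwise deleting that edge from $M$ would still leave $v$ isolated in $G \backslash V(M)$, contradicting minimality. I would also note that $V(H) \cap V(M) = \emptyset$ and $N(v) \subseteq V(M)$, so no vertex of $H$ is adjacent to $v$.

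The first substantive step is to show that $H$ is randomly matchable. Let $M_H$ be any maximal matching of $H$. Because $N(v) \subseteq V(M)$ and $V(M)$ is already saturated by $M$, the vertex $v$ has no available partner, and since $M_H$ is maximal inside $H$ while every vertex of $V(M)$ is saturated, the set $M \cup M_H$ is a maximal matching of $G$ exposing $v$. By equimatchability $M \cup M_H$ has size $k$ and hence exposes only $v$; consequently $M_H$ saturates all of $V(H)$, i.e. $M_H$ is a perfect matching of $H$. Since $M_H$ was arbitrary, every maximal matching of $H$ is perfect, so $H$ is (a priori possibly disconnected but) randomly matchable, and in particular each of its components is connected and randomly matchable.

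The crux is to prove that $H$ is connected, for which I would use the $2$-connectivity of $G$ together with the structure just obtained. Suppose for contradiction that $H$ has a component $C$ with $V(C) \neq V(H)$. Then $N(C) \subseteq V(M)$, and since $G$ is $2$-connected this attachment set cannot be a single vertex, so $|N(C)| \geq 2$. The idea is to exploit the randomly matchable structure of $C$ (so $C \cong K_{2n}$ or $K_{n,n}$ by Sumner's theorem applied to the component) to reroute edges of $M$ incident to $N(C)$: matching a vertex of $C$ to a vertex $a \in N(C) \cap V(M)$ frees the $M$-partner of $a$, and by choosing the rerouting along the two distinct attachment vertices so that the freed vertices cannot be re-absorbed, one aims to construct a maximal matching of $G$ that exposes more than one vertex, contradicting the fact that every maximal matching has size $k$. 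Carrying out this exchange so that maximality is genuinely violated—i.e. ensuring the exposed vertices have no common available neighbor—is the delicate point, and it is where the interplay among the two attachment vertices, the minimality of $M$, and the $K_{2n}/K_{n,n}$ structure of $C$ must be used carefully; this is the step I expect to be the main obstacle.

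Once connectivity is established, $H$ is a connected randomly matchable graph, and Sumner's theorem (Theorem \ref{sumner}) immediately yields $H \cong K_{2n}$ or $H \cong K_{n,n}$ for some nonnegative integer $n$, with $n = 0$ corresponding to the degenerate case $V(H) = \emptyset$. This completes the argument.
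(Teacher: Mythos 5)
Your overall frame is the right one, and the first two steps are sound: minimality of $M$ does force every edge of $M$ to meet $N(v)$, and your argument that $M \cup M_H$ is a maximal matching of $G$ for any maximal matching $M_H$ of $H$ (using $N(v) \subseteq V(M)$ and the independence of the exposed vertices), whence equimatchability plus factor-criticality force $M_H$ to be perfect, is correct. One caveat on the comparison itself: the paper never proves this statement --- it is imported verbatim from \cite{Eiben13} --- so your attempt can only be judged against the source, not against an internal argument of this paper.

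The genuine gap is the connectivity of $H$, which you explicitly defer as ``the main obstacle'': that step is not a loose end but the entire content of the theorem, and it is precisely where the two hypotheses your argument has not yet used in earnest --- minimality of $M$ and $2$-connectivity of $G$ --- must enter. Indeed, what you actually prove (each component of $H$ is randomly matchable) already follows from Lemma \ref{removing-matching} applied to the matching $M$, with no minimality and no $2$-connectivity; and without $2$-connectivity the connectivity claim is simply false, since by Theorem \ref{fa-thm1} an EFC-graph with a cut vertex $v$ leaves \emph{several} randomly matchable components after deleting $v$ and an isolating matching. A correct completion must therefore run roughly as follows: if $H$ had two components $C_1, C_2$, pick distinct $m_1, m_2 \in V(M)$ with $m_1$ adjacent to some $c_1 \in C_1$ and $m_2$ adjacent to some $c_2 \in C_2$ (possible by $2$-connectivity, since $N(C_i) \subseteq V(M)$ cannot be a single vertex). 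When $m_1 m_2 \in M$, rerouting to $(M \setminus \{m_1 m_2\}) \cup \{m_1 c_1, m_2 c_2\}$, padded with near-perfect matchings of $C_1 - c_1$ and $C_2 - c_2$ and perfect matchings of the remaining components, produces an independent set $\{v, a_1, b_1\}$ of size three whose removal leaves a perfectly matchable graph, contradicting Lemma \ref{independent-set}. But when $m_1$ and $m_2$ lie on different edges $m_1 w_1, m_2 w_2$ of $M$, the rerouting frees both $w_1$ and $w_2$, and re-absorbing them splits into cases ($w_1 \sim w_2$, $w_i \in N(v)$, $w_i$ adjacent into $C_1$ or $C_2$); the last case forces a re-choice of $m_1, m_2$ and threatens circularity, which is exactly the delicate interplay you flagged but did not resolve. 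Until that case analysis is carried out (it occupies roughly a page in \cite{Eiben13}), the proposal establishes only the weaker, already-known statement that $H$ is randomly matchable.
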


The following results are used frequently in our arguments and are key to analyzing equimatchable graphs with no triangle.

\begin{lemma}\label{independent-set}
A factor-critical graph $G$ is equimatchable if and only if there is no independent set $I$ of size $3$ such that $G \backslash I$ has a perfect matching.
\end{lemma}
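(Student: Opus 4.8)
The plan is to exploit two elementary facts and combine them with factor-criticality. First, since $G$ is factor-critical, $|V(G)|$ is odd and every maximum matching exposes exactly one vertex; hence $G$ fails to be equimatchable precisely when it admits a maximal matching exposing at least three vertices. Second, a matching $M$ is maximal if and only if its set of exposed vertices is independent, since otherwise an edge joining two exposed vertices could be added to $M$. I would record both observations at the start and use them repeatedly.

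For the easy direction, suppose there is an independent set $I$ with $|I|=3$ such that $G\setminus I$ has a perfect matching $M$. Then $M$ is a matching of $G$ whose exposed set is exactly $I$; as $I$ is independent, $M$ is maximal, yet it exposes three vertices and so is not maximum. Thus $G$ is not equimatchable, which establishes one implication.

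The substantial direction is to show that if $G$ is not equimatchable then such an independent triple exists. I would start from a maximal matching $M$ that is not maximum, let $S$ be its exposed set (independent, of odd size $|S|\ge 3$), and argue that $M$ can be modified so as to expose exactly three vertices. The key idea is factor-criticality: fix $v\in S$ and take a perfect matching $N$ of $G-v$. In the symmetric difference $M\triangle N$, the vertices of $S\setminus\{v\}$ are exactly the degree-one vertices (they are $M$-exposed but $N$-saturated, while $v$ is exposed by both $M$ and $N$), so every path component of $M\triangle N$ is an $M$-augmenting path with both endpoints in $S\setminus\{v\}$. Augmenting $M$ along one such path saturates its two endpoints and changes no other vertex, so the resulting matching $M'$ has exposed set $S$ minus two vertices, which is still a subset of the independent set $S$ and hence still independent. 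Therefore $M'$ is again maximal, with two fewer exposed vertices. Iterating this reduction (equivalently, choosing $M$ with $|S|$ minimal subject to $|S|\ge 3$ and deriving a contradiction whenever $|S|\ge 5$) brings the exposed set down to size exactly three, and that exposed set is the desired independent triple $I$, the corresponding matching being a perfect matching of $G\setminus I$.

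The main obstacle is this forward direction, and within it the one delicate point is verifying that augmentation preserves maximality; this is precisely where independence of $S$ is used, since the exposed set only shrinks and therefore remains independent. Everything else — the existence of $N$, the degree structure of $M\triangle N$, and the parity of $|S|$ — is routine once factor-criticality is invoked.
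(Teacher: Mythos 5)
Your proof is correct and follows essentially the same route as the paper's: the easy direction is identical, and the hard direction in both cases repeatedly augments a maximal non-maximum matching---observing that the exposed set only shrinks inside an independent set, so maximality is preserved---until exactly three exposed vertices remain. The only difference is one of detail: you construct the augmenting paths explicitly via the symmetric difference of $M$ with a perfect matching of $G-v$ supplied by factor-criticality, whereas the paper simply invokes the existence of augmenting paths and leaves the preservation-of-maximality step implicit.
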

\begin{proof}
Note that a factor-critical graph $G$ has a maximum matching of size $[|V(G)|-1]/2$.

$(\Rightarrow)$
Let $G$ be a EFC-graph. Then all maximal matchings of $G$ have size $[|V(G)| - 1]/2$. Assume to the contrary that there is an independent set $I$ of $G$ with $3$ vertices such that $G \backslash I$ has a perfect matching $M$. It implies that $M$ is a maximal matching of $G$ of size $[|V(G)|-3]/2$, contradiction. Therefore, we conclude that for every independent set $I$ of size $3$, $G \backslash I$ has no perfect matching.

$(\Leftarrow)$
Suppose, conversely, that for all independent sets of size $3$, $G \backslash I$ has no perfect matching. Assume that $G$ is not equimatchable. That is, there exists a maximal matching $M$ of size strictly less than $[|V(G)|-1]/2$ in $G$. Since $M$ is not a maximum matching by factor-criticality of $G$, by repetitively using augmenting paths, a maximal matching $M^*$ of size $([|V(G)|-1]/2)-1=[|V(G)|-3]/2$ in $G$ can be obtained. This implies that $I = G \backslash V(M^*)$ is an independent set of size $3$ and $M^*$ is a perfect matching in $G \backslash I$, contradiction. We conclude that $G$ is equimatchable.

\hfill\(\qed\)
\end{proof}

\begin{lemma}\label{removing-matching}
Let $G$ be a connected EFC-graph and $M$ be a matching of $G$. Then $G \backslash V(M)$ is equimatchable and factor-critical, and the followings hold:
\begin{enumerate}
  \item $G \backslash V(M)$ has exactly one odd component and this component is equimatchable and factor-critical.
  \item Even components of $G \backslash V(M)$ are randomly matchable.
\end{enumerate}
\end{lemma}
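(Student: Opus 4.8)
The plan is to deduce everything from two ingredients: that deleting the vertices of a matching preserves equimatchability, and that the factor-criticality of $G$ forces the leftover graph to have deficiency exactly one. Write $H := G\backslash V(M)$ and let $\nu(\cdot)$ denote the matching number. I would first show $H$ is equimatchable, using only that $G$ is equimatchable. Let $N$ be any maximal matching of $H$. Since $V(M)$ and $V(N)$ are disjoint, $M\cup N$ is a matching of $G$, and I claim it is maximal there: the vertices it exposes are exactly those exposed by $N$, all lying in $V(H)$, and because $H=G[V(H)]$ is an \emph{induced} subgraph, maximality of $N$ in $H$ rules out any edge of $G$ joining two of them. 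Hence $|M|+|N|=|M\cup N|=\nu(G)$, so $|N|=\nu(G)-|M|$ is independent of $N$, and $H$ is equimatchable.

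Next I would pin down the component structure. As $G$ is factor-critical, $|V(G)|$ is odd and $\nu(G)=(|V(G)|-1)/2$; since $|V(M)|$ is even, $H$ has odd order and $\nu(H)=\nu(G)-|M|=(|V(H)|-1)/2$, so every maximal matching of $H$ exposes exactly one vertex. Each component of $H$ is again equimatchable, so within the $i$-th component every maximal matching exposes the same number $\epsilon_i\ge 0$ of vertices, and these constants satisfy $\sum_i \epsilon_i=1$. Being nonnegative integers summing to one, exactly one $\epsilon_i$ equals $1$; that component has odd order and is the unique odd component, while every other component has $\epsilon_i=0$ and hence possesses a perfect matching. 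Being equimatchable with a perfect matching, each such even component is randomly matchable, and by Theorem~\ref{sumner} is $K_{2n}$ or $K_{n,n}$. This already yields item~2 and the uniqueness of the odd component in item~1.

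The step I expect to be the real obstacle is showing that the unique odd component $C$ is factor-critical, and not merely equimatchable of deficiency one, since a count of matching numbers does not by itself give factor-criticality: for instance $C_5$ is an EFC-graph, yet deleting the two endpoints of a single edge leaves $P_3$, which is equimatchable of deficiency one but is \emph{not} factor-critical (its central vertex cannot be exposed). So any proof of this part must exploit more than the deficiency count, presumably by analysing the Gallai--Edmonds decomposition of $C$ --- arguing that its set of inessential vertices is all of $V(C)$ and induces a connected subgraph --- together with Lemma~\ref{independent-set} and the connectivity of the original EFC-graph $G$. This is where I would concentrate the work, and where I would check most carefully whether the stated hypotheses are genuinely sufficient to force factor-criticality of the odd component.
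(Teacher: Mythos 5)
Your first two paragraphs are correct, and in substance they follow the same route as the paper: the paper likewise observes that the maximal matchings of $G\backslash V(M)$ are exactly the restrictions of maximal matchings of $G$ extending $M$, counts exposed vertices to get uniqueness of the odd component, and concludes that the even components admit only perfect maximal matchings, hence are randomly matchable. Your bookkeeping with $\nu$ and the per-component deficiencies $\epsilon_i$ is just a cleaner packaging of that counting argument.

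The step you single out as the real obstacle, however, is not a gap you failed to close; it is a refutation you have already written down. With $G=C_5$ (a connected EFC-graph) and $M$ a single edge, $G\backslash V(M)=P_3$ is equimatchable of deficiency one but not factor-critical, so the lemma as stated is \emph{false}; no appeal to Gallai--Edmonds decompositions, Lemma \ref{independent-set}, or connectivity of $G$ can rescue it, and the further work you propose would be futile. (Note also that $G\backslash V(M)$ may be disconnected, in which case it cannot be factor-critical regardless of its components.) The paper's own proof commits precisely the inference you distrusted: having shown that every maximal matching of the odd component $G^*$ leaves exactly one vertex exposed, it asserts ``that is, $G^*$ is equimatchable and factor-critical,'' a non sequitur for which $P_3$ is the standard counterexample. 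What the hypotheses genuinely yield is exactly what you proved: $G\backslash V(M)$ is equimatchable, has a unique odd component, that component is equimatchable of deficiency one, and the even components are randomly matchable. The factor-criticality invoked in the paper's later application (Lemma \ref{InducedEFCSubgraph}) is in fact true there, but it must be derived from the extra structure available at that point --- by Lemma \ref{EdgeEdge} every vertex of the randomly matchable part of $G'$ has a neighbor on the odd cycle $C$, so a perfect matching of $G'-v$ can be exhibited directly for each vertex $v$ --- and not from this lemma.
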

\begin{proof}
Let $G$ be a connected EFC-graph and $M$ be a matching of $G$. That is, every maximal matching of $G$ leaves exactly one vertex exposed.
  \begin{enumerate}
    \item \label{item1} Notice that $G$ is odd by factor-criticality and $V(M)$ has an even number of vertices. It follows that $G \backslash V(M)$ has an odd number of vertices. Hence, it is easy to see that $G \backslash V(M)$ contains at least one odd component. Indeed, $G \backslash V(M)$ contains exactly one odd component. Otherwise, if $G \backslash V(M)$ has $k$ odd components where $k \geq 2$, then every maximal matching extending $M$ leaves at least $k$ exposed vertices, contradicting with the fact that $G$ is an EFC-graph. Let $G^*$ be the unique odd component of $G \backslash V(M)$. Assume to the contrary that there exists a maximal matching $M^*$ of $G^*$ leaving at least three exposed vertices. It follows that every maximal matching of $G$ extending $M \cup M^*$ leaves at least three exposed vertices, contradicting with the fact that $G$ is an EFC-graph. Hence, every maximal matching of $G^*$ leaves exactly one vertex exposed; that is, $G^*$ is equimatchable and factor-critical.
    \item \label{item2} For an even component $H$ of $G \backslash V(M)$, assume to the contrary that there exists a maximal matching $M_H$ of $H$ leaving at least two exposed vertices. It follows that any maximal matching of $G$ extending $M \cup M_H$ leaves at least two exposed vertices, contradicting with  the fact that $G$ is an EFC-graph. Hence, every maximal matching of $H$ leaves no vertex exposed, i.e., every maximal matching of $H$ is a perfect matching. Hence, $H$ is randomly matchable.
  \end{enumerate}
  By combining \ref{item1} and \ref{item2}, one can easily observe that $G \backslash V(M)$ is equimatchable and factor-critical. \hfill\(\qed\)
\end{proof}

\begin{lemma}\label{ThirdVertex}
Let $G$ be a triangle-free graph and $C$ be an induced odd cycle  in $G$. For any two vertices $x$ and $y$ in $G\backslash C$, there exists a vertex $v \in C$ which is adjacent to neither $x$ nor $y$.
\end{lemma}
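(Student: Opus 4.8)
The plan is a simple counting argument: I would bound, separately for $x$ and for $y$, how many vertices of $C$ each can be adjacent to, and then show that the two neighbourhoods together still cannot cover all of $V(C)$. Write $|V(C)| = 2k+1$ for some integer $k \geq 2$ (since $G$ is triangle-free, $C$ cannot be a triangle, so its length is at least $5$).

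The first step turns triangle-freeness into a statement about independent sets. Suppose a vertex $x \in V(G) \backslash V(C)$ were adjacent to two consecutive vertices $a$ and $b$ of $C$, so that $ab \in E(C) \subseteq E(G)$. Then $\{x,a,b\}$ would induce a triangle, contradicting that $G$ is triangle-free. Hence $N(x) \cap V(C)$ contains no two consecutive vertices of $C$, i.e., it is an independent set of $G[V(C)]$. Because $C$ is induced, $G[V(C)]$ is exactly the cycle $C_{2k+1}$, whose independence number is $k$. Therefore $|N(x) \cap V(C)| \leq k$, and by the same reasoning $|N(y) \cap V(C)| \leq k$.

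The second step is the counting, which now finishes the proof:
\[
|(N(x) \cup N(y)) \cap V(C)| \;\leq\; |N(x) \cap V(C)| + |N(y) \cap V(C)| \;\leq\; 2k \;<\; 2k+1 \;=\; |V(C)|.
\]
Consequently there exists a vertex $v \in V(C)$ lying outside $N(x) \cup N(y)$, that is, adjacent to neither $x$ nor $y$, which is exactly the desired conclusion. The argument is unaffected if $x = y$, since then the single bound $k < 2k+1$ already suffices.

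I do not expect a genuine obstacle here; the argument is short. The only point requiring care is the passage from ``triangle-free'' to ``independent neighbourhood on $C$,'' which uses both the triangle-freeness of $G$ and the fact that $C$ is induced so that its edges are precisely the cyclic adjacencies. It is worth noting that the slack $2k < 2k+1$ is exactly one vertex, so the conclusion genuinely relies on the \emph{odd} length of $C$: for an even cycle $C_{2k}$ two outside vertices could have neighbourhoods $\{1,3,\dots,2k-1\}$ and $\{2,4,\dots,2k\}$ covering every vertex, and the statement would fail.
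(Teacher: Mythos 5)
Your proof is correct and takes essentially the same approach as the paper: triangle-freeness forces $N(x)\cap V(C)$ and $N(y)\cap V(C)$ to each have size at most $k$ on the $(2k+1)$-cycle, so their union cannot cover all of $V(C)$. You simply make explicit the step the paper leaves implicit (that the neighbourhood on $C$ is an independent set of the cycle, hence of size at most $k$), which is a welcome clarification rather than a different argument.
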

\begin{proof}
Let $G$ be a triangle-free graph with an induced cycle $C$ of length $2n+1$. Let $x$ and $y$ be any two vertices in $G\backslash C$. By triangle-freeness of $G$, the vertex $x$ (and equivalently $y$) can be adjacent to at most $n$ vertices in $C$. Hence, the size of the neighborhood of $x$ and $y$ in $C$ can be at most $2n$, implying that there exists a vertex $v \in C$ which is adjacent to neither $x$ nor $y$.
\hfill\(\qed\)
\end{proof}

\section{Structure of Triangle-Free Equimatchable Graphs}
This section is devoted to describe the structure of triangle-free equimatchable graphs, i.e., equimatchable graphs with girth at least $4$. Since a graph is equimatchable if and only if each of its components is equimatchable, it suffices to characterize connected triangle-free equimatchable graphs. To characterize triangle-free equimatchable graphs, we only need to investigate equimatchable graphs with girth exactly $4$ since the following structural characterization of the equimatchable graphs with girth at least five is provided in \cite{Frendrup}:

\begin{theorem}\label{frendrup}\textbf{(\cite{Frendrup})}
Let $G$ be a connected equimatchable graph with girth at least $5$. Then $G \in \cal{F} \cup \{C_{5}, C_{7}\}$, where $\cal{F}$ is the family of graphs containing $K_2$ and all connected bipartite graphs with bipartite sets $V_1$ and $V_2$ such that all vertices in $V_1$ are stems and no vertex from $V_2$ is a stem.
\end{theorem}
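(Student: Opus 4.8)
The plan is to prove the statement by splitting on the global structure of a connected equimatchable graph $G$ of girth at least $5$. First I would dispose of the case that $G$ has a perfect matching: by Theorem~\ref{sumner} such a $G$ is $K_{2n}$ or $K_{n,n}$, and girth at least $5$ (no triangles, no $4$-cycles) forces $n=1$, so $G=K_2\in\mathcal{F}$. For the remaining graphs I would invoke the structural theory of equimatchable graphs (Lesk et al.), in particular the consequence that every $2$-connected equimatchable graph is bipartite, factor-critical, or $K_{2n}$; combined with triangle-freeness (which kills $K_{2n}$ for $n\ge 2$) and Theorem~\ref{sumner}, this reduces the problem, for triangle-free $G$, to two cases: either $G$ is bipartite, or $G$ is factor-critical. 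The target is then to show that the bipartite graphs are exactly the members of $\mathcal{F}$ and the factor-critical ones are exactly $C_5$ and $C_7$.

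For the factor-critical case I expect a clean argument. First, $G$ must be $2$-connected: if $G$ had a cut-vertex $v$, Theorem~\ref{fa-thm1} would force every component of $G-v$ to be randomly matchable with $v$ adjacent to two adjacent vertices of it, but under girth at least $5$ each such component collapses to $K_2$, and joining $v$ to both endpoints of a $K_2$ creates a triangle, a contradiction. Hence $G$ is $2$-connected and I may apply Theorem~\ref{eiben-thm2}. The key observation is that, by triangle-freeness $N(v)$ is independent, so any matching isolating $v$ must match each neighbor of $v$ to a vertex outside $N(v)$, and by the absence of $4$-cycles these partners are pairwise distinct; hence any \emph{minimal} matching $M$ isolating $v$ has exactly $\deg(v)$ edges and $|V(M)|=2\deg(v)$. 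Theorem~\ref{eiben-thm2} says $G\setminus(V(M)\cup\{v\})$ is $K_{2n}$ or $K_{n,n}$, which girth at least $5$ again collapses to $\emptyset$ or $K_2$. Writing $n=|V(G)|$, this yields $\deg(v)\in\{(n-1)/2,(n-3)/2\}$ for every vertex $v$, so $\delta(G)\ge(n-3)/2$. Now girth at least $5$ gives the Moore-type bound $n\ge 1+\delta(G)^2$, and substituting $\delta(G)\ge(n-3)/2$ forces $n\le 8$; since a factor-critical graph is odd and has minimum degree at least $2$, we get $n\in\{5,7\}$. A short finite check, using Lemma~\ref{independent-set} to exhibit an independent $3$-set whose removal leaves a perfect matching whenever some vertex has degree $3$, then shows the graph must be $2$-regular, i.e. $C_5$ or $C_7$.

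For the bipartite case the goal is to identify the stem structure defining $\mathcal{F}$. Assume $G\ne K_2$ is connected bipartite equimatchable with parts $A,B$; since $G$ has no perfect matching (already handled) the Gallai--Edmonds decomposition is nontrivial, and for bipartite graphs its exposed part is an independent set with each component a single vertex. The plan is to show that the set of stems of $G$ is exactly one side of the bipartition: a matching-exchange argument should show that equimatchability prevents stems from occurring in both parts simultaneously, and forces every vertex of the stem side to carry a pendant leaf while no vertex of the opposite side is adjacent to a leaf. This is precisely the defining property of $\mathcal{F}$, and it coincides with the bipartite case of Lesk et al.'s characterization specialized to girth at least $5$.

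The main obstacle I anticipate is the bipartite case: the factor-critical case is tightly controlled by the degree identity and the Moore inequality, whereas proving that equimatchability of a bipartite graph is \emph{equivalent} to the stem condition requires a careful exchange argument establishing both that all stems lie in one part and that every vertex of that part is a stem. By comparison, the finite check ruling out a degree-$3$ vertex on $7$ vertices is routine, since Lemma~\ref{independent-set} together with the forbidden $4$-cycle pins down the only possible configuration and exhibits an explicit bad independent triple.
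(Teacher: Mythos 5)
First, a point of comparison that matters: the paper does not prove this statement at all --- it is quoted from Frendrup et al.\ \cite{Frendrup} and used as a black box --- so your attempt is a reconstruction from the paper's other quoted results rather than a rival to an internal proof. Judged on its own, your factor-critical half is essentially correct and quite clean: ruling out cut vertices via Theorem~\ref{fa-thm1}, noting that in a triangle-free graph every minimal matching isolating $v$ has exactly $\deg(v)$ edges (each edge covers exactly one vertex of the independent set $N(v)$), applying Theorem~\ref{eiben-thm2} so that girth at least $5$ collapses the leftover to $\emptyset$ or $K_2$, and combining $\delta(G)\ge (|V(G)|-3)/2$ with the Moore-type bound $|V(G)|\ge 1+\delta(G)^2$ to force $|V(G)|\in\{5,7\}$ is a complete line of argument; the finite check on $7$ vertices also works, since a degree-$3$ vertex forces a unique girth-$5$ configuration which contains an independent triple whose deletion leaves a perfect matching, contradicting Lemma~\ref{independent-set}.

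There are, however, two genuine gaps. (1) Your reduction to ``bipartite or factor-critical'' rests on the Lesk et al.\ consequence for \emph{2-connected} equimatchable graphs, yet you only establish 2-connectedness \emph{after} assuming factor-criticality; the case ``connected but not 2-connected, no perfect matching, neither bipartite nor factor-critical'' is never ruled out. The clean fix is Theorem~\ref{odd-hole-free}, already quoted in the paper: a non-bipartite graph of girth at least $5$ has a shortest odd cycle, which is induced and of length at least $5$, so equimatchability forces factor-criticality with no connectivity hypothesis. (2) The bipartite case, which is the real content of the family $\mathcal{F}$, is not proved: ``a matching-exchange argument should show'' is a statement of intent, and you flag it yourself as the main obstacle. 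A concrete route uses Theorem~\ref{lesk-thm-bipartiteEqm}: for $u\in U$ take nonempty $X\subseteq N(u)$ with $|N(X)|\le |X|$; since girth at least $5$ forbids $C_4$, the sets $N(x)\setminus\{u\}$ for $x\in X$ are pairwise disjoint, so $|N(X)|\ge 1+\sum_{x\in X}(\deg(x)-1)$, and the inequality $|N(X)|\le |X|$ then forces some $x\in X$ to be a leaf; hence every vertex of $U$ is a stem, and no vertex of $V$ can be a stem unless $G=K_2$. Until (1) and (2) are filled in, your proposal establishes only the factor-critical half of the theorem.
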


The class of triangle-free equimatchable graphs can be separated into two complementary subclasses, namely non-factor-critical and factor-critical graphs. First, we consider non-factor-critical triangle-free equimatchable graphs. A structural result for equimatchable graphs, not necessarily triangle-free, was given as follows:

\begin{theorem}\label{odd-hole-free}\text{(Lemma 8 in \cite{Dibek})}
  If $G$ is an equimatchable graph with an induced subgraph $C$ isomorphic to a cycle $C_{2k+1}$ for some $k \geq 2$, then $G$ is factor-critical.
\end{theorem}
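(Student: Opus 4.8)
The plan is to argue by contradiction, using the Gallai--Edmonds structure theorem as the organizing framework and a local ``even-path gadget'' coming from the cycle to produce the contradiction. As a first reduction I would assume $G$ connected (the component carrying $C$; a factor-critical graph is connected and equimatchability passes to components) and show that $G$ has no perfect matching. If it did, $G$ would be randomly matchable, so by Theorem~\ref{sumner} the component containing $C$ would be $K_{2n}$ or $K_{n,n}$. But every induced subgraph of $K_{2n}$ is complete, so its only induced cycles are triangles, while $K_{n,n}$ is bipartite and has no odd cycle; neither contains an induced $C_{2k+1}$ with $k\ge 2$. Hence every maximal (equivalently maximum) matching of $G$ exposes at least one vertex.

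Next I would isolate the engine of the argument. Writing $C=z_0z_1\cdots z_{2k}$, deleting one vertex $z_i$ leaves the induced even path on $2k\ge 4$ vertices whose endpoints are the two neighbors $z_{i-1},z_{i+1}$ of $z_i$, and these are non-adjacent because $C$ is induced. Matching the interior of this path perfectly strands both endpoints, so $C-z_i$ has a maximal matching exposing two non-adjacent vertices; this is the source of ``extra'' exposed vertices once one cycle vertex has been diverted into an edge leaving $C$. For connected $G$ without a perfect matching, failing to be factor-critical is equivalent, through the Gallai--Edmonds decomposition $(D,A,C)$, to $A(G)\neq\emptyset$: if $A(G)=\emptyset$, then since there are no $D$--$C$ edges and $G$ is connected, either $C(G)=V(G)$ (a perfect matching, already excluded) or $C(G)=\emptyset$, whence $D(G)=V(G)$ and $G$ is a single factor-critical component. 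I would also observe that the cycle cannot lie inside $C(G)$, since that part is randomly matchable (it is perfectly matched by every maximum matching), and the reasoning of the first paragraph then forbids an induced $C_{2k+1}$ there.

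It remains to turn $A(G)\neq\emptyset$ into a contradiction. Fix $a\in A(G)$ and a maximum matching $M$ exposing exactly $\mathrm{def}(G)=c(G[D])-|A|\ge 1$ vertices. Using an $M$-alternating path from $a$ to the cycle, I would reroute $M$ so as to divert a cycle vertex $z_i$ onto an edge leaving $C$, apply the gadget to $C-z_i$, and extend the result to a maximal matching of $G$; provided the two stranded endpoints remain exposed, this is a maximal matching exposing strictly more than $\mathrm{def}(G)$ vertices, contradicting equimatchability. The induced length $2k+1\ge 5$ is exactly what supplies these two non-adjacent stranded endpoints, which a triangle ($k=1$) cannot; this matches the fact that a triangle with a pendant path is equimatchable yet not factor-critical.

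The main obstacle is precisely this routing-and-stranding step. I must guarantee both that the flexibility at $a$ can be transmitted along an alternating path all the way to the induced cycle, and, more delicately, that the two endpoints stranded by the gadget are not rescued by neighbors outside $C$ in the maximal extension -- which forces a careful choice of the diverted vertex $z_i$ and of the alternating path so that those endpoints' outside options are already blocked. The hardest case is when $C$ lies deep inside a single factor-critical component of $G[D]$, so that the local picture already looks factor-critical; there one likely needs a localized or inductive argument (perhaps through Favaron's cut-vertex description in Theorem~\ref{fa-thm1} together with Theorem~\ref{eiben-thm2}) to push the gadget's surplus exposure out to a genuine extra exposed vertex. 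Controlling this interplay between the global Gallai--Edmonds structure and the local odd-cycle gadget is where the real work lies.
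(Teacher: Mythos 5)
This statement is not proved in the paper at all: it is imported verbatim as Lemma 8 of \cite{Dibek}, so there is no in-paper argument to compare against, and your attempt must stand on its own. It does not. The preliminary reductions are correct and sensible: assuming $G$ connected, excluding a perfect matching via Theorem \ref{sumner} (neither $K_{2n}$ nor $K_{n,n}$ contains an induced $C_{2k+1}$ with $k \geq 2$), observing that non-factor-criticality is then equivalent to $A(G) \neq \emptyset$ in the Gallai--Edmonds decomposition, and isolating the even-path gadget in $C - z_i$ that genuinely requires $k \geq 2$. But the entire content of the lemma is the step you defer: converting $A(G) \neq \emptyset$ into a maximal matching of $G$ that exposes more than $\mathrm{def}(G)$ vertices. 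You state this step conditionally (``provided the two stranded endpoints remain exposed''), give no construction of the alternating path, no rule for choosing $z_i$, and no mechanism for blocking the stranded endpoints' neighbors outside $C$, and you then concede that this is ``where the real work lies.'' A plan together with an acknowledgment that its hardest case is unresolved is a gap, not a proof.

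Moreover, the gap is not cosmetic: in the case you yourself call hardest, the proposed local mechanism provably cannot work as described. If the cycle lies inside a single component $H$ of $G[D]$, then $H$ is itself equimatchable and factor-critical: any maximal matching $M_H$ of $H$ extends greedily to a maximal, hence maximum, matching $M'$ of $G$; by maximality of $M_H$ one has $M' \cap E(H) = M_H$, and the Gallai--Edmonds structure theorem forces this restriction to be near-perfect in $H$. Consequently \emph{no} maximal matching of $H$ exposes two vertices, so your gadget's two stranded endpoints can never both survive a completion of the matching inside $H$; any contradiction must be extracted from the edges between $H$ and $A(G)$, which is exactly the interaction you leave unspecified. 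The notion of ``flexibility at $a \in A(G)$'' is also murky, since vertices of $A(G)$ are covered by every maximum matching, so there is no augmenting freedom there to transmit. Finally, a smaller inaccuracy: your parenthetical justification that $G[C(G)]$ is randomly matchable (``perfectly matched by every maximum matching'') is not by itself valid, since randomly matchable is a statement about maximal matchings; it can be repaired by the same extension argument as above, or bypassed entirely by noting that for a connected equimatchable graph without a perfect matching one can show $A(G)$ is independent, there are no $A$--$C$ or $D$--$C$ edges, and hence $C(G) = \emptyset$.
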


\noindent
In other words, if an equimatchable graph is non-factor-critical, then it cannot have an induced cycle $C_{2k+1}$ for $k \geq 2$. Hence, it is easy to observe that a non-factor-critical equimatchable graph with no triangle is bipartite. In \cite{Lesk}, a characterization for equimatchable bipartite graphs was given as in the following way:

\begin{theorem}\label{lesk-thm-bipartiteEqm}\text{(\cite{Lesk})}
A connected bipartite graph $G = (U \cup V, E)$ with $|U| \leq |V|$ is equimatchable if and only if for all $u \in U$, there exists a non-empty $X \subseteq N(u)$ such that $|N(X)| \leq |X|$.
\end{theorem}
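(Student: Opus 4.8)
The plan is to recast the stated Hall-type condition into a statement about which vertices of $U$ can be left exposed, and then read off equimatchability from that. First I would invoke the defect form of Hall's theorem. Fix $u \in U$ and note $N(u) \subseteq V$, so a matching of $N(u)$ into $U \setminus \{u\}$ saturating $N(u)$ exists exactly when, for every non-empty $X \subseteq N(u)$, one has $|N(X) \setminus \{u\}| \ge |X|$; since $u \in N(X)$ whenever $\emptyset \ne X \subseteq N(u)$, this reads $|N(X)| - 1 \ge |X|$, i.e.\ $|N(X)| > |X|$. Consequently the displayed condition ``there is a non-empty $X \subseteq N(u)$ with $|N(X)| \le |X|$'' holds if and only if $N(u)$ \emph{cannot} be matched into $U \setminus \{u\}$. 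The final link is elementary: a maximal matching $M$ exposes $u$ if and only if $N(u)$ can be matched into $U \setminus \{u\}$ (if $M$ exposes $u$ then, by maximality, every neighbour of $u$ is matched, necessarily to a vertex of $U \setminus \{u\}$; conversely any matching of $N(u)$ into $U \setminus \{u\}$ extends to a maximal matching in which $u$ stays exposed). Thus the condition for a fixed $u$ is equivalent to ``no maximal matching of $G$ exposes $u$.''

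With this reformulation the converse direction is immediate. Suppose the condition holds for every $u \in U$; then no maximal matching exposes any vertex of $U$, so every maximal matching saturates all of $U$ and therefore has size exactly $|U|$. Hence all maximal matchings have the same cardinality and $G$ is equimatchable.

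For the forward direction I would argue by contraposition: assume the condition fails for some $u^* \in U$, so that, by the reformulation, some maximal matching $M_1$ exposes $u^*$; I must then exhibit two maximal matchings of different sizes. Write $\nu = \nu(G)$ for the maximum matching size and split into two cases. If $\nu = |U|$, then any maximal matching exposing $u^*$ has size at most $|U| - 1 < \nu$, so $M_1$ together with any maximum matching gives two maximal matchings of different sizes and $G$ is not equimatchable. The remaining case $\nu < |U|$ means Hall's condition fails on the $U$-side; here I would choose an inclusion-minimal deficient set $S \subseteq U$ (so $|N(S)| = |S| - 1$ and every proper subset of $S$ satisfies Hall, whence for each $s \in S$ the set $S \setminus \{s\}$ matches perfectly onto $N(S)$), which forces the bipartite graph on $S \cup N(S)$ to behave factor-critically, and then combine this with the hypothesis $|U| \le |V|$ to produce a ``wasteful'' maximal matching that is strictly smaller than a maximum one.

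The main obstacle I anticipate is precisely this last construction: turning a deficient set into two maximal matchings of genuinely different sizes. The delicate point is maximality --- when one reroutes edges inside $S \cup N(S)$ to expose extra vertices of $U$, the freed vertices of $V$ may acquire exposed neighbours outside $S$, and being forced to rematch them can silently restore the lost edges. The role of $|U| \le |V|$ is to guarantee that the deficient structure cannot coincide with the whole graph (which would make $U$ the larger side and contradict the size hypothesis), so that escape edges from $N(S)$ to $U \setminus S$ exist and can be exploited; controlling these interactions so that the net effect is a strict decrease in matching size is where the real work lies.
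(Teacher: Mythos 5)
Your reduction of the Hall-type condition to the statement ``no maximal matching of $G$ exposes a vertex of $U$'' is correct: the defect-Hall computation (using $u\in N(X)$ for every non-empty $X\subseteq N(u)$) and the extension argument are both sound, and with this reformulation your backward implication and the case $\nu(G)=|U|$ of the forward implication are complete. However, the remaining case $\nu(G)<|U|$ is a genuine gap, not a technicality: you only describe a plan (take an inclusion-minimal deficient set $S\subseteq U$, note that $G[S\cup N(S)]$ behaves factor-critically, then build a ``wasteful'' maximal matching), and you yourself concede that you cannot control how rerouting edges inside $S\cup N(S)$ interacts with $U\setminus S$. That case is exactly where connectivity and $|U|\le|V|$ must do their work; it is the substantive content of the theorem, since without it you have only shown that an equimatchable bipartite graph whose maximum matchings saturate $U$ satisfies the condition. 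Note also that the paper does not prove this statement but imports it from \cite{Lesk}, so nothing proved elsewhere in the paper can be invoked to discharge this case.

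The missing case can be closed elementarily via K\"{o}nig's theorem instead of deficient sets. Suppose $\nu:=\nu(G)<|U|\le|V|$. Take a minimum vertex cover $Q=Q_U\cup Q_V$ with $Q_U\subseteq U$, $Q_V\subseteq V$, so $|Q|=\nu$, and a maximum matching $M$; by counting, every edge of $M$ meets $Q$ in exactly one vertex and $M$ saturates $Q$, matching $Q_U$ into $V\setminus Q_V$ and $Q_V$ into $U\setminus Q_U$. If no edge joins $Q_U$ to $Q_V$, then (since $Q$ is a cover) there is no edge at all between $Q_U\cup(V\setminus Q_V)$ and $Q_V\cup(U\setminus Q_U)$; both sets are non-empty because $|Q_U|\le\nu<|U|$ and $|Q_V|\le\nu<|V|$, contradicting connectivity. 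Hence some edge $pq$ has $p\in Q_U$ and $q\in Q_V$, and $M':=\{pq\}\cup\{e\in M:\ e\ \text{saturates a vertex of}\ Q\setminus\{p,q\}\}$ is a matching of size $\nu-1$ that saturates all of $Q$; since $Q\subseteq V(M')$ is a vertex cover, $M'$ is maximal, so $G$ is not equimatchable. Inserting this argument (or an equivalent one via the Gallai--Edmonds/Dulmage--Mendelsohn decomposition) at the point where your sketch stops would make the proof complete.
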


Therefore, it suffices to concentrate only on triangle-free factor-critical equimatchable graphs, shortly triangle-free EFC-graphs, in the rest of this paper.  More precisely, we focus on triangle-free equimatchable graphs with girth exactly $4$ and containing an odd cycle. Note that a factor-critical graph can not be bipartite, since if we choose a vertex from the small partite set (or from any partite set if their cardinalities are equal) there can not be a perfect matching in the rest of the graph. In the next lemma, we show that a triangle-free EFC-graph cannot have a cut vertex. That is, all graphs we deal with in this paper are indeed $2$-connected.

\begin{lemma}\label{cut-vertex}
Let $G$ be a connected triangle-free EFC-graph. Then $G$ is $2$-connected, i.e. $G$ has no cut-vertex.
\end{lemma}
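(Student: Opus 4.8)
The plan is to argue by contradiction using Favaron's characterization of EFC-graphs of vertex-connectivity one (Theorem \ref{fa-thm1}). Suppose $G$ has a cut-vertex; since $G$ is connected, its vertex-connectivity is then exactly $1$, so Theorem \ref{fa-thm1} applies and supplies, in addition to the uniqueness of the cut-vertex $v$ and the randomly-matchable structure of the components of $G-v$, the adjacency condition stated in item 3.

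The key observation is that item 3 of Theorem \ref{fa-thm1} directly collides with triangle-freeness. Indeed, item 3 asserts that $v$ is adjacent to at least two \emph{adjacent} vertices of every component $H$ of $G-v$; choosing such a pair $a,b \in V(H)$ gives $ab \in E(G)$ together with $va, vb \in E(G)$, so that $\{v,a,b\}$ induces a triangle in $G$, contradicting the hypothesis that $G$ is triangle-free. Such a pair is genuinely available because, by item 2, each component of $G-v$ is isomorphic to $K_{2n}$ or $K_{n,n}$ with $n \geq 1$ and hence contains an edge, and item 3 then forces $v$ to be adjacent to both endpoints of some edge.

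Consequently, the assumption that $G$ has a cut-vertex is untenable, and $G$ must be $2$-connected. I do not expect any substantive obstacle: the whole argument is essentially a one-line consequence of Theorem \ref{fa-thm1} once one recognizes that its third condition manufactures a $K_3$. The only point deserving a moment's care is verifying that the two $v$-neighbors guaranteed by item 3 are adjacent \emph{to each other}—which is exactly what ``two adjacent vertices'' means—so that a triangle, rather than merely a path of length two, is produced.
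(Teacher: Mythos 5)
Your proof is correct and follows exactly the paper's own argument: assume a cut-vertex exists, invoke Theorem \ref{fa-thm1}, and observe that its third condition (the cut-vertex being adjacent to two adjacent vertices of a component) produces a triangle, contradicting triangle-freeness. Your version merely spells out the details (vertex-connectivity equals one, existence of an edge in each randomly matchable component) that the paper leaves implicit.
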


\begin{proof}
Let $G$ be a connected triangle-free EFC-graph. Assume that $G$ has a cut-vertex $v$. By Theorem \ref{fa-thm1}, $v$ must be adjacent to at least two adjacent vertices in each component of $G-v$, which contradicts with the assumption that $G$ is triangle-free. \hfill\(\qed\)
\end{proof}

By Theorem \ref{frendrup}, it is easy to see that connected triangle-free EFC-graphs correspond to disjoint union of $C_5$, $C_7$, and equimatchable graphs with girth exactly $4$ and containing an odd cycle. On the other hand, a family of forbidden  induced subgraphs of equimatchable graphs was provided in \cite{Dibek} as follows:

\begin{theorem}\label{C-odd-free}\text{(\cite{Dibek})}
  Equimatchable graphs are $C_{2k+1}$-free for any $k \geq 4$.
\end{theorem}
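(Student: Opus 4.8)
The plan is to argue by contradiction and reduce everything to the independent-set criterion of Lemma \ref{independent-set}. Suppose $G$ is equimatchable and contains an induced cycle $C = v_0 v_1 \cdots v_{2k} v_0$ of length $2k+1$ with $k \ge 4$. Since $C$ is an induced $C_{2k+1}$ with $k \ge 2$, Theorem \ref{odd-hole-free} guarantees that $G$ is factor-critical, so $|V(G)|$ is odd. By Lemma \ref{independent-set}, to contradict equimatchability it then suffices to exhibit a single independent set $I$ of size $3$ for which $G \setminus I$ admits a perfect matching.

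First I would fix the triple on the cycle itself. Set $I = \{v_0, v_3, v_6\}$. Because $C$ is induced and $2k+1 \ge 9$, no two of these vertices are consecutive on $C$ (in particular $v_0$ and $v_6$ are non-adjacent precisely because $6 \not\equiv \pm 1 \pmod{2k+1}$ once $k \ge 4$), so $I$ is independent in $G$. Deleting $I$ splits $C$ into the three arcs $\{v_1,v_2\}$, $\{v_4,v_5\}$ and $\{v_7,\dots,v_{2k}\}$, of sizes $2$, $2$ and $2k-6$; all three are even and nonempty exactly because $k \ge 4$, so consecutive cycle edges perfectly match $V(C) \setminus I$. This is precisely where the hypothesis $k \ge 4$ enters: for $C_7$ ($k=3$) the three arcs would have to be even, positive, and sum to $4$, which is impossible, consistent with $C_7$ being equimatchable.

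The remaining and genuinely delicate task is to match the vertices off the cycle while keeping $I$ exposed. Here I would exploit factor-criticality: choose a perfect matching $M$ of $G - v_0$ and delete from it the two edges meeting $v_3$ and $v_6$. Since $v_0$ is $M$-exposed and $v_3, v_6$ are non-adjacent, their $M$-partners $p_3$ and $p_6$ are two distinct vertices outside $I$, so the resulting matching $M'$ of $G \setminus I$ exposes exactly $p_3$ and $p_6$. Consequently $G \setminus I$ has a perfect matching if and only if $M'$ admits an augmenting path joining $p_3$ and $p_6$, and the construction of such a path is the heart of the argument.

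Establishing this augmenting path is the main obstacle, precisely because a cycle vertex may be matched by $M$ to a vertex of $W := V(G) \setminus V(C)$, so one cannot simply re-route inside $C$. My plan to overcome this is twofold. I would use the freedom in choosing the independent triple — a cycle of length at least $9$ admits many rotations $\{v_t, v_{t+3}, v_{t+6}\}$ of the same pattern — together with the long arc $v_7,\dots,v_{2k}$ as a reservoir of alternating moves, to steer an $M'$-alternating path from $p_3$ through the cycle to $p_6$. Should a direct construction resist, the fallback is a Tutte–Berge barrier argument: if \emph{no} admissible triple produced a perfect matching of $G \setminus I$, the associated Tutte obstruction, combined with the length of the induced cycle, would contradict the factor-criticality of $G$. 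In either case, producing the perfect matching of $G \setminus I$ is the crux, and the condition $k \ge 4$ is exactly what supplies the slack needed to carry it out.
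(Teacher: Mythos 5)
This paper does not prove the statement at all: Theorem \ref{C-odd-free} is imported verbatim from \cite{Dibek}, so the comparison must be with the proof given there. Your opening moves do coincide with that proof's frame: invoke Theorem \ref{odd-hole-free} to get factor-criticality, reduce to Lemma \ref{independent-set}, and pick the independent triple $I=\{v_0,v_3,v_6\}$ on the induced cycle, whose removal splits $C$ into three even arcs exactly when $2k+1\geq 9$. That part is correct, and you correctly locate where $k\geq 4$ enters (for $C_7$ no such triple exists). But this only matches the vertices of $C$ itself; it proves the theorem only in the special case $G=C_{2k+1}$.

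The genuine gap is the step you yourself flag as ``the heart of the argument'': producing a perfect matching of $G\setminus I$ that also covers $V(G)\setminus V(C)$. Neither of your two plans is an argument. The augmenting-path plan reduces the problem to finding an $M'$-augmenting path between the partners $p_3$ and $p_6$, which is a correct reformulation via Berge's theorem, but the existence of that path is precisely what must be proved, and rotating the triple around the cycle gives no leverage without information about how off-cycle vertices attach to $C$. The Tutte--Berge fallback is circular: for a factor-critical graph, the existence of \emph{some} independent triple whose removal leaves a perfectly matchable graph is, by Lemma \ref{independent-set}, exactly equivalent to the non-equimatchability you are trying to establish, so asserting that a barrier argument ``would contradict factor-criticality'' merely restates the goal (and factor-criticality alone cannot be the source of the contradiction, since the needed conclusion is equivalent to the full theorem). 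The proof in \cite{Dibek} closes this gap with structural machinery: minimal matchings isolating a vertex together with the results of Favaron and Eiben--Kotrb\v{c}\'{\i}k (Theorems \ref{fa-thm1} and \ref{eiben-thm2} here) force the graph left after removing an isolating matching to be a randomly matchable $K_{2n}$ or $K_{n,n}$, and it is against these canonical pieces that the exposed vertices are matched. The present paper's own proofs of Theorem \ref{KnnProof} and Lemma \ref{EdgeEdge} illustrate the same technique; without some such control of $G\setminus C$, your perfect matching of $G\setminus I$ remains unsupported.
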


By the following result, it only remains to analyze the structure of connected triangle-free EFC-graphs with an induced odd cycle of length five or seven:

\begin{corollary}\label{c5c7}
Let $G$ be a connected triangle-free EFC-graph. Then $G$ contains an induced odd cycle $C$ of length five or seven. Particularly, $G$ does not contain an induced odd cycle of any other length.
\end{corollary}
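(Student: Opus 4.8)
The plan is to read off both assertions from the two length restrictions already in hand — triangle-freeness at the low end and Theorem~\ref{C-odd-free} at the high end — once the existence of \emph{some} induced odd cycle is secured. I would prove the length restriction for \emph{every} induced odd cycle of $G$, so that the ``particularly'' clause comes essentially for free.

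First I would establish that $G$ has an odd cycle at all. As already noted in the text, a factor-critical graph cannot be bipartite (choosing a vertex from a smallest part destroys any perfect matching in what remains), so $G$ is non-bipartite and hence contains an odd cycle. To upgrade this to an \emph{induced} odd cycle, I would take a shortest odd cycle $C$ of $G$ and check that it is chordless: a chord would split $C$ into two cycles whose lengths sum to $|C|+2$, and since $|C|$ is odd, exactly one of these two cycles is odd and strictly shorter than $C$, contradicting the minimality of $C$. Thus a shortest odd cycle of $G$ is induced.

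Finally I would bound its length from both sides. Triangle-freeness forbids length $3$, so $|C|\geq 5$; Theorem~\ref{C-odd-free} forbids any induced $C_{2k+1}$ with $k\geq 4$, so $|C|\leq 7$. The only odd values remaining force $|C|\in\{5,7\}$, which proves existence. Since these same two bounds apply verbatim to an arbitrary induced odd cycle of $G$, no induced odd cycle can have length other than $5$ or $7$, giving the final assertion. I do not anticipate a genuine obstacle; the single point that warrants care is the parity-split argument showing that a shortest odd cycle is induced, and that is routine.
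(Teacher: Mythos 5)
Your proof is correct and takes essentially the same route the paper intends: the corollary is stated there without an explicit proof, resting implicitly on factor-criticality (hence non-bipartiteness, hence an odd cycle exists), triangle-freeness for the lower bound, and Theorem~\ref{C-odd-free} for the upper bound on induced odd cycle lengths. Your only addition is spelling out the standard parity argument that a shortest odd cycle is chordless, a detail the paper glosses over but which is needed to turn ``has an odd cycle'' into ``has an \emph{induced} odd cycle.''
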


Throughout the rest of the paper, $G$ denotes a connected triangle-free EFC-graph and $C$ denotes an induced odd cycle in $G$. By Corollary \ref{c5c7}, $C$ is either $C_5$ or $C_7$. Since both $G$ and $C$  are  odd, $G \backslash C$ has an even number of vertices. Note here that $G\backslash C$ is allowed to be empty, i.e., $G$ can be $C_5$ or $C_7$ itself. In the next theorem, we prove that the removal of an odd cycle from $G$ results in a randomly matchable triangle-free graph, i.e., disjoint union of $K_{n,n}$ for some nonnegative integers $n$.

\begin{theorem}\label{KnnProof}
Let $G$ be a connected triangle-free EFC-graph. For an induced odd cycle $C$ in $G$, every component of $G\backslash C$ is isomorphic to $K_{n,n}$ for some nonnegative integer $n$.
\end{theorem}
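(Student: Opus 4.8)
The plan is to exploit Lemma~\ref{removing-matching} by deleting a matching that covers all of $C$ except a single vertex. By Corollary~\ref{c5c7}, $C$ is $C_5$ or $C_7$, so it is odd and $G\backslash C$ has an even number of vertices. For a vertex $v\in C$, the graph $C-v$ is a path of even order and hence has a perfect matching $M_v$; thus $M_v$ is a matching of $G$ with $G\backslash V(M_v)=G[(V(G)\setminus V(C))\cup\{v\}]$. Applying Lemma~\ref{removing-matching} to $M_v$, this graph is equimatchable and factor-critical, it has a unique odd component, and each of its even components is randomly matchable. By Theorem~\ref{sumner} every such even component is $K_{2m}$ or $K_{m,m}$; since $G$ is triangle-free and $K_{2m}$ contains a triangle for $m\ge 2$, each even component must be isomorphic to some $K_{m,m}$. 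This is the source of all the $K_{n,n}$ factors.

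Next I would describe the components of $G\backslash V(M_v)$ in terms of those of $G\backslash C$. Since $G\backslash V(M_v)$ is obtained from $G\backslash C$ by adding back $v$ together with its incident edges, its components are precisely: (i) every component $H$ of $G\backslash C$ with $v\notin N(H)$, which survives unchanged; and (ii) one \emph{central} component consisting of $v$ together with all components of $G\backslash C$ meeting $N(v)$. Consequently, for any component $H$ of $G\backslash C$ admitting a vertex $v\in V(C)$ with no neighbour in $H$, the component $H$ appears unchanged in $G\backslash V(M_v)$; if in addition $H$ is even, the previous paragraph forces $H\cong K_{m,m}$. The goal therefore reduces to two claims: every component of $G\backslash C$ is even, and every component of $G\backslash C$ has a non-neighbour in $C$.

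For evenness I would argue by contradiction: if some component of $G\backslash C$ were odd then, because $|V(G\backslash C)|$ is even, there would be two odd components $H_1,H_2$. Choosing a vertex $v\in V(C)$ adjacent to neither $H_1$ nor $H_2$ would leave both of them as components of $G\backslash V(M_v)$, producing two odd components and contradicting the uniqueness asserted in part~\ref{item1} of Lemma~\ref{removing-matching}. The existence of the required non-neighbour $v$ is exactly the second claim, applied here to $H_1\cup H_2$. For an even component $H$, the same non-neighbour then certifies $H\cong K_{m,m}$ as above, completing the proof.

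The main obstacle is establishing the non-neighbour claim: passing from Lemma~\ref{ThirdVertex}, which only guarantees a common non-neighbour in $C$ for two \emph{individual} vertices of $G\backslash C$, to the statement that an entire component (or a pair of components) has a common non-neighbour in $C$. A priori the vertices of a large component could have neighbourhoods whose union covers all of $V(C)$, and ruling this out is the technical heart of the argument. I expect to handle it by combining triangle-freeness (each vertex of $G\backslash C$ has at most $\lfloor |V(C)|/2\rfloor$ neighbours in $C$, and these form an independent set there) with the no-independent-triple criterion of Lemma~\ref{independent-set}: a configuration whose neighbourhoods cover all of $C$ would let me assemble an independent set of size three whose deletion leaves a perfectly matchable graph, contradicting equimatchability. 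I anticipate carrying this out separately for $C_5$ and $C_7$, where the limited length of $C$ keeps the case analysis finite.
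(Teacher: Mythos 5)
Your reduction is sound only for components of $G\backslash C$ that miss some vertex of $C$, and the ``non-neighbour claim'' you defer to the end is not merely the technical heart of the argument---it is false. The paper's family $\mathcal{F}_4$ (Definition \ref{Fgraphs}; see Proposition \ref{zeroVertex}) consists of connected triangle-free EFC-graphs in which $G\backslash C$ is a \emph{single} component isomorphic to $K_{n,n}$ whose neighbourhood covers all of $C$. Concretely, take $C=v_1v_2v_3v_4v_5$ and a $K_{4,4}$ with parts $A=\{a_1,\dots,a_4\}$ and $B=\{b_1,\dots,b_4\}$; join every $a_i$ to $v_3$ and $v_5$, join $b_1$ to $v_2$ and $v_4$, join $b_2$ to $v_1$ and $v_4$, and join $b_3,b_4$ to $v_4$. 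This graph is $G_2(0,0,0,1,1,2,5,2,0,0,2)\in\mathcal{F}_4$, a $13$-vertex connected triangle-free EFC-graph in which every vertex of $C$ has a neighbour in the unique component of $G\backslash C$. For such a graph no choice of $v\in C$ makes that component survive in $G\backslash V(M_v)$: since $M_v$ uses only edges of $C$, deleting $V(M_v)$ merges $v$ with all of $G\backslash C$ into one odd ``central'' component, about which Lemma \ref{removing-matching} says nothing beyond its being an EFC-graph again. So your method cannot reach the $K_{n,n}$ conclusion precisely in the dominating case, and no case analysis over $C_5$ and $C_7$ via Lemma \ref{independent-set} can rescue the claim, because the graphs violating it genuinely exist. (The pair version needed for your evenness step is equally unsupported; note that in every triangle-free EFC-graph where $G\backslash C$ has two components, even two single edges from the two components already dominate $C$, cf.\ Corollary \ref{covering-C}.)

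The paper circumvents exactly this obstacle by letting the deleted matching leave the cycle. It takes a \emph{minimal matching $M$ isolating a vertex $v\in C$} whose two $C$-edges cover the $C$-neighbours of $v$; crucially, such a matching must also saturate the neighbours of $v$ inside $G\backslash C$, using a submatching $M'=M\backslash E(C)$ of edges not in $C$. Since $G$ is $2$-connected (Lemma \ref{cut-vertex}), Theorem \ref{eiben-thm2} together with triangle-freeness forces $G\backslash (V(M)\cup\{v\})$ to be a single $K_{n,n}$, and the theorem then follows from three claims (each proved via Lemma \ref{independent-set} and Lemma \ref{ThirdVertex}) showing that every edge of $M'$ attaches to this $K_{n,n}$ and to each other edge of $M'$ either as a $2K_2$ or as a $C_4$. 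The essential point your proposal misses is that the matching must adapt to the neighbourhood of $v$ \emph{outside} $C$; a matching confined to $E(C)$ is blind to components that dominate $C$. Your argument could be retained as a shortcut for components avoiding some vertex of $C$, but the dominating components still require an argument of the paper's type.
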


\begin{proof}
Let $G$ be a connected triangle-free EFC-graph. By Corollary \ref{c5c7}, there exists an induced odd cycle $C$ of length $5$ or $7$ in $G$. If  $G$ is $C_5$ or $C_7$, we are done. If not, let $M$ be a minimal matching isolating a vertex $v$ in $C$ such that $M$ contains exactly two edges of $C$; that is, the neighbors of $v$ in $C$ are saturated by the edges from $C$. By Lemma \ref{cut-vertex}, $G$ is $2$-connected.
Hence, by Theorem \ref{eiben-thm2} and by triangle-freeness of $G$, $G \backslash (V (M) \cup \{v\})$ is isomorphic to $K_{n,n}$ for some positive integer $n$. Notice that if $C$ is the cycle of length seven, then $G \backslash (V (M) \cup \{v\})$ contains an edge of $C$.
We define a matching $M' \subset M$  containing all edges of $M$ except the edges of $C$, i.e., $M'= M\backslash E(C)$. Therefore, in order to describe the structure of $G\backslash C$, it is sufficient to show how edges of $M'$ are connected to the edges of $G\backslash C$ .

\begin{claim}\label{xy-x'y'}
Any two edges of $M'$ induce either a $2K_2$ or a $C_4$ in $G$.
\end{claim}

\textbf{\emph{Proof of Claim \ref{xy-x'y'}}}
Let $xy$ and $x'y'$ be two edges of $M'$ such that $x$ and $x'$ are adjacent to a vertex $v$ of $C$. If there exists no edge joining an endpoint of $xy$ and an endpoint of $x'y'$ in $G$, then the edges $xy$ and $x'y'$ form a $2K_2$. Suppose that there exists an edge joining an endpoint of $xy$ and an endpoint of $x'y'$ in $G$. Note that the vertices $x$ and $x'$ cannot be adjacent by triangle-freeness of $G$. Let ${v}^*$ be a neighbor of $v$ in $C$. Note that the vertices ${v}^*$, $x$ and $x'$ form an independent set of size three. If $y$ and $y'$ are adjacent, then $G\backslash \{{v}^*, x, x'\}$ has a perfect matching which contains the edge $yy'$, edges of $M'\backslash\{xy, x'y'\}$, a perfect matching of $C\backslash {v}^*$ and a perfect matching of randomly matchable graph $G \backslash (V (M) \cup \{v\})$. This contradicts with Lemma \ref{independent-set}. Hence, we deduce that $y$ and $y'$ are nonadjacent in $G$.

Now, assume that there exists an edge $xy'$ but no edge $x'y$ in $G$. By Lemma \ref{ThirdVertex}, there exists a vertex $v'$ in $C$ which is adjacent to neither $x'$ nor $y$. Then the vertices $v'$, $x'$ and $y$ form an independent set of size three and $G\backslash \{v', x', y\}$ has a perfect matching containing the edge $xy'$, edges of $M'\backslash\{xy, x'y'\}$, a perfect matching of $C\backslash v'$ and a perfect matching of randomly matchable graph $G \backslash (V (M) \cup \{v\})$, contradicting with Lemma \ref{independent-set}. Hence, $x'y$ is also an edge so that the vertices $x$, $y$, $x'$, and $y'$ induce a $C_4$.
\hfill\(\Box\)

\begin{claim}\label{xy-Knn}
An edge of $M'$ and each edge of the subgraph $G \backslash V(M)$ induce either a $2K_2$ or a $C_4$ in $G\backslash C$.
\end{claim}

\textbf{\emph{Proof of Claim \ref{xy-Knn}}}
By Theorem \ref{eiben-thm2}, the subgraph $G \backslash (V (M) \cup \{v\})$  is isomorphic to $K_{n,n}$ for some positive integer $n$. Since $v \in C$, $G \backslash (V (M) \cup \{v\}) = G \backslash V(M)$ in $G\backslash C$.
For simplicity, let us denote $G \backslash (V (M) \cup \{v\})$ by $K_{n,n}$ and $G \backslash V(M)$ in $G\backslash C$ by $K_{n,n} \backslash C$. Note that $G \backslash (V (M) \cup \{v\})$ contains either no edge or one edge of $C$ if $C$ is the cycle of length five or seven, respectively.
If neither of the endpoints of an edge of $M'$ is adjacent to a vertex of $K_{n,n}$, then such an edge and an edge of $K_{n,n}$ induce a $2K_2$. Now let $xy$ be an edge of $M'$ such that there exists an edge in $G$ joining an endpoint of $xy$ to a vertex of $K_{n,n}\backslash C$. Without loss of generality, suppose that the vertices $y$ and $w$ are adjacent in $G$, where $w$  is a vertex in $K_{n,n} \backslash C$.
Then we need to show that the vertex $x$ is adjacent to all neighbors of $w$ in $K_{n,n}\backslash C$. Assume to the contrary that there exists  a neighbor $z$ of $w$ in $K_{n,n}\backslash C$ such that $x$ and $z$ are nonadjacent. By Lemma \ref{ThirdVertex}, there exists a vertex $v' \in C$ which is adjacent to neither $x$ nor $z$. Then the vertices $v'$, $x$ and $z$ form an independent set of size three and $G\backslash \{v', x, z\}$ has a perfect matching containing the edge $wy$, edges of $M'\backslash\{xy\}$, a perfect matching of $C\backslash v'$  and a perfect matching of randomly matchable graph $(K_{n,n}\backslash C)\backslash {wz}$, contradicting with Lemma \ref{independent-set}. Hence, we deduce that there exists an edge joining $x$ and $z$ in $G$. This implies that $x$ is adjacent to all neighbors of $w$ in $K_{n,n}\backslash C$. By symmetry, one can observe that the vertex $y$ is also adjacent to all neighbors of $z$ in $K_{n,n}\backslash C$. By triangle-freeness of $G$, we conclude that  $K_{n,n}\backslash C$ and the edge $xy$ form a randomly matchable graph $K_{n+1,n+1}\backslash C$ in $G\backslash C$; that is, $xy$ and each edge of $K_{n,n}\backslash C$ induce a $C_4$ in $G$.
\hfill\(\Box\)

\begin{claim}\label{xy-x'y'-wz}
If two edges of $M'$ induce a $C_4$ with the same edge in $G\backslash C$, then these two edges of $M'$ induce also a $C_4$ in $G$.
\end{claim}

\textbf{\emph{Proof of Claim \ref{xy-x'y'-wz}}}
Let $xy$ and $x'y'$ be two edges of $M'$, $K_{n,n}$ represent the subgraph $G \backslash (V (M) \cup \{v\})$ and let $wz$ be an edge in $K_{n,n}\backslash C$. Suppose that both $xy$ and $x'y'$  induce a $C_4$ with the edge $wz$ in $K_{n,n}$. Without loss of generality, assume that the vertices $x$ and $x'$ are adjacent to the vertex $w$  and the vertices $y$ and $y'$ are adjacent to the vertex $z$ in $G$. Note here that $x$ cannot be adjacent to $x'$, and $y$ cannot be adjacent to $y'$ by triangle-freeness of $G$. Assume $xy' \notin E(G)$. By Lemma \ref{ThirdVertex}, there exists a vertex $v' \in C$ which is adjacent to neither $x$ nor $y'$. Then the vertices $v'$, $x$ and $y'$ form an independent set of size three and $G\backslash \{v', x, y'\}$ has a perfect matching containing the edges $x'w$, $zy$, edges of $M'\backslash\{xy, x'y'\}$, a perfect matching of $C\backslash v'$  and a perfect matching of $(K_{n,n}\backslash)\backslash wz$. Since it contradicts with Lemma \ref{independent-set}, we deduce that there exists an edge joining $x$ and $y'$ in $G$. By symmetry, $xy' \in E(G)$.
\hfill\(\Box\)
\\
\noindent

By combining Claim \ref{xy-x'y'}, \ref{xy-Knn} and \ref{xy-x'y'-wz}, it is easy to verify that  an edge of $M'$ induces either a $2K_2$ or a $C_4$ with all edges of a component of $G \backslash C$. Hence, we conclude that each component of $G \backslash C$ is isomorphic to a $K_{n,n}$ for some positive integer $n$.
\hfill\(\qed\)
\end{proof}

\begin{corollary}\label{only-one-odd-cycle}
Let $G$ be a connected triangle-free EFC-graph. Then, there are no two vertex-disjoint induced odd cycles in G
and the length of any induced odd cycle is either five or seven.
\end{corollary}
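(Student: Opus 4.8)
The plan is to leverage Theorem \ref{KnnProof} together with the elementary fact that a complete bipartite graph contains no odd cycle. Concretely, I would first invoke Corollary \ref{c5c7} to fix an induced odd cycle $C$ in $G$, whose length is automatically five or seven. By Theorem \ref{KnnProof}, every component of $G \backslash C$ is isomorphic to some $K_{n,n}$, so $G \backslash C$ is a disjoint union of complete bipartite graphs and is in particular bipartite, containing no odd cycle at all.

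Next I would argue by contradiction. Suppose $G$ contains a second induced odd cycle $C'$ that is vertex-disjoint from $C$. Vertex-disjointness yields $V(C') \subseteq V(G) \backslash V(C)$, and since $G \backslash C = G[V(G) \backslash V(C)]$ is an induced subgraph, the edges of $G$ among $V(C')$ coincide with the edges of $G \backslash C$ among $V(C')$. Hence $C'$ is an odd cycle lying entirely inside the bipartite graph $G \backslash C$, which is impossible. This contradiction establishes that no two vertex-disjoint induced odd cycles exist.

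For the second assertion, the length of every induced odd cycle of $G$ is five or seven by Corollary \ref{c5c7}; applying this to $C$ gives the ``in particular'' claim directly.

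The argument is short, and the only point requiring a moment of care is the observation that an induced odd cycle vertex-disjoint from $C$ sits entirely, and as an induced subgraph, inside $G \backslash C$; once this is noted, the bipartiteness of $G \backslash C$ supplied by Theorem \ref{KnnProof} closes the argument immediately. Thus I do not expect a genuine obstacle here, since the real structural content was already carried out in establishing Theorem \ref{KnnProof}.
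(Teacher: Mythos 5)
Your proof is correct and follows exactly the route the paper intends: the corollary is stated without proof as an immediate consequence of Theorem \ref{KnnProof}, the point being precisely that $G \backslash C$ is a disjoint union of complete bipartite graphs and hence bipartite, so no odd cycle (induced or not) can avoid $V(C)$, with the length restriction supplied by Corollary \ref{c5c7}. Your write-up simply makes this implicit reasoning explicit, so there is nothing to flag.
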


In order to show how components of $G\backslash C$ are connected to $C$, we first focus on how each edge of $G\backslash C$ is connected to $C$. The next lemma guarantees that each edge of $G\backslash C$ induces a $C_4$ with at least one edge of $C$ in $G$.

\begin{lemma}\label{EdgeEdge}
Let $G$ be a connected triangle-free EFC-graph and $C$ be an induced odd cycle in $G$. Then each edge of $G\backslash C$ induces a $C_4$ with at least one edge of $C$. Furthermore, the neighborhood of the endpoints of an edge of $G\backslash C$ in $C$ induce a path.
\end{lemma}
\begin{proof}
Let $G$ be a connected  triangle-free EFC-graph, $C$ be an induced odd cycle in $G$, and $H$ be a component of $G\backslash C$. By Theorem \ref{KnnProof}, $H$  is isomorphic to $K_{n,n}$  for some nonnegative integer $n$. Consider an edge joining a vertex of $H$ and a vertex of $C$, say the edge $xv$, where $x \in H$ and $v \in C$. Such an edge always exists in $G$ since $G$ is connected. Then we need to show that all neighbors of $x$ in $H$ are adjacent to a neighbor of $v$ in $C$. Assume to the contrary that there exists a neighbor of $x$ in $H$, say $y$, that is adjacent to none of the neighbors $v'$ and $v''$ of $v$ in $C$.
Then the vertices $y$, $v'$ and $v''$ form an independent set of size three and $G\backslash \{y, {v}', v''\}$ has a perfect matching containing the edges $xv$, a perfect matching of the path $C\backslash\{v,{v}', v''\}$ which has an even number of vertices, and a perfect matching of randomly matchable graph $G\backslash (C \cup \{x,y\})$. Since it contradicts with  Lemma \ref{independent-set}, we deduce that there exists an edge joining $y$ and at least one of $\{v', v''\}$. Hence, all neighbors of $x$ in $H$ are adjacent to at least one of the vertices $v'$ and $v''$. Without loss of generality, assume that $y$ is adjacent to $v'$ in $C$. In a similar way, one can observe that all neighbors of $y$ in $H$ are adjacent to at least one neighbor of $v'$ in $C$. Therefore, we conclude that each edge in $H$ induce a $C_4$ with at least one edge in $C$ since $H$ is isomorphic to $K_{n,n}$  for some nonnegative integer $n$. In addition, it follows from the above discussion that the neighborhood of the endpoints of an edge of $G\backslash C$ in $C$ induce a path, as desired.
\hfill\(\qed\)
\end{proof}

The next lemma, which describes induced triangle-free EFC subgraphs of $G$, is the primary tool for our characterization of triangle-free EFC-graphs.

\begin{lemma}\label{InducedEFCSubgraph}
Let $G$ be a connected triangle-free EFC-graph and $C$ be an induced odd cycle in $G$. Then every subgraph of $G$ induced by the vertices of $C$ together with a randomly matchable subgraph of $G\backslash C$ is a connected triangle-free EFC-graph.
\end{lemma}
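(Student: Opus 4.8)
The plan is to exhibit $G'$, the subgraph induced by $V(C)$ together with a randomly matchable subgraph $S$ of $G\backslash C$, as the graph obtained from $G$ by deleting the vertices saturated by a suitable matching, so that Lemma \ref{removing-matching} delivers equimatchability and factor-criticality for free. First I would fix notation: by Corollary \ref{c5c7} we may take $C\in\{C_5,C_7\}$, by Theorem \ref{KnnProof} each component of $G\backslash C$ is some $K_{n_i,n_i}$, and since $S$ is a triangle-free randomly matchable graph, Theorem \ref{sumner} forces $S$ to be a disjoint union of complete bipartite graphs $K_{m,m}$ (with $K_2=K_{1,1}$ allowed). Write $G'=G[V(C)\cup V(S)]$. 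Triangle-freeness of $G'$ is immediate, since $G'$ is an induced subgraph of the triangle-free graph $G$, and $|V(G')|$ is odd because $|V(C)|$ is odd while $S$, having a perfect matching, has even order.

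The heart of the argument is to show that $R:=V(G\backslash C)\backslash V(S)$ induces a graph with a perfect matching $M_R$. Here I would use that each side of a $K_{m,m}$-component of $S$ is an independent set and therefore lies inside a single part of the complete bipartite component $K_{n_i,n_i}$ of $G\backslash C$ that contains it; consequently $S$ meets both parts of every component of $G\backslash C$ in equally many vertices. Hence $G[R]$ is a disjoint union of balanced complete bipartite graphs $K_{n_i-a_i,n_i-a_i}$, each of which has a perfect matching, giving the desired $M_R$. This balancing observation is the step I expect to be the main obstacle: it is exactly what prevents the deletion of $S$ from unbalancing the remaining bipartite pieces, and it is where the $K_{n,n}$-structure established in Theorem \ref{KnnProof} is essential. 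Since $M_R\subseteq E(G)$ saturates precisely $R$, we have $G'=G\backslash V(M_R)$.

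Finally I would verify that $G'$ is connected and then conclude. Every vertex of $S$ is covered by the perfect matching of $S$, whose edges lie in $G\backslash C$; by Lemma \ref{EdgeEdge} both endpoints of any edge of $G\backslash C$ are adjacent to $C$, so every vertex of $S$ has a neighbour on the cycle $C$ inside the induced graph $G'$. As $C$ is connected, $G'$ is connected. Now applying Lemma \ref{removing-matching} to the connected EFC-graph $G$ and the matching $M_R$, the graph $G\backslash V(M_R)=G'$ has a unique odd component which is equimatchable and factor-critical; being connected and of odd order, $G'$ is itself that component. Hence $G'$ is a connected triangle-free EFC-graph, as claimed. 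Alternatively, one could avoid Lemma \ref{removing-matching} and argue equimatchability directly through Lemma \ref{independent-set}: any independent $3$-set $I$ of $G'$ with $G'\backslash I$ perfectly matchable could be completed by $M_R$ to witness the same configuration in $G$, contradicting the equimatchability of $G$; but invoking Lemma \ref{removing-matching} is cleaner and simultaneously yields factor-criticality.
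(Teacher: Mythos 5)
Your proposal is correct and follows essentially the same route as the paper: both realize $G'=G[V(C)\cup V(S)]$ as $G\backslash V(M)$ for a perfect matching $M$ of $(G\backslash C)\backslash S$, then invoke Lemma \ref{EdgeEdge} for connectivity and Lemma \ref{removing-matching} for equimatchability and factor-criticality. The only difference is that you spell out the balancing argument showing $(G\backslash C)\backslash S$ is a disjoint union of balanced complete bipartite graphs (hence perfectly matchable), a step the paper simply asserts as a consequence of Theorem \ref{KnnProof}.
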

\begin{proof}
Let $G$ be a connected triangle-free EFC-graph, $C$ be an induced odd cycle in $G$, and $H$ be a randomly matchable subgraph of $G\backslash C$. By Theorem \ref{KnnProof}, every component of $G\backslash C$ is isomorphic to $K_{n,n}$ for some nonnegative integer $n$. Then $(G\backslash C)\backslash H$ is also a randomly matchable subgraph of $G\backslash C$, say $H'$. Hence, a subgraph of $G$ induced by the vertices of $C$ and $H$ can be considered as an induced subgraph $G\backslash V(M)$ of $G$, where $M$ is a perfect matching of $H'$. Let $G'$ be such an induced subgraph. By Lemma \ref{EdgeEdge}, $G'$ is connected. By Lemma \ref{removing-matching}, $G'$ is a triangle-free EFC graph. This completes the proof.
\hfill\(\qed\)
\end{proof}
\noindent
Therefore, each  subgraph of $G$ induced by the vertices of $C$ and an edge of $G\backslash C$ is indeed a connected triangle-free EFC-graph.

\begin{corollary}\label{C5orC7andK2}
The graphs $C_5 \cup K_2$ and  $C_7 \cup K_2$ are forbidden subgraphs for connected triangle-free EFC graphs.
\end{corollary}

The next theorem provides the structural characterization of connected triangle-free EFC-graphs containing only one edge after the removal of an induced odd cycle. Note here that such graphs are the smallest connected triangle-free EFC-graphs except $C_5$ and $C_7$.

\begin{theorem}\label{C-OneEdge-Structure}
Let $G$ be a connected  triangle-free EFC-graph and $C$ be an induced odd cycle in $G$. Then a subgraph of $G$ induced by the vertices of $C$ and an edge in $G\backslash C$ is isomorphic to one of the graphs given in Figure \ref{fig:InducedOneEdgeSubgraph}. \noindent
\begin{figure}[hbtp]
\centering
\begin{tikzpicture}[scale=0.7]
\filldraw [black] (-0.7,0) circle (3pt);
\filldraw [black] (0.7,0) circle (3pt);
\filldraw [black] (-0.7,1.3) circle (3pt);
\filldraw [black] (0.7,1.3) circle (3pt);
\filldraw [black] (-1.2,-1) circle (3pt);
\filldraw [black] (1.2,-1) circle (3pt);
\filldraw [black] (0,-2) circle (3pt);
\filldraw [white] (0,-2.5) circle (3pt);

\node at (0,-3.5) {\textbf{(a)}};

\draw[very thick] (-0.7,0) -- (0.7,1.3) -- (-0.7,1.3) -- (0.7,0) -- (-0.7,0) -- (-1.2,-1) -- (0,-2) -- (1.2,-1) -- (0.7,0);
\end{tikzpicture}
\hskip 0.8cm
\begin{tikzpicture}[scale=0.7]
\filldraw [black] (-0.7,0) circle (3pt);
\filldraw [black] (0.7,0) circle (3pt);
\filldraw [black] (-0.7,1.3) circle (3pt);
\filldraw [black] (0.7,1.3) circle (3pt);
\filldraw [black] (-1.2,-1) circle (3pt);
\filldraw [black] (1.2,-1) circle (3pt);
\filldraw [black] (0,-2) circle (3pt);
\filldraw [white] (0,-2.5) circle (3pt);

\node at (0,-3.5) {\textbf{(b)}};

\draw[very thick] (-0.7,0) -- (0.7,1.3) -- (-0.7,1.3) -- (0.7,0) -- (-0.7,0) -- (-1.2,-1) -- (0,-2) -- (1.2,-1) -- (0.7,0);
\draw[very thick] (-1.2,-1) .. controls (-1,0.5) .. (-0.7,1.3);
\end{tikzpicture}
\hskip 0.8cm
\begin{tikzpicture}[scale=0.7]
\filldraw [black] (-0.7,0) circle (3pt);
\filldraw [black] (0.7,0) circle (3pt);
\filldraw [black] (-0.7,1.3) circle (3pt);
\filldraw [black] (0.7,1.3) circle (3pt);
\filldraw [black] (-1.2,-1) circle (3pt);
\filldraw [black] (1.2,-1) circle (3pt);
\filldraw [black] (0,-2) circle (3pt);
\filldraw [white] (0,-2.5) circle (3pt);

\node at (0,-3.5) {\textbf{(c)}};

\draw[very thick] (-0.7,0) -- (0.7,1.3) -- (-0.7,1.3) -- (0.7,0) -- (-0.7,0) -- (-1.2,-1) -- (0,-2) -- (1.2,-1) -- (0.7,0);
\draw[very thick] (-1.2,-1) .. controls (-1,0.5) .. (-0.7,1.3);
\draw[very thick] (1.2,-1) .. controls (1,0.5) .. (0.7,1.3);
\end{tikzpicture}
\hskip 0.8cm
\begin{tikzpicture}[scale=0.7]
\filldraw [black] (-0.7,0) circle (3pt);
\filldraw [black] (0.7,0) circle (3pt);
\filldraw [black] (-0.7,1.3) circle (3pt);
\filldraw [black] (0.7,1.3) circle (3pt);
\filldraw [black] (-1.2,-1) circle (3pt);
\filldraw [black] (1.2,-1) circle (3pt);
\filldraw [black] (0.9,-2) circle (3pt);
\filldraw [black] (-0.9,-2) circle (3pt);
\filldraw [black] (0,-2.5) circle (3pt);

\node at (0,-3.5) {\textbf{(d)}};

\draw[very thick] (-0.7,0) -- (0.7,1.3) -- (-0.7,1.3) -- (0.7,0) -- (-0.7,0) -- (-1.2,-1) -- (-0.9,-2) -- (0,-2.5) -- (0.9,-2) -- (1.2,-1) -- (0.7,0);
\draw[very thick] (-1.2,-1) .. controls (-1,0.5) .. (-0.7,1.3);
\draw[very thick] (1.2,-1) .. controls (1,0.5) .. (0.7,1.3);
\end{tikzpicture}
\hskip 0.8cm
\begin{tikzpicture}[scale=0.7]
\filldraw [black] (-0.7,0) circle (3pt);
\filldraw [black] (0.7,0) circle (3pt);
\filldraw [black] (-0.7,1.3) circle (3pt);
\filldraw [black] (0.7,1.3) circle (3pt);
\filldraw [black] (-1.2,-1) circle (3pt);
\filldraw [black] (1.2,-1) circle (3pt);
\filldraw [black] (0.9,-2) circle (3pt);
\filldraw [black] (-0.9,-2) circle (3pt);
\filldraw [black] (0,-2.5) circle (3pt);

\node at (0,-3.5) {\textbf{(e)}};

\draw[very thick] (-0.7,0) -- (0.7,1.3) -- (-0.7,1.3) -- (0.7,0) -- (-0.7,0) -- (-1.2,-1) -- (-0.9,-2) -- (0,-2.5) -- (0.9,-2) -- (1.2,-1) -- (0.7,0);
\draw[very thick] (-1.2,-1) .. controls (-1,0.5) .. (-0.7,1.3);
\draw[very thick] (1.2,-1) .. controls (1,0.5) .. (0.7,1.3);
\draw[very thick] (-0.7,1.3) .. controls (2,3) and (2.3,-2) .. (0.9,-2);
\end{tikzpicture}
\caption{Subgraphs of $G$ induced by the vertices of $C$ and an edge of $G\backslash C$.} \label{fig:InducedOneEdgeSubgraph}
\end{figure}
\end{theorem}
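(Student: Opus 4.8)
The plan is to apply the structural lemmas already established to pin down, up to isomorphism, the possible adjacencies between the chosen edge and $C$, and then to invoke the equimatchability criterion of Lemma \ref{independent-set} to discard the configurations that are not EFC. Write $xy$ for the edge of $G\backslash C$ under consideration and set $G' := G[V(C)\cup\{x,y\}]$. By Lemma \ref{InducedEFCSubgraph}, $G'$ is a connected triangle-free EFC-graph, and by Corollary \ref{c5c7} it has either $7$ or $9$ vertices according as $C\cong C_5$ or $C\cong C_7$. Since $C$ is an induced cycle and $G'$ is triangle-free, the only edges of $G'$ are the edges of $C$, the edge $xy$, and the edges joining $\{x,y\}$ to $C$; hence $G'$ is completely determined by this last set of edges, and the whole task reduces to describing it.

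First I would describe those edges. By Lemma \ref{EdgeEdge}, the set $N:=\big(N(x)\cup N(y)\big)\cap V(C)$ induces a path $P=p_1p_2\cdots p_k$ along $C$, and $xy$ forms a $C_4$ with at least one edge of $C$, so $k\ge 2$; moreover $k\le |V(C)|-1$, since an induced path inside a cycle must omit at least one vertex. Triangle-freeness forces the adjacencies to alternate along $P$: no $p_i$ can be adjacent to both $x$ and $y$ (this would create the triangle $xyp_i$), and two consecutive vertices $p_i,p_{i+1}$ cannot be adjacent to the same endpoint (this would create a triangle with the edge $p_ip_{i+1}$). Thus each $p_i$ is adjacent to exactly one of $x,y$ and consecutive vertices of $P$ are adjacent to distinct endpoints. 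Consequently $G'$ depends, up to isomorphism, only on $|V(C)|\in\{5,7\}$ and on the number $k$ of path vertices; swapping the roles of $x$ and $y$ and using the vertex-transitivity of $C$ absorb all remaining freedom in the construction.

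It then remains to determine the admissible values of $k$. For $C\cong C_5$ one has $2\le k\le 4$, and these three values yield precisely the graphs (a), (b), (c) of Figure \ref{fig:InducedOneEdgeSubgraph}; nothing needs to be excluded. For $C\cong C_7$ one has $2\le k\le 6$ a priori, and I claim that equimatchability forces $k\in\{4,5\}$, giving exactly (d) and (e), so that the cases $k\in\{2,3,6\}$ must be ruled out. This exclusion is the main obstacle, and I would handle it through Lemma \ref{independent-set} by exhibiting, in each forbidden case, an independent set of size three whose deletion leaves a perfect matching. Labelling $C=v_1v_2\cdots v_7$ so that $P=v_1\cdots v_k$, with $x$ adjacent to the odd-indexed and $y$ to the even-indexed vertices of $P$, suitable witnesses are as follows. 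When $k=2$ (so $x\sim v_1$ and $y\sim v_2$), take $I=\{x,v_3,v_6\}$, whose complement admits the perfect matching $\{v_4v_5,\,v_7v_1,\,yv_2\}$. When $k=3$ (so $x\sim v_1,v_3$ and $y\sim v_2$), take $I=\{y,v_4,v_7\}$, with perfect matching $\{xv_1,\,v_2v_3,\,v_5v_6\}$. When $k=6$ (so $x\sim v_1,v_3,v_5$ and $y\sim v_2,v_4,v_6$), take $I=\{v_2,v_5,v_7\}$, with perfect matching $\{xv_1,\,v_3v_4,\,yv_6\}$.

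In each of these three cases one checks the two routine facts that the chosen vertices are pairwise nonadjacent and that the displayed matching saturates the remaining six vertices; Lemma \ref{independent-set} then shows that the corresponding $G'$ is not equimatchable, contradicting Lemma \ref{InducedEFCSubgraph}. Hence $k\in\{4,5\}$ whenever $C\cong C_7$. Collecting the surviving possibilities, namely $(|V(C)|,k)\in\{(5,2),(5,3),(5,4),(7,4),(7,5)\}$, and matching each against the alternating-path description above, shows that $G'$ is isomorphic to one of the graphs (a)--(e) of Figure \ref{fig:InducedOneEdgeSubgraph}, as claimed. The only genuinely delicate part is the finite verification for $C_7$; everything else is forced by Lemmas \ref{EdgeEdge} and \ref{InducedEFCSubgraph} together with triangle-freeness.
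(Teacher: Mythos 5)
Your proof is correct and follows essentially the same route as the paper: reduce to the induced subgraph $G'$ via Lemma \ref{InducedEFCSubgraph}, constrain the attachment to a consecutive, alternating segment of $C$ via Lemma \ref{EdgeEdge} and triangle-freeness, and eliminate the non-equimatchable $C_7$ configurations via Lemma \ref{independent-set}. Your explicit independent-set witnesses for $k\in\{2,3,6\}$ (which check out) actually make the exclusion step more concrete than the paper's proof, which only asserts that such independent sets can be found.
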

\begin{proof}
Let $G$ be a connected  triangle-free EFC-graph and $C$ be an induced odd cycle in $G$. Recall that $C$ is a cycle of length five or seven. Let $G'$ be a subgraph of $G$ induced by the vertices of $C$ and an edge of $G\backslash C$. By Lemma \ref{InducedEFCSubgraph}, $G'$ is a connected  triangle-free EFC-graph with induced odd cycle $C$ such that $G'\backslash C$ is isomorphic to $K_2$. By Lemma \ref{EdgeEdge} and Lemma \ref{ThirdVertex}, it is easy to see that the number of vertices in $C$ that are adjacent to the vertices of $G'\backslash C$ can be at least two and at most four when $C$ is a cycle of length five, and at least two and at most six when $C$ is a cycle of length seven. Besides, all such vertices of $C$ are consecutive in either case of $C$. For a cycle $C$ of length five, we obtain the graphs $(a)$, $(b)$ and $(c)$ given in Figure \ref{fig:InducedOneEdgeSubgraph}. One can easily verify that all of these graphs are connected triangle-free EFC-graphs. On the other hand, for a cycle $C$ of length seven, one can observe that the only triangle-free EFC-graphs are the graphs $d$ and $e$ given in Figure \ref{fig:InducedOneEdgeSubgraph}, i.e., the graphs having four or five consecutive vertices in $C$ that are adjacent to the vertices of $G'\backslash C$. In fact, in all other graphs one can find an independent set of size three containing at least two vertices of $C$ such that the remaining graph after the removal of such an independent set has a perfect matching, contradicting with Lemma \ref{independent-set} and hence implying that such graphs are not EFC-graphs. This completes the proof.
\hfill\(\qed\)
\end{proof}

\begin{corollary}\label{AtLeastTwoFour}
Let $G$ be a connected triangle-free EFC-graph and $C$ be an induced odd cycle in $G$. The number of vertices in $C$ that are adjacent to the vertices of an edge in $G\backslash C$ is at least two or four, where $C$ is a cycle of length five or seven, respectively. Furthermore, all such vertices of $C$ are consecutive in either case of $C$.
\end{corollary}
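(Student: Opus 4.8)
The plan is to read this corollary off directly from Theorem \ref{C-OneEdge-Structure}, since it merely extracts the quantitative and positional information already encoded in the list of admissible subgraphs. First I would invoke Theorem \ref{C-OneEdge-Structure} to assert that the subgraph $G'$ of $G$ induced by $V(C)$ together with a single edge of $G\backslash C$ is isomorphic to one of the five graphs $(a)$–$(e)$ of Figure \ref{fig:InducedOneEdgeSubgraph}. Then I would simply count, in each of these graphs, the number of vertices of $C$ that have a neighbor among the two endpoints of the adjoined edge. For the length-five case (graphs $(a)$, $(b)$, $(c)$) these counts are $2$, $3$, and $4$, so the number is at least two; for the length-seven case (graphs $(d)$, $(e)$) the counts are $4$ and $5$, so the number is at least four. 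This yields the claimed lower bounds in each case.

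For the consecutiveness statement I would appeal to Lemma \ref{EdgeEdge}, whose conclusion already records that the neighborhood in $C$ of the two endpoints of an edge of $G\backslash C$ induces a path while preserving triangle-freeness. Since a path inside the cycle $C$ is exactly a block of consecutive vertices, this is precisely the assertion that the vertices of $C$ adjacent to the edge are consecutive, in both the $C_5$ and $C_7$ cases. (Equivalently, consecutiveness can be read off by inspection of the figures, but deriving it from Lemma \ref{EdgeEdge} makes the argument independent of the pictures.)

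The only point requiring more than inspection is the lower bound of four in the $C_7$ case: a priori an edge of $G\backslash C$ could meet as few as two consecutive vertices of $C_7$, so the bound does not follow from Lemma \ref{EdgeEdge} alone. It is exactly the equimatchability hypothesis, through Lemma \ref{independent-set}, that forbids the configurations with only two or three adjacent vertices on $C_7$; in each such configuration one exhibits an independent set of size three whose complement admits a perfect matching. This elimination is already carried out inside the proof of Theorem \ref{C-OneEdge-Structure}, so the substantive work is done there and the corollary follows with no further obstacle. I therefore expect the entire proof to be a short appeal to Theorem \ref{C-OneEdge-Structure} together with Lemma \ref{EdgeEdge}, the ``hard part'' having been absorbed into the preceding theorem.
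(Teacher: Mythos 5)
Your proposal is correct and is essentially the paper's own route: the paper states this as a corollary of Theorem \ref{C-OneEdge-Structure} with no separate argument, the lower bounds of two and four being read off from the admissible graphs $(a)$--$(e)$ of Figure \ref{fig:InducedOneEdgeSubgraph} and the consecutiveness being recorded in Lemma \ref{EdgeEdge} and in the proof of Theorem \ref{C-OneEdge-Structure}. Your additional remark that the elimination of the two- and three-vertex configurations on $C_7$ is the substantive step already absorbed into Theorem \ref{C-OneEdge-Structure} accurately reflects where the work lies.
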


In the next lemma, we show that if any two edges of $G\backslash C$ induce a $C_4$ with the same edge in $C$, then these two edges of $G\backslash C$ belong to the same component of $G\backslash C$.

\begin{lemma} \label{DisjointEdges1}
Let $G$ be a connected triangle-free EFC-graph and $C$ be an induced odd cycle in $G$. Then two edges from different components of $G\backslash C$ cannot induce a $C_4$ with the same edge in $C$.
\end{lemma}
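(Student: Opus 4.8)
The plan is to argue by contradiction via Lemma~\ref{independent-set}: assuming the forbidden configuration, I would exhibit an independent set of size three whose removal leaves a perfect matching. Suppose, then, that edges $xy$ and $x'y'$ lie in distinct components $H_1$ and $H_2$ of $G\backslash C$ and both induce a $C_4$ with the same edge $uv\in E(C)$. After relabelling the endpoints of each edge we may assume $x,x'\in N(u)$ and $y,y'\in N(v)$, so that the configuration consists of the two $4$-cycles $uxyv$ and $ux'y'v$ glued along $uv$. Since $u\sim v$, triangle-freeness forces $x,x'\notin N(v)$ and $y,y'\notin N(u)$; moreover a common neighbour $u$ (resp. $v$) would create a triangle, so $x\not\sim x'$ and $y\not\sim y'$, while every pair with one vertex in $H_1$ and one in $H_2$ is non-adjacent because they lie in different components of $G\backslash C$.

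To keep the bookkeeping finite I would first pass to the induced subgraph $G'=G[V(C)\cup\{x,y,x',y'\}]$. The four vertices $x,y,x',y'$ induce a $2K_2$, which is a randomly matchable subgraph of $G\backslash C$ (its complement in $G\backslash C$ is a disjoint union of graphs $K_{n,n}$, again randomly matchable), so by Lemma~\ref{InducedEFCSubgraph} the graph $G'$ is itself a connected triangle-free EFC-graph. It therefore suffices to produce inside $G'$ an independent triple whose deletion admits a perfect matching, contradicting Lemma~\ref{independent-set}.

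The candidate triple is $I=\{y,x',v'\}$, where $v'$ is a suitably chosen vertex of $C$ adjacent to neither $y$ nor $x'$ (at least one such vertex exists by Lemma~\ref{ThirdVertex}, and it automatically avoids $u\in N(x')$ and $v\in N(y)$). The pair $y,x'$ is non-adjacent since it is split between $H_1$ and $H_2$, so $I$ is independent. In $G'\backslash I$ the only vertices outside $C$ are $x$ and $y'$, each now joined only to $C$; hence a perfect matching of $G'\backslash I$ amounts to choosing partners $p_x\in N(x)\cap C$ and $p_{y'}\in N(y')\cap C$, distinct from each other and from $v'$, so that $C\backslash\{v',p_x,p_{y'}\}$ is perfectly matched. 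As deleting three vertices from the odd cycle $C$ leaves a disjoint union of paths, this is exactly the parity requirement that $\{v',p_x,p_{y'}\}$ split $C$ into arcs of even length.

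The main obstacle is precisely this parity/position requirement, because the na\"{\i}ve choice $p_x=u$, $p_{y'}=v$ need not work. For instance, for $C=C_5=v_1v_2v_3v_4v_5$ with $u=v_1,\ v=v_2$ and neighbourhoods $N(x)=N(x')=\{v_1,v_3\}$, $N(y)=N(y')=\{v_2,v_5\}$, the partners $u,v$ leave the non-adjacent pair $\{v_3,v_5\}$; yet the alternative partners $p_x=v_3,\ p_{y'}=v_5$ with $v'=v_4$ do give the perfect matching $\{xv_3,\ y'v_5,\ v_1v_2\}$. To turn this into a proof I would invoke Corollary~\ref{AtLeastTwoFour}: the neighbourhoods of $xy$ and of $x'y'$ in $C$ are consecutive arcs containing $\{u,v\}$, and along each arc the two endpoints of the edge alternate (two consecutive cycle vertices sharing an endpoint would form a triangle). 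On the odd cycle this alternation pins down the possible neighbourhoods of $x,y,x',y'$ up to the placement of the arcs, leaving only finitely many configurations when $C=C_5$ and when $C=C_7$. In each configuration I would then check that appropriate $p_x,p_{y'},v'$ exist making $C\backslash\{v',p_x,p_{y'}\}$ perfectly matchable; the resulting perfect matching of $G'\backslash I$ contradicts Lemma~\ref{independent-set}, and hence $xy$ and $x'y'$ cannot come from different components.
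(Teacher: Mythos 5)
Your overall strategy is sound and close in spirit to the paper's: both arguments run on Lemma~\ref{independent-set}, and your reduction to the induced subgraph $G'=G[V(C)\cup\{x,y,x',y'\}]$ via Lemma~\ref{InducedEFCSubgraph} is legitimate (the $2K_2$ is randomly matchable and so is its complement in $G\backslash C$). However, there is a genuine gap, and it is not just that the final case analysis is deferred rather than carried out ("I would then check\dots" is precisely where the content of the proof lies). The plan as stated actually fails on a concrete configuration, because you commit to the single asymmetric triple $I=\{y,x',v'\}$. Take $C=C_5=v_1v_2v_3v_4v_5$ with both edges forming a $C_4$ with $v_2v_3$, and neighbourhoods $N(x)\cap C=\{v_2\}$, $N(y)\cap C=\{v_1,v_3\}$, $N(x')\cap C=\{v_2,v_4\}$, $N(y')\cap C=\{v_3\}$; both edges are then in configuration (b) of Figure~\ref{fig:InducedOneEdgeSubgraph}, so this configuration survives all the constraints you invoke (Theorem~\ref{C-OneEdge-Structure}, Corollary~\ref{AtLeastTwoFour}, alternation, consecutiveness). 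Here $N(y)\cup N(x')$ covers $v_1,v_2,v_3,v_4$, so the only admissible choice is $v'=v_5$; in $G'\backslash\{y,x',v_5\}$ the vertex $x$ is forced onto $v_2$ and $y'$ onto $v_3$, leaving the nonadjacent pair $\{v_1,v_4\}$, so no perfect matching exists and no contradiction is obtained. The configuration is nevertheless impossible, but witnessing that requires the \emph{mirror} triple $\{x,y',v_5\}$: then $G'\backslash\{x,y',v_5\}$ has the perfect matching $\{yv_1,\,x'v_4,\,v_2v_3\}$. So your proof cannot be completed without enlarging the set of candidate triples (allowing the roles of the two endpoints of each edge to be swapped), and once you do that, every configuration must be re-examined.

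For comparison, the paper avoids this trap by working directly in $G$ and using Lemma~\ref{independent-set} \emph{iteratively} rather than once: from the hypothesis it first forces extra adjacencies ($x\sim v_4$ and, symmetrically, the analogous adjacency for the other edge), and only then, with this enriched adjacency information, exhibits a final independent triple whose removal leaves a perfect matching --- effectively switching between triples of both "orientations" as needed, which is exactly the flexibility your fixed choice lacks. Your finite-configuration framework can be repaired along the same lines, but as written the argument is incomplete and its announced verification step would fail.
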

\begin{proof}
Let $G$ be a connected triangle-free EFC-graph and $C$ be an induced odd cycle in $G$. Let $xy$ and $wz$ be two edges from different components of $G\backslash C$, i.e., $xy$ and $wz$ induce a $2K_2$ in $G\backslash C$. Let the consecutive vertices of $C$ be labeled by $v_1$, $v_2$, $v_3$, $v_4$, $v_5$, if necessary $v_6$ and $v_7$, depending on the length of $C$. Assume to the contrary that both edges $xy$ and $wz$ induce a $C_4$ with the edge $v_2v_3$, where $x$ and $w$ are adjacent to $v_2$, and $y$ and $z$ are adjacent to $v_3$. If $x$ and $v_4$ are nonadjacent in $G$, the vertices $x$, $z$ and $v_4$ form an independent set of size $3$. Then $G\backslash \{x, z, v_4\}$ has a perfect matching containing the edges $wv_2$, $yv_3$, a perfect matching of the path $C\backslash \{v_2,v_3,v_4\}$ which has an even number of vertices, and a perfect matching of randomly matchable graph $G\backslash (C \cup \{x,y,w,z\})$. Since it contradicts with Lemma \ref{independent-set}, we deduce that $x$ and $v_4$ are adjacent in $G$. By symmetry, we also deduce that $z$ and $v_1$ are adjacent in $G$.

Suppose that $C$ is a cycle of length five in $G$. If $v_5$ is adjacent to neither $w$ nor $y$ in $G$, the vertices $w$, $y$ and $v_5$ form an independent set of size $3$. Then  $G\backslash \{w, y, v_5\}$ has a perfect matching containing the edges $xv_4$, $zv_1$, $v_2v_3$ and a perfect matching of randomly matchable graph $G\backslash (C \cup \{x,y,w,z\})$.
By  Lemma \ref{independent-set}, it follows  that $v_5$ is adjacent to at least one of $\{w,y\}$. W.l.o.g., we assume that $v_5$ is adjacent to $w$. Then, the vertices $w$, $y$ and $v_4$ form an independent set of size $3$ and $G\backslash \{w, y, v_4\}$ has a perfect matching containing the edges $xv_2$, $zv_3$, $v_1v_5$, and a perfect matching of randomly matchable graph $G\backslash (C \cup \{x,y,w,z\})$, contradicting  Lemma \ref{independent-set}. Hence, the edges $xy$ and $wz$ cannot induce a $C_4$ with the same edge of $C$, where $C$ is a cycle of length five.

Suppose that $C$ is a cycle of length seven. If $v_7$ is adjacent to neither $w$ nor $y$ in $G$, the vertices $w$, $y$ and $v_7$ form an independent set of size $3$. Then $G\backslash \{w, y, v_7\}$ has a perfect matching containing the edges $zv_1$, $xv_4$, $v_2v_3$, $v_5v_6$, and a perfect matching of randomly matchable graph $G\backslash (C \cup \{x,y,w,z\})$.
That is, by Lemma \ref{independent-set}, $v_7$ is adjacent to at least one of  $\{w,y\}$. In the case where the edge $yv_7$ exists, the vertices $w$, $y$ and $v_1$ form an independent set of size $3$ and  $G\backslash \{w, y, v_1\}$ has a perfect matching containing the edges $xv_2$, $zv_3$, $v_4v_5$, $v_6v_7$, and a perfect matching of randomly matchable graph $G\backslash (C \cup \{x,y,w,z\})$.  Then, Lemma \ref{independent-set} implies that the vertex $v_7$ is adjacent to $w$ but not to $y$ in $G$. By symmetry, we also deduce that  the vertex $v_5$ is adjacent to $y$ but not to $w$ in $G$. It follows that the vertices $w$, $y$ and $v_6$ form an independent set of size $3$ and  $G\backslash \{w, y, v_6\}$ has a perfect matching containing the edges $zv_3$, $xv_2$, $v_1v_7$, $v_4v_5$, and a perfect matching of randomly matchable graph $G\backslash (C \cup \{x,y,w,z\})$, a contradiction with Lemma \ref{independent-set}. Therefore, we conclude that both $xy$ and $wz$ cannot induce a $C_4$ with the same edge of $C$, where $C$ is a cycle of length seven.
\hfill\(\qed\)
\end{proof}

By Lemma \ref{InducedEFCSubgraph}, a subgraph of $G$ induced by the vertices of $C$ and  two edges from different components of $G\backslash C$ is also a connected triangle-free EFC-graph. From this viewpoint, the next theorem provides the structural characterization of connected triangle-free EFC-graphs containing only a $2K_2$ after the removal of an induced odd cycle.

\begin{theorem} \label{C-TwoEdge-Structure}
Let $G$ be a connected triangle-free EFC-graph and $C$ be an induced odd cycle in $G$. Then a subgraph of $G$ induced by the vertices of $C$ and two edges from different components of $G\backslash C$ is isomorphic to one of the graphs given in Figure \ref{fig:InducedTwoEdgesSubgraph}.
\begin{figure}[hbtp]
\begin{tikzpicture}[scale=0.6]
\filldraw [black] (-0.7,0) circle (3pt);
\filldraw [black] (0.7,0) circle (3pt);
\filldraw [black] (-0.7,1) circle (3pt);
\filldraw [black] (0.7,1) circle (3pt);
\filldraw [black] (-1.4,-1) circle (3pt);
\filldraw [black] (1.4,-1) circle (3pt);
\filldraw [black] (1,-2) circle (3pt);
\filldraw [black] (-1,-2) circle (3pt);
\filldraw [black] (0,-2.5) circle (3pt);
\filldraw [black] (-1.7,-2.7) circle (3pt);
\filldraw [black] (-0.8,-3.2) circle (3pt);
\filldraw [white] (-0.8,-3.6) circle (3pt);

\node at (0,-4) {\textbf{(A)}};

\draw[very thick] (-0.7,0) -- (0.7,1) -- (-0.7,1) -- (0.7,0) -- (-0.7,0) -- (-1.4,-1) -- (-1,-2) -- (0,-2.5) -- (1,-2) -- (1.4,-1) -- (0.7,0);
\draw[very thick] (-1.4,-1) .. controls (-1,0.5) .. (-0.7,1);
\draw[very thick] (1.4,-1) .. controls (1,0.5) .. (0.7,1);
\draw[very thick] (-1.7,-2.7) -- (-0.8,-3.2);
\draw[very thick] (-1.7,-2.7) -- (0,-2.5);
\draw[very thick] (-1.7,-2.7) .. controls (-1.7,-2) .. (-1.4,-1);
\draw[very thick] (-0.8,-3.2) -- (-1,-2);
\draw[very thick] (-0.8,-3.2) .. controls (0.6,-2.6) .. (1,-2);
\end{tikzpicture}
\hskip 0.4cm
\begin{tikzpicture}[scale=0.6]
\filldraw [black] (-0.7,0) circle (3pt);
\filldraw [black] (0.7,0) circle (3pt);;
\filldraw [black] (-0.7,1) circle (3pt);
\filldraw [black] (0.7,1) circle (3pt);
\filldraw [black] (-1.4,-1) circle (3pt);
\filldraw [black] (1.4,-1) circle (3pt);
\filldraw [black] (1,-2) circle (3pt);
\filldraw [black] (-1,-2) circle (3pt);
\filldraw [black] (0,-2.5) circle (3pt);
\filldraw [black] (-1.7,-2.7) circle (3pt);
\filldraw [black] (-0.8,-3.2) circle (3pt);
\filldraw [white] (-0.8,-3.6) circle (3pt);

\node at (0,-4) {\textbf{(B)}};

\draw[very thick] (-0.7,0) -- (0.7,1) -- (-0.7,1) -- (0.7,0) -- (-0.7,0) -- (-1.4,-1) -- (-1,-2) -- (0,-2.5) -- (1,-2) -- (1.4,-1) -- (0.7,0);
\draw[very thick] (-1.4,-1) .. controls (-1,0.5) .. (-0.7,1);
\draw[very thick] (1.4,-1) .. controls (1,0.5) .. (0.7,1);
\draw[very thick] (-1.7,-2.7) -- (-0.8,-3.2);
\draw[very thick] (-1.7,-2.7) -- (0,-2.5);
\draw[very thick] (-1.7,-2.7) .. controls (-1.7,-2) .. (-1.4,-1);
\draw[very thick] (-0.8,-3.2) -- (-1,-2);
\draw[very thick] (-0.8,-3.2) .. controls (0.6,-2.6) .. (1,-2);
\draw[very thick] (-0.7,1) .. controls (2.2,2.2) and (2.2,-1) .. (1,-2);
\end{tikzpicture}
\hskip 0.2cm
\begin{tikzpicture}[scale=0.6]
\filldraw [black] (-0.7,0) circle (3pt);
\filldraw [black] (0.7,0) circle (3pt);
\filldraw [black] (-0.7,1.3) circle (3pt);
\filldraw [black] (0.7,1.3) circle (3pt);
\filldraw [black] (-1.2,-1) circle (3pt);
\filldraw [black] (1.2,-1) circle (3pt);
\filldraw [black] (0,-2) circle (3pt);
\filldraw [black] (-2,-1.9) circle (3pt);
\filldraw [black] (-0.8,-2.9) circle (3pt);

\node at (0,-4) {\textbf{(C)}};

\draw[very thick] (-0.7,0) -- (0.7,1.3) -- (-0.7,1.3) -- (0.7,0) -- (-0.7,0) -- (-1.2,-1) -- (0,-2) -- (1.2,-1) -- (0.7,0);
\draw[very thick] (-1.2,-1) .. controls (-1,0.5) .. (-0.7,1.3);
\draw[very thick] (1.2,-1) .. controls (1,0.5) .. (0.7,1.3);
\draw[very thick] (-2,-1.9) -- (-0.8,-2.9);
\draw[very thick] (-2,-1.9) -- (0,-2);
\draw[very thick] (-0.8,-2.9) -- (-1.2,-1);
\end{tikzpicture}
\hskip 0.2cm
\begin{tikzpicture}[scale=0.6]
\filldraw [black] (-0.7,0) circle (3pt);
\filldraw [black] (0.7,0) circle (3pt);
\filldraw [black] (-0.7,1.3) circle (3pt);
\filldraw [black] (0.7,1.3) circle (3pt);
\filldraw [black] (-1.2,-1) circle (3pt);
\filldraw [black] (1.2,-1) circle (3pt);
\filldraw [black] (0,-2) circle (3pt);
\filldraw [black] (-2,-1.9) circle (3pt);
\filldraw [black] (-0.8,-2.9) circle (3pt);

\node at (0,-4) {\textbf{(D)}};

\draw[very thick] (-0.7,0) -- (0.7,1.3) -- (-0.7,1.3) -- (0.7,0) -- (-0.7,0) -- (-1.2,-1) -- (0,-2) -- (1.2,-1) -- (0.7,0);
\draw[very thick] (-1.2,-1) .. controls (-1,0.5) .. (-0.7,1.3);
\draw[very thick] (1.2,-1) .. controls (1,0.5) .. (0.7,1.3);
\draw[very thick] (-2,-1.9) -- (-0.8,-2.9);
\draw[very thick] (-2,-1.9) -- (0,-2);
\draw[very thick] (-0.8,-2.9) -- (-1.2,-1);
\draw[very thick] (-0.8,-2.9) .. controls (0.4,-2.4) .. (1.2,-1);
\end{tikzpicture}
\hskip 0.2cm
\begin{tikzpicture}[scale=0.6]
\filldraw [black] (-0.7,0) circle (3pt);
\filldraw [black] (0.7,0) circle (3pt);
\filldraw [black] (-0.7,1.3) circle (3pt);
\filldraw [black] (0.7,1.3) circle (3pt);
\filldraw [black] (-1.2,-1) circle (3pt);
\filldraw [black] (1.2,-1) circle (3pt);
\filldraw [black] (0,-2) circle (3pt);
\filldraw [black] (-2,-1.9) circle (3pt);
\filldraw [black] (-0.8,-2.9) circle (3pt);

\node at (0,-4) {\textbf{(E)}};

\draw[very thick] (-0.7,0) -- (0.7,1.3) -- (-0.7,1.3) -- (0.7,0) -- (-0.7,0) -- (-1.2,-1) -- (0,-2) -- (1.2,-1) -- (0.7,0);
\draw[very thick] (-1.2,-1) .. controls (-1,0.5) .. (-0.7,1.3);
\draw[very thick] (-2,-1.9) -- (-0.8,-2.9);
\draw[very thick] (-2,-1.9) -- (0,-2);
\draw[very thick] (-0.8,-2.9) -- (-1.2,-1);
\draw[very thick] (-0.8,-2.9) .. controls (0.4,-2.4) .. (1.2,-1);
\end{tikzpicture}
\hskip 0.2cm
\begin{tikzpicture}[scale=0.6]
\filldraw [black] (-0.7,0) circle (3pt);
\filldraw [black] (0.7,0) circle (3pt);
\filldraw [black] (-0.7,1.3) circle (3pt);
\filldraw [black] (0.7,1.3) circle (3pt);
\filldraw [black] (-1.2,-1) circle (3pt);
\filldraw [black] (1.2,-1) circle (3pt);
\filldraw [black] (0,-2) circle (3pt);
\filldraw [black] (-2,-1.9) circle (3pt);
\filldraw [black] (-0.8,-2.9) circle (3pt);

\node at (0,-4) {\textbf{(F)}};

\draw[very thick] (-0.7,0) -- (0.7,1.3) -- (-0.7,1.3) -- (0.7,0) -- (-0.7,0) -- (-1.2,-1) -- (0,-2) -- (1.2,-1) -- (0.7,0);
\draw[very thick] (1.2,-1) .. controls (1,0.5) .. (0.7,1.3);
\draw[very thick] (-2,-1.9) -- (-0.8,-2.9);
\draw[very thick] (-2,-1.9) -- (0,-2);
\draw[very thick] (-0.8,-2.9) -- (-1.2,-1);
\end{tikzpicture}
\caption{Subgraphs of $G$ induced by the vertices of $C$ and two edges from different components of $G\backslash C$.} \label{fig:InducedTwoEdgesSubgraph}
\end{figure}
\end{theorem}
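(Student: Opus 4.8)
The plan is to reduce the statement to a finite combinatorial enumeration governed by the preceding lemmas, and then to weed out the non-equimatchable candidates with Lemma~\ref{independent-set}. Write $e_1=xy$ and $e_2=wz$ for the two edges, taken from different components of $G\backslash C$, and let $G'=G[V(C)\cup\{x,y,w,z\}]$. Since $e_1$ and $e_2$ lie in different components, the subgraph they induce in $G\backslash C$ is a $2K_2$, which is randomly matchable; hence by Lemma~\ref{InducedEFCSubgraph} the graph $G'$ is itself a connected triangle-free EFC-graph having $C$ as its induced odd cycle and $G'\backslash C\cong 2K_2$. By Corollary~\ref{c5c7}, $C$ is either $C_5$ or $C_7$.

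First I would record how each edge sits on $C$ \emph{individually}. For a single edge $e_i$, the graph $G[V(C)\cup V(e_i)]$ is again an EFC-graph by Lemma~\ref{InducedEFCSubgraph} (now with the randomly matchable subgraph $K_2$), so Theorem~\ref{C-OneEdge-Structure} applies: $e_i$ attaches to $C$ exactly as in one of the graphs of Figure~\ref{fig:InducedOneEdgeSubgraph}, and these adjacencies are inherited unchanged in $G'$. By Lemma~\ref{EdgeEdge} and Corollary~\ref{AtLeastTwoFour}, the edges of $C$ forming a $C_4$ with $e_i$ constitute a nonempty \emph{consecutive} block $B_i$ of edges of $C$; reading off Figure~\ref{fig:InducedOneEdgeSubgraph}, $|B_i|\in\{1,2,3\}$ when $C=C_5$ (patterns (a),(b),(c)) and $|B_i|\in\{3,4\}$ when $C=C_7$ (patterns (d),(e)). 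Crucially, Lemma~\ref{DisjointEdges1} forbids $e_1$ and $e_2$ from forming a $C_4$ with a common edge of $C$, so $B_1\cap B_2=\emptyset$.

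The problem is now purely combinatorial: place two disjoint consecutive arcs $B_1,B_2$, of the admissible sizes above, on the cyclic sequence of $5$ (resp.\ $7$) edges of $C$, and classify the resulting graphs up to the dihedral symmetry of $C$ and the swap $e_1\leftrightarrow e_2$. For $C_5$ the feasible unordered size pairs are $(1,1),(1,2),(1,3),(2,2),(2,3)$, since $|B_1|+|B_2|\le 5$ and $(3,3)$ is impossible; for $C_7$ they are $(3,3),(3,4)$, since $|B_1|+|B_2|\le 7$ and $(4,4)$ is impossible. For each admissible pair I would list the inequivalent relative placements of the two arcs. Most sizes force a unique placement because the arcs nearly fill the cycle; the only genuine branching is the two placements of a $1$-arc relative to a $2$-arc on $C_5$ (adjacent to it, or separated from it) and the two placements of two $1$-arcs on $C_5$ (sharing a vertex, or not).

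Finally I would apply Lemma~\ref{independent-set} as the decisive filter: a candidate graph fails to be equimatchable exactly when it has an independent set of three vertices whose deletion leaves a perfect matching. For each configuration to be discarded I would exhibit such a triple explicitly --- typically one endpoint of $e_1$, one endpoint of $e_2$, and a suitable vertex of $C$ --- together with the complementary perfect matching, assembled from the edges joining the remaining endpoints of $e_1,e_2$ to $C$, a perfect matching of the even-length path that $C$ leaves behind, and, where needed, the edge $e_1$ or $e_2$ itself. This eliminates the size pair $(1,1)$ in both placements and the ``adjacent'' placement of $(1,2)$ on $C_5$; the configurations that survive are precisely the four $C_5$-graphs (C), (D), (E), (F) and the two $C_7$-graphs (A), (B) of Figure~\ref{fig:InducedTwoEdgesSubgraph}, which proves the claim. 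I expect the main obstacle to be exactly this last step together with the symmetry bookkeeping: identifying isomorphic placements, overlooking none, and, in each eliminated case, choosing the independent triple so that the complement genuinely has a perfect matching rather than an isolated vertex --- the configurations containing a $1$-arc being the delicate ones, since there the cycle is only loosely constrained and a careless choice of triple leaves an exposed vertex.
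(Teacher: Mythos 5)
Your proposal is correct and follows essentially the same route as the paper's proof: reduce to the induced subgraph $G'$ being a triangle-free EFC-graph via Lemma~\ref{InducedEFCSubgraph}, constrain each edge's attachment by Theorem~\ref{C-OneEdge-Structure} and Corollary~\ref{AtLeastTwoFour}, force disjointness of the attachment arcs by Lemma~\ref{DisjointEdges1}, and then eliminate the surviving non-equimatchable configurations with Lemma~\ref{independent-set}. Your arc-size enumeration ($(1,1)$ and adjacent $(1,2)$ eliminated on $C_5$; $(3,3),(3,4)$ surviving on $C_7$) is just a repackaging of the paper's case split into ``common nonconsecutive neighbor'' versus ``no common neighbor,'' and identifies exactly the same six graphs (A)--(F).
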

\begin{proof}

Let $G$ be a connected  triangle-free EFC-graph and $C$ be an induced odd cycle in $G$. Recall that $C$ is a cycle of length five or seven. Let $G'$ be a subgraph of $G$ induced by the vertices of $C$ and two edges, say $xy$ and $wz$, from different components of $G\backslash C$. By Lemma \ref{InducedEFCSubgraph}, $G'$ is a connected  triangle-free EFC-graph with induced odd cycle $C$ such that $G'\backslash C$ is isomorphic to $2K_2$. Let $S_{xy}$ and $S_{wz}$ denote the set of vertices of $C$ that are adjacent to the vertices of $xy$ and $wz$, respectively. Note here that, by Corollary  \ref{AtLeastTwoFour}, the vertices in each of $S_{xy}$ and $S_{wz}$ are consecutive in $C$. Remark that, by Theorem \ref{C-OneEdge-Structure}, both subgraphs of $G'$ induced by the vertices of $C$ and each of $xy$ and $wz$ are also connected triangle-free EFC-graphs given in Figure \ref{fig:InducedOneEdgeSubgraph}.  That is, we only need to figure out how two independent edges $xy$ and $wz$ can form a graph of Figure \ref{fig:InducedOneEdgeSubgraph} with the same induced odd cycle. Furthermore, by Lemma \ref{DisjointEdges1}, an edge of $C$ cannot induce a $C_4$ with both edges $xy$ and $wz$, i.e., the vertices of $xy$ and $wz$ have no common consecutive neighbors in $C$. We then need to examine two disjoint and complementary cases: the vertices of $xy$ and $wz$ have common nonconsecutive neighbors in $C$ and have no common neighbor in $C$.

We first consider the case where the vertices of $xy$ and $wz$ have at least one common neighbor in $C$. For a cycle $C$ of length seven, by Theorem \ref{C-OneEdge-Structure}, each of $S_{xy}$ and $S_{wz}$  contains four or five vertices, see the graphs $(d)$ and $(e)$ in Figure \ref{fig:InducedOneEdgeSubgraph}. By  Lemma \ref{DisjointEdges1} and the cardinality of $C$, one can observe that at least one of the sets $S_{xy}$ and $S_{wz}$  have size four whereas the other may have size four or five. Hence, we deduce that $G'$ is isomorphic to one of the graphs $(A)$ and $(B)$ given in Figure \ref{fig:InducedTwoEdgesSubgraph}, where $C$ is a cycle of length seven. For a cycle $C$ of length five, by Theorem \ref{C-OneEdge-Structure}, each of $S_{xy}$ and $S_{wz}$  contains at least two and at most four vertices, see the graphs $(a)$, $(b)$ and $(c)$ in Figure \ref{fig:InducedOneEdgeSubgraph}. Then, it is easy to verify that the only possible ordered pairs $(|S_{xy}|,|S_{wz}|)$ are $(2,4)$, $(3,3)$ and $(3,4)$. Otherwise, in the other pairs either there exists an independent set $I$ of size three such that $G'\backslash I$ has a perfect matching, contradicting with Lemma \ref{independent-set}, or there exists an edge of $C$ forming a $C_4$ with two different edges in $G'\backslash C$, contradicting Lemma \ref{DisjointEdges1}. (Note that this later corresponds to the cardinalities (4,4) for respectively $S_{xy}$ and $S_{wz}$.) Therefore, $G'$ is isomorphic to one of the graphs $(C)$, $(D)$ and $(E)$ given in Figure \ref{fig:InducedTwoEdgesSubgraph}, where $C$ is a cycle of length five.

We now consider the case where the vertices of $xy$ and $wz$ have no common neighbor in $C$. For a cycle $C$ of length seven, by Corollary \ref{AtLeastTwoFour}, each of $S_{xy}$ and $S_{wz}$  contains at least four vertices. Then, the vertices of $xy$ and $wz$ must be adjacent to at least eight vertices in $C$, contradicting with the cardinality of $C$. Therefore, in this case there is no triangle-free EFC-graph where $C$ is a cycle of length seven. On the other hand, for a cycle $C$ of length five, each of $S_{xy}$ and $S_{wz}$  contains at least two vertices by Corollary \ref{AtLeastTwoFour}, see the graphs $(a)$, $(b)$ and $(c)$ in Figure \ref{fig:InducedOneEdgeSubgraph}. By the cardinality of $C$, the possible ordered pairs $(|S_{xy}|,|S_{wz}|)$ are $(2,2)$ and $(2,3)$. However, in the case with ordered pair $(2,2)$, one can easily verify that there exists an independent set $I$ of size three such that $G'\backslash I$ has a perfect matching, contradicting with Lemma \ref{independent-set}. Therefore, the only possible ordered pair is $(|S_{xy}|,|S_{wz}|)$ is $(2,3)$. Hence,  $G'$ is isomorphic to the graph $(F)$ given in Figure \ref{fig:InducedTwoEdgesSubgraph}, where $C$ is a cycle of length five.
\hfill\(\qed\)
\end{proof}

\begin{corollary}\label{covering-C}
Let $G$ be a connected triangle-free EFC-graph and $C$ be an induced odd cycle in $G$. The vertices of two edges from different components of $G\backslash C$ dominate the vertices of $C$. That is, for any two edges inducing a $2K_2$ in $G\backslash C$, each vertex of $C$ is adjacent to at least one vertex of these edges.
\end{corollary}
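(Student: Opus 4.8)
The plan is to read this off the classification already obtained in Theorem \ref{C-TwoEdge-Structure}. Let $xy$ and $wz$ be two edges from different components of $G\backslash C$; since distinct components of $G\backslash C$ are joined by no edge of $G$, these two edges induce a $2K_2$, and by Lemma \ref{InducedEFCSubgraph} the subgraph $G'$ of $G$ induced by $V(C)\cup\{x,y,w,z\}$ is itself a connected triangle-free EFC-graph with $G'\backslash C\cong 2K_2$. Domination of $C$ by $\{x,y,w,z\}$ is a statement about the edges present in $G'$ alone (passing back from $G'$ to $G$ can only add edges, never remove a neighbour we already have), so it suffices to prove the claim for $G'$.

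By Theorem \ref{C-TwoEdge-Structure}, $G'$ is isomorphic to one of the graphs $(A)$--$(F)$ of Figure \ref{fig:InducedTwoEdgesSubgraph}. First I would simply inspect these six graphs: in every one of them each vertex of $C$ is incident to at least one of the four vertices $x,y,w,z$, which is precisely the asserted domination. To make the inspection uniform rather than picture-by-picture, write $S_{xy}$ and $S_{wz}$ for the sets of vertices of $C$ adjacent to an endpoint of $xy$ and of $wz$, respectively; the goal is $S_{xy}\cup S_{wz}=V(C)$. By Corollary \ref{AtLeastTwoFour}, each of $S_{xy}$, $S_{wz}$ is a block of consecutive vertices of $C$, of size at least two when $C$ has length five and at least four when $C$ has length seven; by Lemma \ref{DisjointEdges1} the two blocks share no edge of $C$, i.e. no two consecutive vertices. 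A short count on the cycle then forces the union to exhaust $V(C)$: an arc of $\ell$ consecutive vertices contains $\ell-1$ edges of $C$, so for length seven two arcs with disjoint edge sets use at least $3+3$ of the seven edges and are pinned to positions that already cover all seven vertices, and similarly for the admissible length pairs when $C$ has length five.

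The one point that needs care --- and the reason I would cite Theorem \ref{C-TwoEdge-Structure} rather than argue purely from arc lengths --- is the borderline case $(|S_{xy}|,|S_{wz}|)=(2,2)$ for a cycle $C$ of length five. Here the two arcs have total length only $4<5$, so a vertex of $C$ could remain uncovered, and the arc/no-common-edge constraints alone do not exclude it. What rules it out is the EFC property itself: in such a configuration, with $v$ the uncovered vertex of $C$, the set $I$ consisting of $v$ together with one suitably chosen endpoint from each of $xy$ and $wz$ is independent, and $G\backslash I$ admits a perfect matching --- matching the two remaining endpoints into $C$, pairing up the rest of $C$ along cycle edges, and using perfect matchings of the randomly matchable components of $G\backslash C$ --- contradicting Lemma \ref{independent-set}. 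This is exactly the exclusion carried out inside the proof of Theorem \ref{C-TwoEdge-Structure}, so invoking that theorem disposes of the case cleanly. I expect this small-arc exclusion to be the only real obstacle; everything else is a direct reading of Figure \ref{fig:InducedTwoEdgesSubgraph}.
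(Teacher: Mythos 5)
Your proposal is correct and matches the paper's approach: the paper states this result as a direct corollary of Theorem \ref{C-TwoEdge-Structure}, to be read off by inspecting the graphs $(A)$--$(F)$ of Figure \ref{fig:InducedTwoEdgesSubgraph}, which is exactly what you do (after the same reduction via Lemma \ref{InducedEFCSubgraph}). Your additional arc-counting remark and the observation that the $(2,2)$ case for $C_5$ is the only configuration not excluded by arc lengths alone---being ruled out instead by Lemma \ref{independent-set} inside the proof of Theorem \ref{C-TwoEdge-Structure}---is an accurate account of where the content of that theorem is genuinely needed.
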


We conclude this section with the following result, which specifies the maximum number of components in $G\backslash C$. One can easily verify this result by combining Theorems \ref{C-OneEdge-Structure} and \ref{C-TwoEdge-Structure}.

\begin{corollary}\label{TwoComponents}
Let $G$ be a connected triangle-free EFC-graph and $C$ be an induced odd cycle in $G$. Then the subgraph $G\backslash C$ of $G$ has at most two components.
\end{corollary}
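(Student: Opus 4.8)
The plan is to argue by contradiction, assuming $G\backslash C$ has at least three components and exploiting the domination and structure results already proved. By Theorem \ref{KnnProof} every component of $G\backslash C$ is a $K_{n,n}$ with $n\geq 1$, hence contains an edge, so I would fix edges $e_1,e_2,e_3$ lying in three pairwise distinct components and let $S_i\subseteq V(C)$ denote the set of vertices of $C$ adjacent to an endpoint of $e_i$. By Corollary \ref{AtLeastTwoFour} each $S_i$ is an arc (a set of consecutive vertices) of $C$, so its complement $A_i:=V(C)\setminus S_i$ is again an arc; moreover $A_i\neq\emptyset$ since $|S_i|\leq 4<5$ when $C=C_5$ and $|S_i|\leq 5<7$ when $C=C_7$. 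The first step is to record, via Corollary \ref{covering-C}, that any two edges from different components dominate $C$, i.e. $S_i\cup S_j=V(C)$; equivalently $A_i\cap A_j=\emptyset$. Thus $A_1,A_2,A_3$ are three pairwise disjoint nonempty arcs of $C$, and the rest is a counting argument on their sizes, split according to Corollary \ref{c5c7}.

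For $C=C_7$ the counting closes immediately. By Theorem \ref{C-TwoEdge-Structure} (graphs $(A)$ and $(B)$) the unordered pair of sizes of any two edges from different components is $\{4,4\}$ or $\{4,5\}$, so no two of $e_1,e_2,e_3$ can simultaneously have $|S_i|=5$. Hence at least two of them satisfy $|S_i|=4$, i.e. at least two of the $A_i$ have size $3$, while the third has size $\geq 2$. Pairwise disjointness then yields $|A_1|+|A_2|+|A_3|\geq 3+3+2=8>7=|V(C)|$, which is impossible.

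The length-five case is where the real work lies, since disjointness alone does not produce a contradiction; this is the main obstacle. Here Theorem \ref{C-TwoEdge-Structure} (graphs $(C)$--$(F)$) gives $|S_i|\in\{2,3,4\}$, hence $|A_i|\in\{1,2,3\}$, and it rules out the size-pairs $\{|S_i|,|S_j|\}=\{4,4\}$ and $\{2,2\}$, that is, the complement-pairs $\{|A_i|,|A_j|\}=\{1,1\}$ and $\{3,3\}$. I would then enumerate the multisets in $\{1,2,3\}^3$ that avoid both forbidden pairs and have total at most $5$ (the bound forced by pairwise disjointness in a $5$-cycle); the only survivor is $(1,2,2)$. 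Since $1+2+2=5$, the arcs $A_1,A_2,A_3$ must partition $V(C)$, so a block $A_j$ of size $2$ consists of two consecutive vertices of $C$, and disjointness gives $A_j\subseteq S_i\cap S_k$ for the remaining indices $\{i,k\}$. Consequently $S_i$ and $S_k$ share two consecutive vertices of $C$, and by the alternating neighborhood structure of Lemma \ref{EdgeEdge} this forces $e_i$ and $e_k$ to induce a $C_4$ with the same edge of $C$, contradicting Lemma \ref{DisjointEdges1}. This contradiction completes the proof, so $G\backslash C$ has at most two components.
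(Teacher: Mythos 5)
Your proof is correct and takes essentially the route the paper intends: the paper's entire justification for this corollary is the sentence that it can be ``easily verified by combining Theorems \ref{C-OneEdge-Structure} and \ref{C-TwoEdge-Structure},'' and your argument is exactly that verification, carried out via the arc structure of Corollary \ref{AtLeastTwoFour}, the domination property of Corollary \ref{covering-C}, the admissible size pairs from Theorem \ref{C-TwoEdge-Structure}, and Lemma \ref{DisjointEdges1}. The details the paper leaves implicit, namely the disjoint-arc counting and the triangle-freeness alternation argument that eliminates the $(1,2,2)$ case for $C_5$, are supplied correctly in your write-up.
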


\section{Triangle-Free EFC-Graph Families}
The goal in this section is to obtain the families of triangle-free EFC-graphs. Recall that $G$ is a connected triangle-free EFC-graph with an induced odd cycle $C$ of length five or seven.
In the case where $G\backslash C$ is empty, $G$ is trivially $C_5$ or $C_7$.
Here, we analyze the non-trivial cases; that is, $G\backslash C$ is nonempty.
The number of components of $G \backslash C$ is denoted by $c(G\backslash C)$. By Corollary \ref{TwoComponents}, $G\backslash C$ contains at most two components; that is, $c(G\backslash C) \leq 2$. Theorem \ref{KnnProof}  states that every component of $G\backslash C$ is a randomly matchable triangle-free graph.
By Lemma \ref{InducedEFCSubgraph}, the vertices of $C$ and a component of $G\backslash C$ induce a connected triangle-free EFC-subgraph of $G$.
In Section \ref{oneComp},  we characterize connected triangle-free EFC graphs remaining connected after the removal of an induced odd cycle. In Section \ref{twoComp},  we provide a characterization of connected triangle-free EFC-graphs containing exactly two components after the removal of an induced odd cycle by using the characterization in Section \ref{oneComp}.

Let us now provide the definition of twin-free graphs in order to give explicit descriptions of  triangle-free EFC-graph families. Two vertices $u$ and $v$ are called \emph{twins} if they have the same set of neighbors; that is, $N(u) = N(v)$. In a set of twins, each pair of vertices are twins. A graph is called \emph{twin-free} if it does not have any twins.
For a given graph, a twin-free graph can be obtained by contracting each set $X$ of twins in the graph into a single vertex with multiplicity $|X|$.
In our graph family descriptions, we use the following notation: Let $H$ be a graph on $k$ vertices $v_1$, $v_2$,..., $v_k$ and let $m_1$, $m_2$,..., $m_k$ be nonnegative integers denoting the \emph{multiplicities} of these vertices, respectively. Then $H(m_1, m_2, ... ,m_k)$ denotes the graph obtained from $H$ by iteratively replacing each vertex $v_i$ with an independent set of $m_i$ vertices, each of which having the same neighborhood as $v_i$. Notice that all vertices in the independent set replacing $v_i$ are indeed twins. Clearly, $H = H(1, ... ,1)$ since all multiplicities are $1$.
To describe the structure of triangle-free EFC-graph families, we will use the twin-free graph $G^*$ given in Figure \ref{fig:MainTheorem}.

\begin{figure}[hbtp]
\centering

\begin{tikzpicture}[scale=1.5]

\filldraw [black] (-1,0) circle (1.5pt);
\filldraw [black] (1,0) circle (1.5pt);
\filldraw [black] (-1,-2) circle (1.5pt);
\filldraw [black] (1,-1) circle (1.5pt);
\filldraw [black] (-1,-1) circle (1.5pt);
\filldraw [black] (0,-2) circle (1.5pt);
\filldraw [black] (-2,0) circle (1.5pt);
\filldraw [black] (-2,-2) circle (1.5pt);
\filldraw [black] (2,-2) circle (1.5pt);
\filldraw [black] (2,0) circle (1.5pt);
\filldraw [black] (1,-2) circle (1.5pt);

\node[above] at (-2,0) {\footnotesize{$u_{11}$}};
\node[below] at (-2,-2) {\footnotesize{$u_{10}$}};
\node[below] at (-1,-2) {\footnotesize{$u_9$}};
\node[above] at (-1,0) {\footnotesize{$u_2$}};
\node[right] at (-1,-1) {\footnotesize{$u_1$}};
\node[below] at (0,-2) {\footnotesize{$u_5$}};
\node[above] at (1,0) {\footnotesize{$u_3$}};
\node[left] at (1,-1) {\footnotesize{$u_4$}};
\node[below] at (1,-2) {\footnotesize{$u_{6}$}};
\node[above] at (2,0) {\footnotesize{$u_{8}$}};
\node[below] at (2,-2) {\footnotesize{$u_{7}$}};

\draw[very thick] (-2,0) -- (2,0) -- (2,-2) -- (-2,-2) -- (-2,0);
\draw[very thick] (1,0) -- (1,-1);
\draw[very thick] (-1,0) -- (-1,-1);
\draw[ultra thick] (-2,-2) -- (-1,-1);
\draw[ultra thick] (2,-2) -- (1,-1);
\draw[ultra thick] (0,-2) -- (-1,-1);
\draw[ultra thick] (0,-2) -- (1,-1);
\end{tikzpicture}
\caption{Twin-free graph $G^*$ with vertices $u_1, u_2, u_3, u_4, u_5, u_6, u_7, u_8, u_9, u_{10}, u_{11}$.} \label{fig:MainTheorem}
\end{figure}

\subsection{Triangle-Free EFC Graphs with $c(G\backslash C)= 1$}\label{oneComp}

In this subsection, we study connected triangle-free EFC-graphs remaining connected after the removal of an induced odd cycle. We will show that the class of such EFC-graphs corresponds to the following graph class:

\begin{definition}\label{Fgraphs}
The graph class $\mathcal{F}$ is the union of the following graph families:
\begin{itemize}
  \item $\mathcal{F}_{11}  = \{G^*(1,1,1,1,1,n,n,0,0,0,0) \; | \; n\geq 1\}$
  \item $\mathcal{F}_{12}  = \{G^*(1,1,1,0,1,n+1,n+1,1,0,0,0) \;  | \;  n\geq 1\}$
  \item $\mathcal{F}_{21}  = \{G^*(1,1,1,n-r-s+1,1,r,n,s,0,0,0) \;  | \;  n\geq 1, n-1\geq r\geq 1, n-1\geq s \geq 1, n\geq r+s\}$
  \item $\mathcal{F}_{22}  = \{G^*(1,1,1,n-r-s,1,r+1,n+1,s+1,0,0,0) \;  | \;  n\geq 1, n-1\geq r\geq 1, n-1\geq s \geq 1, n \geq r+s\}$
  \item $\mathcal{F}_{3}   = \{G^*(1,1,r+1,s+1,1,0,n-s,n-r,0,0,0) \;  | \;  n\geq1, n-1\geq r\geq 1, n-1\geq s \geq 1\}$
  \item $\mathcal{F}_{4}   = \{G^*(r+1,n+1,s+1,1,1,0,0,0,0,0,n-r-s) \;  | \;  n\geq 1, n-1\geq r\geq 1, n-1\geq s \geq 1, n \geq r+s\}$
\end{itemize}
where $G^*$ is the twin-free graph depicted in Figure \ref{fig:MainTheorem}.
\end{definition}

The next result proves one direction as follows:
\begin{proposition}\label{easyDirection}
If $G \in \mathcal{F}$ where   $\mathcal{F} = \mathcal{F}_{11} \cup \mathcal{F}_{12} \cup \mathcal{F}_{21} \cup \mathcal{F}_{22} \cup \mathcal{F}_{3} \cup \mathcal{F}_{4}$, then $G$ is  a connected triangle-free EFC-graph remaining connected after the removal of an induced odd cycle.
\end{proposition}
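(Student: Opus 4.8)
The plan is to verify, for an arbitrary $G$ in each of the six families of Definition~\ref{Fgraphs}, the five properties bundled into the statement: that $G$ is connected, triangle-free, factor-critical, equimatchable, and that it remains connected after the deletion of an induced odd cycle. The organizing principle is the \emph{twin structure} coming from the modular expansion $H(m_1,\dots,m_k)$: the $m_i$ copies replacing a vertex $v_i$ form an independent set sharing a common neighborhood, so (i) no triangle can use two copies of the same class, and (ii) every permutation of the copies within a class is an automorphism of $G$. Consequence (i) reduces triangle-freeness to a check on the templates $G_1$ and $G_2$ of Figure~\ref{fig:MainTheorem} restricted to the support (the classes with nonzero multiplicity) of each family; one reads off directly that both templates are triangle-free, and by (i) so is every member of the families. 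Consequence (ii) lets me test factor-criticality and the independent-set condition on one representative per class rather than on every vertex.

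First I would dispatch the structural items. Connectedness holds because on each support the template is connected, and expanding a vertex of positive degree into twins (which inherit its nonempty neighborhood) cannot disconnect the graph. For the induced odd cycle I would exhibit the explicit $C$ living on the single-multiplicity vertices: for the $G_1$-families ($\mathcal{F}_{12},\mathcal{F}_{22}$) this is the $7$-cycle carried by $G_1$, and for the $G_2$-families it is the $C_5$ or $C_7$ visible in $G_2$. In every case the parameters are arranged so that $G\backslash C$ is a complete bipartite graph $K_{m,m}$ (the ``heart'' formed by two fully joined twin classes), which is connected throughout the stated parameter ranges and consistent with Theorem~\ref{KnnProof}; the degenerate $m=0$ case is exactly the base graph $C_5$ or $C_7$. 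Factor-criticality I would establish by an explicit near-perfect matching: match the heart $K_{m,m}$ almost perfectly, use the cycle edges to absorb the core vertices, and place the single exposed vertex inside any prescribed class. Here the seemingly opaque linear entries in the multiplicity vectors (for instance the $n+1-r-s$ appearing in $\mathcal{F}_{21}$ and $\mathcal{F}_{4}$) are precisely the balance conditions that let the independent classes pair off against their neighborhoods, so the scheme goes through for all admissible $n,r,s$.

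The main work, and the expected obstacle, is equimatchability, which by Lemma~\ref{independent-set} amounts to showing that \emph{no} independent set $I$ with $|I|=3$ leaves a perfect matching in $G\backslash I$. Since $G$ is factor-critical, $|V(G)|$ is odd and $G\backslash I$ has even order, so I must produce a genuine obstruction rather than rely on a parity count. The strategy is to classify the triples $I$ up to twin symmetry by how they distribute over the classes — by consequence (ii) this leaves only finitely many ``types'' per family — and, for each type, to exhibit a Tutte barrier, i.e.\ a set $S$ with $o\big((G\backslash I)\backslash S\big)>|S|$. The barrier typically arises from one of two rigid phenomena: removing independent vertices from the complete bipartite heart unbalances its two sides so that Hall's condition fails, or removing vertices from the odd cycle splits it into an odd residual path. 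The multiplicity constraints guarantee that the relevant side of the heart (or the residual path) stays large enough that the deficiency cannot be repaired for any admissible parameters. Carrying this case analysis through the two templates — $G_1$ for $\mathcal{F}_{12},\mathcal{F}_{22}$ and $G_2$ for $\mathcal{F}_{11},\mathcal{F}_{21},\mathcal{F}_{3},\mathcal{F}_{4}$ — while using the symmetry between the two heart classes to halve the cases, is the laborious but routine core of the proof; everything else is bookkeeping.
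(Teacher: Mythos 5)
Your proposal is correct and takes essentially the same route as the paper: the paper's entire proof consists of invoking Lemma~\ref{independent-set} (no independent triple of vertices leaves a graph with a perfect matching) for equimatchability and declaring the remaining properties ``easily verifiable.'' Your twin-symmetry reduction, explicit $K_{m,m}$ hearts, and Hall/Tutte obstructions are just a more organized execution of that same check---in fact slightly more careful than the paper's, since you explicitly verify factor-criticality, which is a hypothesis of Lemma~\ref{independent-set} that the paper's one-line argument glosses over.
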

\begin{proof}
For any $G \in \mathcal{F}$, one can check that there is no independent set $I$ of size three in $G$ such that $G \backslash I$ has a perfect matching. Hence, $G$ is an EFC-graph by Lemma \ref{independent-set}. All the other properties are easily verifiable.
\hfill\(\qed\)
\end{proof}

To prove the other direction, we suppose that $G\backslash C$ is connected; that is, $c(G\backslash C)= 1$. By Theorem \ref{KnnProof}, $G\backslash C$  is isomorphic to $K_{n,n}$ for some positive integer $n$. By Corollary \ref{AtLeastTwoFour}, the number of vertices in $C$ that are adjacent to the vertices of $G\backslash C$ is at least two or four, where $C$ is a cycle of length five or seven, respectively. Besides, all such vertices of $C$ are consecutive in either case of $C$. Hence, we observe that the number of vertices of $C$ that are nonadjacent to the vertices of $G\backslash C$ is at most three, and all such vertices are consecutive in either case of $C$. This observation allows us to analyze connected triangle-free EFC-graphs with ${c(G\backslash C)= 1}$ in the disjoint and complementary cases specified  in Propositions \ref{threeVertices}, \ref{twoVertices}, \ref{oneVertex}, \ref{zeroVertex}, and Lemma \ref{noc5c7}.

\begin{proposition}\label{threeVertices}
Let $G$ be a connected triangle-free EFC-graph having an induced odd cycle $C$ such that ${c(G\backslash C)= 1}$. If $C$ has exactly three  vertices that are nonadjacent to the vertices of $G\backslash C$, then $G \in \mathcal{F}_{11} \cup \mathcal{F}_{12}$. In particular, $G \in \mathcal{F}_{11}$ if $C$ is a cycle of length five and $G \in \mathcal{F}_{12}$ if $C$ is a cycle of length seven.
\end{proposition}
\begin{proof}
Let $G$ be a connected triangle-free EFC-graph having an induced odd cycle $C$ such that ${c(G\backslash C)= 1}$.
By Corollary \ref{c5c7}, $C$ is a cycle of length five or seven. By Theorem \ref{KnnProof}, $G\backslash C$ is isomorphic to $K_{n,n}$ for some positive integer $n$. Let the partite sets of $G\backslash C$ be labeled by $A$ and $B$. Let the consecutive  vertices of $C$ be labeled by $v_1$, $v_2$, $v_3$, $v_4$, $v_5$, if necessary $v_6$ and $v_7$, depending on the length of $C$. Suppose that three vertices of $C$ are nonadjacent to the vertices of $G\backslash C$. By Corollary \ref{AtLeastTwoFour}, these vertices of $C$ are consecutive in either case of $C$. W.l.o.g., let $v_1$, $v_2$, and $v_3$ be nonadjacent to the vertices of $G\backslash C$.

In the case where $C$ is a cycle of length five, by Theorem \ref{C-OneEdge-Structure}, the vertices $v_4$, $v_5$ and the endpoints of each edge in $G\backslash C$ induce the subgraph in Figure \ref{fig:InducedOneEdgeSubgraph} (a).
It follows that an endpoint of each edge in $G\backslash C$ is adjacent to $v_4$ and the other endpoint is adjacent to $v_5$.
Since $G$ is triangle-free, it follows that all vertices in a partite set of $G \backslash C$, say $A$, are adjacent to $v_4$ and all vertices in the other partite set $B$ are adjacent to $v_5$.
Then, the vertices $v_4$, $v_5$, and the vertices in $A$, $B$ induce a complete bipartite graph $K_{n+1,n+1}$.
Notice that any pair of vertices in $A$ are twins and any pair of vertices in $B$ are twins.
By representing the vertices $v_1$, $v_2$, $v_3$, $v_4$ and $v_5$ in $C$  with the vertices $u_1$, $u_2$, $u_3$, $u_4$ and $u_5$, respectively, and by representing the vertices in $A$ and $B$ with the vertices $u_7$ and $u_6$, respectively, in the graph $G^*$ given in Figure \ref{fig:MainTheorem}, it is easy to verify that $G$ belongs to the graph family $\mathcal{F}_{11}$.

Similarly, in the case where $C$ is a cycle of length seven, by Theorem \ref{C-OneEdge-Structure}, the vertices $v_4$, $v_5$, $v_6$, $v_7$ and the endpoints of each edge in $G\backslash C$ induce the subgraph in Figure \ref{fig:InducedOneEdgeSubgraph} (d).
It follows that an endpoint of each edge in $G\backslash C$ is adjacent to both $v_4$ and $v_6$  and the other endpoint is adjacent to  both $v_5$ and $v_7$.
Since $G$ is triangle-free, it follows that all vertices in a partite set of $G \backslash C$, say $A$, are adjacent to $v_4$ and $v_6$ and all vertices in the other partite set $B$ are adjacent to $v_5$ and $v_7$.
Hence, the vertices $v_4$, $v_5$, $v_6$, $v_7$, and the vertices in $A$, $B$ induce a complete bipartite graph with a missing edge, $K_{n+2,n+2} - v_4v_{7}$.
Any pair of the vertices in $A$ or $B$ are twins. Note that the vertex $v_5$ and any vertex in $A$ are also twins, and  the vertex $v_6$ and any vertex in $B$ are also twins.
By representing the vertices $v_1$, $v_2$, $v_3$, $v_4$, $v_5$, $v_6$, and $v_7$ in $C$ with $u_1$, $u_2$, $u_3$, $u_8$, $u_7$, $u_6$, and $u_5$, respectively, in the graph $G^*$ given in Figure \ref{fig:MainTheorem}, we conclude that $G$ belongs to the graph family $\mathcal{F}_{12}$.
\hfill\(\qed\)
\end{proof}

\begin{proposition}\label{twoVertices}
Let $G$ be a connected triangle-free EFC-graph having an induced odd cycle $C$ such that ${c(G\backslash C)= 1}$. If $C$ has exactly two  vertices that are nonadjacent to the vertices of $G\backslash C$, then $G \in \mathcal{F}_{21} \cup \mathcal{F}_{22}$. In particular, $G \in \mathcal{F}_{21}$ if $C$ is a cycle of length five and $G \in \mathcal{F}_{22}$ if $C$ is a cycle of length seven.
\end{proposition}
\begin{proof}
Let $G$ be a connected triangle-free EFC-graph having an induced odd cycle $C$ such that ${c(G\backslash C)= 1}$.
By Corollary \ref{c5c7}, $C$ is a cycle of length five or seven. By Theorem \ref{KnnProof}, $G\backslash C$ is isomorphic to $K_{n,n}$ for some positive integer $n$. Let the partite sets of $G\backslash C$ be labeled by $A$ and $B$. Let the consecutive  vertices of $C$ be labeled by $v_1$, $v_2$, $v_3$, $v_4$, $v_5$, if necessary $v_6$ and $v_7$, depending on the length of $C$. Suppose that two vertices of $C$ are nonadjacent to the vertices of $G\backslash C$. By Corollary \ref{AtLeastTwoFour}, these vertices of $C$ are consecutive  in either case of $C$. W.l.o.g., let $v_1$ and $v_2$ be nonadjacent to the vertices of $G\backslash C$.

Let $C$ be a cycle on five vertices. By Theorem \ref{C-OneEdge-Structure},
the vertices $v_3$, $v_4$, $v_5$ and the endpoints of each edge in $G\backslash C$ induce the subgraph in Figure \ref{fig:InducedOneEdgeSubgraph} (a) or (b). It follows that an endpoint of each edge in $G\backslash C$ is adjacent to the vertex $v_4$ whereas the other endpoint is adjacent to at least one of $v_3$ and $v_5$. Particularly, if an endpoint of an edge in  $G\backslash C$ is adjacent to only one of $v_3$ and $v_5$, it is the graph in Figure \ref{fig:InducedOneEdgeSubgraph} (a); and if  an endpoint of an edge in  $G\backslash C$ is adjacent to both $v_3$ and $v_5$, it is the graph in Figure \ref{fig:InducedOneEdgeSubgraph} (b). Hence, the triangle-freeness of $G$ implies that all vertices in a partite set of $G \backslash C$, say $A$, are adjacent to $v_4$ and all vertices in the other partite set $B$ are adjacent to at least one of $v_3$ and $v_5$.
Note here that the vertex $v_4$ and the vertices in $A$ and $B$ induce a complete bipartite graph $K_{n+1,n}$;
and the vertex $v_4$ and any vertex in $B$ that is adjacent to both $v_3$ and $v_5$ are twins.
Therefore, by representing the vertices $v_1$, $v_2$, $v_3$, $v_4$, and $v_5$ with the vertices $u_1$, $u_2$, $u_3$, $u_4$, and $u_5$, respectively, and the vertices in $A$ with the vertex $u_7$ and by representing the vertices in $B$ which are adjacent to only $v_3$ with the vertex $u_8$ and are adjacent to only $v_5$ with the vertex $u_6$ in the graph $G^*$ given in Figure \ref{fig:MainTheorem}, one can observe that $G$ belongs to the graph family $\mathcal{F}_{21}$.

Now let $C$ be a cycle on seven vertices. By Theorem \ref{C-OneEdge-Structure},
the vertices $v_3$, $v_4$, $v_5$, $v_6$, $v_7$ and the endpoints of each edge in $G\backslash C$ induce the subgraph in Figure \ref{fig:InducedOneEdgeSubgraph} (d) or (e). It follows that an endpoint of each edge in $G\backslash C$ is adjacent to the vertex $v_5$ and at least one of the vertices $v_3$ and $v_7$ whereas the other endpoint is adjacent to both $v_4$ and $v_6$. Particularly, if an endpoint of an edge in  $G\backslash C$ is adjacent to only one of $v_3$ and $v_7$, it is the graph in Figure \ref{fig:InducedOneEdgeSubgraph} (d); and if an endpoint of an edge in  $G\backslash C$  is adjacent to both $v_3$ and $v_7$, it is the graph in Figure \ref{fig:InducedOneEdgeSubgraph} (e). Since $G$ is triangle-free, all vertices in a partite set of $G\backslash C$, say $A$, are adjacent to the vertices $v_4$ and $v_6$, and all vertices in the other partite set $B$ are adjacent to the vertex $v_5$ and are adjacent to at least one of the vertices $v_3$ and $v_7$. It follows that the vertices $v_4$, $v_5$, $v_6$, and the vertices in $A$ and $B$ induce a complete bipartite graph $K_{n+2,n+1}$. Note here that the vertex $v_5$ and any vertex in $A$ are twins. Furthermore, the vertex $v_4$ and any vertex in $B$ being adjacent to $v_3$ and $v_5$ but not to $v_7$  are twins, and the vertex $v_6$ and any vertex in $B$ being adjacent to $v_5$ and $v_7$ but not to $v_3$ are twins.
Hence, by representing the vertices $v_1$, $v_2$, $v_3$, $v_4$, $v_5$, $v_6$, and $v_7$ with the vertices $u_1$, $u_2$, $u_3$ $u_8$, $u_7$, $u_6$, and $u_5$, respectively, and by representing the vertices in $B$ which are adjacent to both $v_3$ and $v_7$ with the vertex $u_4$ in graph $G^*$ given in Figure \ref{fig:MainTheorem}, it is easy to verify that $G$ belongs to the graph family $\mathcal{F}_{22}$.
\hfill\(\qed\)
\end{proof}

We now consider the case where $C$ has at most one vertex that is nonadjacent to the vertices of $G\backslash C$.
In the next result, we show that there is no triangle-free EFC-graph $G$ in such a case where $C$ is a cycle of length seven. In Propositions \ref{oneVertex} and \ref{zeroVertex}, we prove that triangle-free EFC-graph $G$ belongs to $\mathcal{F}_{3}$ or $\mathcal{F}_{4}$ in such a case where $C$ is a cycle of length five.

\begin{lemma}\label{noc5c7}
Let $G$ be a connected triangle-free EFC-graph having an induced odd cycle $C$ of length seven such that ${c(G\backslash C)= 1}$. Then, at least two (consecutive) vertices of $C$ are nonadjacent to the vertices of $G\backslash C$.
\end{lemma}

\begin{proof}
Let $G$ be a connected triangle-free EFC-graph having an induced odd cycle $C$ of length seven such that ${c(G\backslash C)= 1}$. By Theorem \ref{KnnProof}, $G\backslash C$ is isomorphic to $K_{n,n}$ for some positive integer $n$. Let  the consecutive  vertices of $C$ be labeled by $v_1$, $v_2$, $v_3$, $v_4$, $v_5$, $v_6$, and $v_7$. Let $xy$ be an edge in $G \backslash C$. Note that the vertices $x$ and $y$ belong to the different partite sets of $G\backslash C$. By Theorem \ref{C-OneEdge-Structure} and Corollary \ref{AtLeastTwoFour}, $N(x) \cup N(y)$ contains at least four consecutive vertices, say $v_1$, $v_2$, $v_3$, and $v_4$. Without loss of generality, let $x$ be adjacent to $v_1$ and $v_3$, and $y$ be adjacent to $v_2$ and $v_4$.

We first show that the vertex $v_6$ is nonadjacent to the vertices of $G\backslash C$.
Conversely, we suppose that the vertex $v_6$ is adjacent to a vertex $w \in G\backslash C$. Let $w$ and $x$ be in the same partite set of $G\backslash C$; that is, $wy \in E(G\backslash C)$. It follows that the vertices $v_3$, $v_5$, and $v_7$ induce an independent set of size $3$ in $G$ and $G \backslash \{v_3, v_5, v_7\}$ has a perfect matching containing the edges $v_1v_2$, $v_4y$, $v_6w$, and a perfect matching of randomly matchable graph $(G\backslash C) \backslash \{w,y\}$ which is isomorphic to $K_{n-1,n-1}$. It gives a contradiction with Lemma \ref{independent-set}. Hence, we conclude that the vertex $v_6$ is adjacent to none of the vertices of $G\backslash C$.

We now show that at least one of the vertices $v_5$ and $v_7$ is nonadjacent to the vertices of $G\backslash C$.
Assume to the contrary that there exist two vertices $a$ and $b$ in $G\backslash C$ such that the vertices $v_5$ and $v_7$ are adjacent to $a$ and $b$, respectively. Notice that $a$ and $b$ cannot belong to the different partite sets of $G\backslash C$; that is, $ab \notin E(G\backslash C)$. Otherwise, by contradicting with Lemma \ref{independent-set}, the vertices $v_1$, $v_4$, and $v_6$ induce an independent set of size $3$ in $G$, and $G \backslash \{v_1, v_4, v_6\}$ has a perfect matching containing the edges $v_2v_3$, $v_5a$, $v_7b$, and a perfect matching of randomly matchable graph $(G\backslash C) \backslash \{a,b\}$ which is isomorphic to $K_{n-1,n-1}$. Thus, the vertices $a$ and $b$ belong to the same partite set of $G\backslash C$.
Note that the vertices $x$ and $y$ belong to the different partite sets of $G\backslash C$.
In the case where $a$, $b$ and $x$ belong to the same partite set of  $G\backslash C$, we observe that the vertices $v_1$, $v_3$, and $v_6$ induce an independent set of size $3$ in $G$. Besides, $G \backslash \{v_1, v_3, v_6\}$ has a perfect matching containing the edges $v_2y$, $v_4v_5$, $v_7b$, and a perfect matching of randomly matchable graph $(G\backslash C) \backslash \{b,y\}$, by contradicting with Lemma \ref{independent-set}. Since $G$ is triangle-free, it is clear that $b$ and $x$ cannot refer to the same vertex. The same result holds even if  $a$ and $x$ or $a$ and $b$ refer to the same vertex.
Similarly, in the case where $a$, $b$ and $y$ belong to the same partite set of  $G\backslash C$, the vertices $v_2$, $v_4$, and $v_6$ induce an independent set of size $3$ in $G$. Since $G \backslash \{v_2, v_4, v_6\}$ has a perfect matching containing the edges $v_1v_7$, $v_5a$, $v_3x$, and a perfect matching of randomly matchable graph $(G\backslash C) \backslash \{a,x\}$, it contradicts with Lemma \ref{independent-set}. The vertices $a$ and $y$ cannot refer the same vertex; and the same result holds even if  $b$ and $y$ or $a$ and $b$ refer to the same vertex. Thus, we came to the conclusion that there are no such $a$ and $b$ vertices in $G\backslash C$. Therefore, the proof is complete.
      \hfill\(\qed\)
\end{proof}

\begin{proposition}\label{oneVertex}
Let $G$ be a connected triangle-free EFC-graph having an induced odd cycle $C$ such that ${c(G\backslash C)= 1}$. If $C$ has exactly one vertex that is nonadjacent to the vertices of $G\backslash C$, then $G \in \mathcal{F}_{3}$. In particular, $C$ is a cycle of length five.
\end{proposition}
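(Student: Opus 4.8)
The plan is to reduce to a $5$-cycle, align the bipartition of $G\backslash C$ with $C$, and then read off the multiplicities of $G_2$. First I would note that the hypothesis is incompatible with $|C|=7$: by Lemma \ref{noc5c7} a $7$-cycle forces at least two vertices of $C$ with no neighbor in $G\backslash C$, whereas here there is exactly one. Hence $C=v_1v_2v_3v_4v_5$ is a $5$-cycle and, after relabelling, $v_3$ is the unique vertex with $N(v_3)\cap V(G\backslash C)=\emptyset$; its two cycle-neighbors are $v_2,v_4$, and the two remaining vertices $v_5,v_1$ are adjacent. By Theorem \ref{KnnProof}, $G\backslash C$ is one copy of $K_{n,n}$ with parts $A,B$, and all edges from $C$ to $G\backslash C$ land on the path $v_4v_5v_1v_2$.

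Next I would align the two bipartitions. Triangle-freeness together with the completeness of $K_{n,n}$ shows that each $v_i$ with $i\neq 3$ has all of its $G\backslash C$-neighbors inside a single part, since an edge of $K_{n,n}$ and a common $C$-neighbor would form a triangle. More precisely, for each edge $xy$ of $K_{n,n}$ the graph induced by $C\cup\{x,y\}$ is an EFC-graph by Lemma \ref{InducedEFCSubgraph}, hence one of the graphs (a)--(c) of Figure \ref{fig:InducedOneEdgeSubgraph}; in each of them $N_C(x)$ and $N_C(y)$ lie in the two different classes of the proper $2$-colouring of the attached arc, which for the path $v_4v_5v_1v_2$ are $\{v_1,v_4\}$ and $\{v_2,v_5\}$. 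Because $K_{n,n}$ is connected and any two vertices of the same part have a common neighbor, this choice is forced to be global, so after relabelling every $x\in A$ has $\emptyset\neq N_C(x)\subseteq\{v_1,v_4\}$ and every $y\in B$ has $\emptyset\neq N_C(y)\subseteq\{v_2,v_5\}$; the non-emptiness is guaranteed by Corollary \ref{AtLeastTwoFour}, as otherwise the edge $xy$ could not meet two consecutive vertices of $C$.

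The decisive step, and the one I expect to be the main obstacle, is to distinguish the \emph{full} attachers $v_1,v_5$ from the \emph{partial} ones $v_2,v_4$. I claim $v_1$ is adjacent to all of $A$ and $v_5$ to all of $B$. Suppose instead some $a\in A$ had $N_C(a)=\{v_4\}$. Then for every $b\in B$ the set $N_C(a)\cup N_C(b)\subseteq\{v_4\}\cup\{v_2,v_5\}$ must be consecutive by Corollary \ref{AtLeastTwoFour}; since $\{v_2,v_4,v_5\}$ is not consecutive, no $b$ can be adjacent to $v_2$, contradicting that $v_2$ (being distinct from the unique vertex $v_3$ having no neighbor in $G\backslash C$) has a neighbor in $G\backslash C$. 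Hence $v_1\in N_C(a)$ for all $a\in A$, and symmetrically $v_5\in N_C(b)$ for all $b\in B$. What remains are the nonempty sets $A_4=\{a\in A:a\sim v_4\}$ and $B_2=\{b\in B:b\sim v_2\}$ that record the partial attachments.

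Finally I would translate this into Definition \ref{Fgraphs}. The vertices of $\{v_5\}\cup A_4$ all have neighborhood $\{v_1,v_4\}\cup B$, and those of $\{v_1\}\cup B_2$ all have neighborhood $\{v_2,v_5\}\cup A$, so each of these is a twin-class (module); together with the modules $A\setminus A_4$ and $B\setminus B_2$, the singletons $v_2,v_3,v_4$, and empty modules, they exhibit $G$ as $G_2(0,0,0,1,1,1,r,s,0,n+1-r,n+1-s)$ with $r=1+|A_4|$ and $s=1+|B_2|$, under the identification $v_3\mapsto u_5$, $v_2\mapsto u_4$, $v_4\mapsto u_6$ of Figure \ref{fig:MainTheorem}; thus $G\in\mathcal{F}_3$. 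This last part is routine bookkeeping, but I would take care over the extreme cases $A_4=A$ or $B_2=B$, which produce the boundary multiplicities and must be matched against the parameter ranges declared for $\mathcal{F}_3$.
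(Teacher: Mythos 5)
Your proof takes essentially the same route as the paper's: rule out the seven-cycle with Lemma \ref{noc5c7}; get $G\backslash C\cong K_{n,n}$ from Theorem \ref{KnnProof}; align the parts $A,B$ with $\{v_1,v_4\}$ and $\{v_2,v_5\}$; show that $v_1$ is complete to $A$ and $v_5$ is complete to $B$ (the paper derives this from Lemma \ref{EdgeEdge} using the assumed neighbours $a$ of $v_4$ and $b$ of $v_2$; your argument by contradiction via Corollary \ref{AtLeastTwoFour} is the same computation read backwards); and finally read off the multiplicities of $G_2$ under the identification $v_2,v_3,v_4\mapsto u_4,u_5,u_6$ of Figure \ref{fig:MainTheorem}. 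All of this is correct and matches the paper step for step.

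However, the caveat in your last sentence is not a formality: the extreme cases $A_4=A$ and $B_2=B$ genuinely occur, and under the ranges declared in Definition \ref{Fgraphs} they break the membership claim. Graph (c) of Figure \ref{fig:InducedOneEdgeSubgraph} is exactly the case $n=1$, $A_4=A$, $B_2=B$; more generally, the nine-vertex graph built from $C_5$ and $K_{2,2}$ with parts $\{a_1,a_2\},\{b_1,b_2\}$, where $a_1,a_2\sim v_1,v_4$, $b_1\sim v_2,v_5$ and $b_2\sim v_5$, is a connected triangle-free EFC-graph satisfying the hypothesis with $A_4=A$, $|B_2|=1$. In such cases $r=1+|A_4|=n+1$ (resp.\ $s=n+1$), so the multiplicity $n+1-r$ of $u_{10}$ (resp.\ $n+1-s$ of $u_{11}$) is $0$, violating the constraints $n\geq r\geq 1$, $n\geq s\geq 1$ imposed on $\mathcal{F}_3$; these graphs therefore lie outside $\mathcal{F}_3$ as literally defined. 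This is a defect of the paper rather than of your argument: the paper's proof ends with ``it is easy to verify that $G$ belongs to the graph family $\mathcal{F}_3$'' and never confronts the boundary multiplicities, whereas you at least flagged them. To make the statement true, the ranges of $\mathcal{F}_3$ must be relaxed to allow $r$ and $s$ up to $n+1$ (equivalently, to allow the classes of $u_{10}$ and $u_{11}$ to be empty); the boundary graphs cannot be argued away, since they are equimatchable.
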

\begin{proof}
Let $G$ be a connected triangle-free EFC-graph having an induced odd cycle $C$  such that ${c(G\backslash C)= 1}$.
Suppose that one vertex of $C$ is nonadjacent to the vertices of $G\backslash C$. By Lemma \ref{noc5c7}, $C$ is a cycle of length five.
By Theorem \ref{KnnProof}, $G\backslash C$ is isomorphic to $K_{n,n}$ for some positive integer $n$. Let the partite sets of $G\backslash C$ be labeled as $A$ and $B$, and let the consecutive  vertices of $C$ be labeled as $v_1$, $v_2$, $v_3$, $v_4$, and $v_5$. W.l.o.g, let $v_1$ be adjacent to none of the vertices of $G\backslash C$.
Then, there exist the vertices $a$ and $b$ in $G\backslash C$ such that the vertices $v_2$ and $v_5$ are adjacent to $a$ and $b$, respectively.

By  Theorem \ref{C-OneEdge-Structure}, each neighbor of $a$ in $G\backslash C$ is adjacent to the vertex $v_3$ and each neighbor of $b$ in $G\backslash C$ is adjacent to the vertex $v_4$. By the triangle-freeness of $G$, it is easy to see that $a$ and $b$ has no common neighbor in $G\backslash C$. Since $G\backslash C$ is isomorphic to $K_{n,n}$, we conclude that the vertices $a$ and $b$ belong to the different partite sets of $G \backslash C$, say $a \in A$ and $b \in B$. It follows that each vertex of $A$ in $G\backslash C$ is adjacent to the vertex $v_4$ and  each vertex of $B$ in $G\backslash C$ is adjacent to the vertex $v_3$. Hence, the vertices $v_3$, $v_4$ and the vertices in $A$ and $B$ induce a complete bipartite graph $K_{n+1,n+1}$. Note here that the vertex $v_3$ and any vertex in $A$ being adjacent to both $v_2$ and $v_4$ are twins, and the vertex $v_4$ and any vertex in $B$ being adjacent to both $v_3$ and $v_5$ are twins.
Hence, by representing vertices $v_1$, $v_2$, $v_3$, $v_4$, and $v_5$ with the vertices  $u_1$, $u_2$ ,$u_3$, $u_4$, and $u_5$, respectively, and by representing the vertices in $A$ being nonadjacent to $v_2$ with the vertex $u_8$ and the vertices in $B$ being nonadjacent to $v_5$ with the vertex $u_{11}$ in graph $G^*$ given in Figure \ref{fig:MainTheorem}, it is easy to verify that $G$ belongs to the graph family $\mathcal{F}_{3}$.
\hfill\(\qed\)
\end{proof}

Suppose that three vertices of $C$ are nonadjacent to the vertices of $G\backslash C$

\begin{proposition}\label{zeroVertex}
Let $G$ be a connected triangle-free EFC-graph having an induced odd cycle $C$ such that ${c(G\backslash C)= 1}$. If each vertex in $C$ is adjacent to at least one vertex of $G\backslash C$, then $G \in \mathcal{F}_{4}$. In particular, $C$ is a cycle of length five.
\end{proposition}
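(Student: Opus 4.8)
The plan is to reduce the configuration to a $C_5$ attached to a complete bipartite remainder, to determine the cyclic pattern by which $C$ meets $G\backslash C$, and then to read off the twin classes against $G_2$. First I would fix the coarse data. Since every vertex of $C$ has a neighbour in $G\backslash C$, Lemma~\ref{noc5c7} excludes $|C|=7$, so $C$ is a $C_5$, say $v_1v_2v_3v_4v_5$, and Theorem~\ref{KnnProof} gives $G\backslash C\cong K_{n,n}$ with partite sets $A,B$; here $n\ge 2$, because a single edge of $G\backslash C$ meets at most four vertices of a $C_5$ by Theorem~\ref{C-OneEdge-Structure}, so $n=1$ could not reach all five. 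As $A$ and $B$ are completely joined, triangle-freeness prevents any $v_i$ from having neighbours in both, so each $v_i$ is adjacent to exactly one of $A,B$; call this its \emph{type}.

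The central step is to force the type sequence around $C$. I would first show that no three consecutive vertices share a type: if $v_{i-1},v_i,v_{i+1}$ were all of type $B$, pick $b\in N(v_i)\cap B$ and any $a\in A$ and apply Lemma~\ref{EdgeEdge} to the edge $ab$, whose endpoints' neighbourhoods in $C$ must form an alternating path containing $v_i$; but $b\not\sim v_{i\pm1}$ by triangle-freeness and $a\not\sim v_{i-1},v_i,v_{i+1}$ since these are of type $B$, and a subpath of $C_5$ through $v_i$ avoiding both of its cycle-neighbours is just $\{v_i\}$, which contains no type-$A$ neighbour of $a$, contradicting that the induced $C_4$ needs one. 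Since both types occur and no three agree, the only admissible pattern on $C_5$ is, up to rotation, reflection and swapping $A\leftrightarrow B$, two non-consecutive vertices of one type and three of the other; I fix $v_2,v_4$ of type $A$ and $v_1,v_3,v_5$ of type $B$, so that $v_5v_1$ is the sole monochromatic edge.

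Then I would sharpen these into exact neighbourhoods, again via the path condition of Lemma~\ref{EdgeEdge} and completeness of $K_{n,n}$. For $a\in A$ one has $\emptyset\neq N_C(a)\subseteq\{v_2,v_4\}$, and $N_C(a)=\{v_2\}$ (resp.\ $\{v_4\}$) would force every $b\in B$ to miss $v_5$ (resp.\ $v_1$), contradicting its type; hence $N_C(a)=\{v_2,v_4\}$ for all $a$, so $v_2,v_4$ are joined to all of $A$. Since $N_C(a)\cup N_C(b)\supseteq\{v_2,v_4\}$ is a path and the alternative arc through $v_1,v_5$ is barred by triangle-freeness, it contains $v_3$, giving $v_3\sim$ all of $B$; and for each $b$ the set $N_C(b)$ contains $v_3$ and, by triangle-freeness on $v_1v_5$, at most one of $v_1,v_5$, partitioning $B$ into $B_1=N(v_1)\cap B$, $B_2=N(v_5)\cap B$ and $B_3$ (the rest), with $B_1,B_2\neq\emptyset$. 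The vertices then group into twin classes $A\cup\{v_3\}$, $B_1\cup\{v_2\}$, $B_2\cup\{v_4\}$, $B_3$, $\{v_1\}$ and $\{v_5\}$, and matching them to $G_2$ (Figure~\ref{fig:MainTheorem}) via $v_5\mapsto u_4,\ v_1\mapsto u_5,\ v_2\mapsto u_6,\ v_3\mapsto u_7,\ v_4\mapsto u_8$ exhibits $G=G_2(0,0,0,1,1,r,n+1,s,0,0,n+2-r-s)$ with $r=|B_1|+1$ and $s=|B_2|+1$, that is, $G\in\mathcal F_4$.

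I expect the main obstacle to be the two propagation arguments: Lemma~\ref{EdgeEdge} must be invoked not for a single edge but simultaneously for all edges of $K_{n,n}$ incident to a fixed vertex, and the delicate point throughout is recognising which alternating subpaths of $C_5$ are admissible, in particular that those straddling the monochromatic edge $v_5v_1$ are blocked precisely by triangle-freeness. Checking that the resulting multiplicities satisfy the defining inequalities of $\mathcal F_4$ is then routine and can be carried out as in Proposition~\ref{easyDirection} via Lemma~\ref{independent-set}.
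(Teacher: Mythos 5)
Your proposal is correct and follows essentially the same route as the paper: it reduces to a $C_5$ via Lemma~\ref{noc5c7}, uses Theorem~\ref{KnnProof} together with triangle-freeness to assign each cycle vertex a unique partite-set ``type'', propagates adjacencies with Lemma~\ref{EdgeEdge} until the full bipartite pattern between $C$ and $K_{n,n}$ is pinned down, and then partitions $B$ by its attachments to read off the $G_2$ multiplicities. Your intermediate claim that no three consecutive vertices of $C$ share a type is a cleaner way of organizing the same deductions the paper carries out by chasing adjacencies from a fixed edge, but the key lemmas, the resulting adjacency structure, and the final identification with $\mathcal{F}_4$ are identical.
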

\begin{proof}
Let $G$ be a connected triangle-free EFC-graph having an induced odd cycle $C$ such that ${c(G\backslash C)= 1}$.
Suppose that there is no vertex in $C$ being adjacent to none of the vertices of $G\backslash C$; that is, each vertex in $C$ is adjacent to at least one vertex of $G\backslash C$. Then, by Lemma \ref{noc5c7}, $C$ is a cycle of length five.
By Theorem \ref{KnnProof}, $G\backslash C$ is isomorphic to $K_{n,n}$ for some positive integer $n$. Let the partite sets of $G\backslash C$ be labeled by $A$ and $B$, and the consecutive  vertices of $C$ be labeled by $v_1$, $v_2$, $v_3$, $v_4$, and $v_5$.  By the triangle-freeness of $G$, it is easy to see that all neighbors of each vertex in $C$ belong to the same partite set of $G\backslash C$.

Let $xy$ be an edge in $G\backslash C$, where $x \in A$ and $y \in B$. By Lemma \ref{EdgeEdge}, $xy$ induces a $C_4$ with at least one edge of $C$, say $v_1v_5$ with edges $xv_1$ and $yv_5$. It follows that $N_{G\backslash C}(v_1) \subseteq A$ and $N_{G\backslash C}(v_5) \subseteq B$. Since $y \in B$, by Theorem \ref{C-OneEdge-Structure}, each vertex in $A$ is adjacent to at least one of $v_1$ and $v_4$. Since $x \in A$, by Theorem \ref{C-OneEdge-Structure}, each vertex in $B$ is adjacent to at least one of $v_2$ and $v_5$. We now consider the neighbors of $v_3$ in $G\backslash C$. Since all neighbors of $v_3$ belong to the same partite set of $G\backslash C$, without loss of generality, we suppose that $N_{G\backslash C}(v_3) \subseteq A$. Thus, there exists a vertex $w \in A$ which is adjacent to $v_3$. Then, Theorem \ref{C-OneEdge-Structure} implies that each vertex in $B$ is adjacent to at least one of $v_2$ and $v_4$.

At this point, we have that each vertex in $B$ is adjacent to not only at least one of $v_2$ and $v_5$ but also at least one of $v_2$ and $v_4$. Since a vertex in $B$ cannot be adjacent to both $v_4$ and $v_5$ by the triangle-freeness of $G$, it is easy to observe that each vertex in $B$ is adjacent to the vertex $v_2$. Particularly, $y$ is adjacent to both $v_2$ and $v_5$. Then, by Theorem \ref{C-OneEdge-Structure}, it follows that each vertex in $A$ is adjacent to at least one of $v_1$ and $v_3$. Remark that each vertex in $A$ is also adjacent to at least one of $v_1$ and $v_4$. Since a vertex in $A$ cannot be adjacent to both $v_3$ and $v_4$ by the triangle-freeness of $G$, each vertex in $A$ is indeed adjacent to the vertex $v_1$. Hence, we make a result that the vertices $v_1$, $v_2$ and the vertices in $A$ and $B$ induce a complete bipartite graph $K_{n+1,n+1}$ in $G$. Remark that $N_{G\backslash C}(v_3) \subseteq A$ and $N_{G\backslash C}(v_5) \subseteq B$.

We now consider the neighbors of $v_4$ in $G\backslash C$. W.l.o.g., we suppose that $N_{G\backslash C}(v_4) \subseteq B$; that is, there exists a vertex $z \in B$ which is adjacent to $v_4$. By Theorem \ref{C-OneEdge-Structure}, each vertex in $A$ is adjacent to at least one of $v_3$ and $v_5$. Since $N_{G\backslash C}(v_5) \subseteq B$, it is easy to see that each vertex in $A$ is adjacent to the vertex $v_3$. Hence, we conclude that the subgraph of $G$ induced by the vertices $v_1$, $v_2$, $v_3$ and the vertices in $A$ and $B$ is a complete bipartite graph $K_{n+2,n+1}$. Note here that the vertex $v_2$ and any vertex in $A$ are twins. Furthermore, the vertex $v_3$ and any vertex in $B$ being adjacent to both $v_2$ and $v_4$ are twins; and the vertex $v_1$ and any vertex in $B$ being adjacent to $v_2$ and $v_5$ are twins. Notice also that $N_{B}(v_4) \cap N_{B}(v_5) = \emptyset$ and there may exist some vertices in $B$ which are adjacent to neither $v_4$ nor $v_5$.
Therefore, by representing the vertices $v_1$, $v_2$, $v_3$ ,$v_4$, and $v_5$ with the vertices $u_1$, $u_2$, $u_3$, $u_4$, and $u_5$, respectively, and by representing the vertices in $B$ being adjacent to none of $v_4$ and $v_5$ with the vertex $u_{11}$, in graph $G^*$ given in Figure \ref{fig:MainTheorem}, we conclude that $G$ belongs to the graph family $\mathcal{F}_{4}$.
\hfill\(\qed\)
\end{proof}

We summarize the results in this subsection as follows:

\begin{theorem}\label{main1}
  A graph $G$ is a connected triangle-free EFC-graph having an induced odd cycle $C$ such that ${c(G\backslash C)= 1}$ if and only if $G \in \mathcal{F}$, where $\mathcal{F} = \mathcal{F}_{11} \cup \mathcal{F}_{12} \cup \mathcal{F}_{21} \cup \mathcal{F}_{22} \cup \mathcal{F}_{3} \cup \mathcal{F}_{4}$.
\end{theorem}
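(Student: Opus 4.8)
this theorem is a summary/packaging result that combines the five propositions (Propositions \ref{threeVertices}, \ref{twoVertices}, \ref{oneVertex}, \ref{zeroVertex}) together with Proposition \ref{easyDirection}, so I would structure the proof as a clean two-direction argument that dispatches each direction to the work already done.

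For the forward direction, I would let $G$ be a connected triangle-free EFC-graph that remains connected after removal of an induced odd cycle $C$. By Corollary \ref{c5c7}, $C$ is either $C_5$ or $C_7$, and since $G\backslash C$ is connected and nonempty, Theorem \ref{KnnProof} gives $G\backslash C \cong K_{n,n}$ for some $n\ge 1$. The key observation recorded just before Proposition \ref{threeVertices} is that, by Corollary \ref{AtLeastTwoFour}, the number of vertices of $C$ that are nonadjacent to every vertex of $G\backslash C$ is at most three, and these vertices are consecutive on $C$. This yields a clean case split on that number $k\in\{0,1,2,3\}$, and each case is precisely one of the propositions: $k=3$ gives $G\in\mathcal{F}_{11}\cup\mathcal{F}_{12}$ by Proposition \ref{threeVertices}; $k=2$ gives $G\in\mathcal{F}_{21}\cup\mathcal{F}_{22}$ by Proposition \ref{twoVertices}; $k=1$ gives $G\in\mathcal{F}_{3}$ by Proposition \ref{oneVertex}; and $k=0$ gives $G\in\mathcal{F}_{4}$ by Proposition \ref{zeroVertex}. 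Since these four cases are disjoint and exhaustive, I conclude $G\in\mathcal{F}$.

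For the reverse direction, I would simply invoke Proposition \ref{easyDirection}, which already establishes that every $G\in\mathcal{F}$ is a connected triangle-free EFC-graph remaining connected after removal of an induced odd cycle. Together the two directions give the stated equivalence.

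The honest assessment is that there is no real obstacle here: all the genuine combinatorial work—building the graph families, verifying the neighborhood structure on $C$, and ruling out the $C_7$ cases with few nonadjacent vertices via Lemma \ref{noc5c7}—has been carried out in the preceding propositions. The only thing to be careful about is ensuring the case analysis on the number of $C$-vertices nonadjacent to $G\backslash C$ is genuinely exhaustive (bounded above by three via Corollary \ref{AtLeastTwoFour}) and that Lemma \ref{noc5c7} is correctly used to suppress the impossible $C_7$ subcases for $k\le 1$, so that the propositions' hypotheses partition all possibilities without overlap. This is bookkeeping rather than a mathematical difficulty, so the proof will be short.
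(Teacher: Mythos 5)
Your proposal is correct and follows essentially the same route as the paper's own proof: the reverse direction via Proposition \ref{easyDirection}, and the forward direction via the case analysis on the number of vertices of $C$ nonadjacent to $G\backslash C$ (bounded by three through Corollary \ref{AtLeastTwoFour}), dispatching each case to Propositions \ref{threeVertices}, \ref{twoVertices}, \ref{oneVertex}, and \ref{zeroVertex}, with Lemma \ref{noc5c7} eliminating the $C_7$ possibilities when that number is at most one. The only cosmetic difference is that the paper states the appeal to Lemma \ref{noc5c7} explicitly in the theorem's proof, whereas you note (correctly) that it is already absorbed into the hypotheses-handling of the last two propositions.
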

\begin{proof}
  One direction easily follows from Proposition \ref{easyDirection}. We proceed as follows to prove the other direction: Let $G$ be a connected triangle-free EFC-graph having an induced odd cycle $C$ such that ${c(G\backslash C)= 1}$. Let $S$ be the set of vertices in $C$ which are nonadjacent to the vertices of $G\backslash C$. By Corollary \ref{AtLeastTwoFour}, we observe that $|S| \leq 3$. If $|S| = 3$, then by Proposition \ref{threeVertices}, $G$ is a member of $\mathcal{F}_{11} \cup \mathcal{F}_{12}$. If $|S| = 2$, then by Proposition \ref{twoVertices}, $G$ is a member of $\mathcal{F}_{21} \cup \mathcal{F}_{22}$.  If $C$ is a cycle of length seven, then by Lemma \ref{noc5c7}, there is no such graph $G$ in the case $|S| \leq 1$. Finally, if $C$ is a cycle of length five, then by Propositions \ref{oneVertex} and \ref{zeroVertex}, $G$ is a member of $\mathcal{F}_{3} \cup \mathcal{F}_{4}$ in the case $|S| \leq 1$.
  \hfill\(\qed\)
\end{proof}

\subsection{Triangle-Free EFC Graphs with $c(G\backslash C)= 2$}\label{twoComp}

In this subsection, we study  connected triangle-free EFC-graphs containing two components after the removal of an induced odd cycle. We will show that the class of such EFC-graphs corresponds to the following graph class:

\begin{definition}\label{Ggraphs}
The graph class $\mathcal{G}$ contains the following graph families:
\begin{itemize}
  \item $\mathcal{G}_{11}  = \{G^*(m+1,m+1,1,0,1,1,n+1,n+1,0,0,0)  \; | \; n\geq1, m\geq1\}$
  \item $\mathcal{G}_{12}  = \{G^*(m+1,m+1,1,n-r-s,1,r+1,n+1,s+1,0,0,0)  \; | \; n\geq 1, m\geq 1, n-1\geq r\geq 1, n-1\geq s \geq 1, n \geq r+s\}$
  \item $\mathcal{G}_{21}  = \{G^*(1,1,1,n-r-s+1,1,r,n,s,0,m,m)  \; | \; n\geq 1, m\geq 1, n-1\geq r\geq 1, n-1\geq s \geq 1, n \geq r+s\}$.
  \item $\mathcal{G}_{22}  = \{G^*(1,1,r+1,s+1,1,0,n-s,n-r,0,m,m)  \; | \; n\geq1,  m\geq 1, n-1\geq r\geq 1, n-1\geq s \geq 1\}$

  \item $\mathcal{G}_{23}  = \{G^*(r+1,n+1,s+1,1,1,m,m,0,0,0,n-r-s)  \; | \; n\geq 1, m\geq 1, n-1\geq r\geq 1, n-1\geq s \geq 1, n \geq r+s\}$

  \item $\mathcal{G}_{31}  = \{G^*(m-k-l+1,1,1,n-r-s+1,1,r,n,s,l,m,k)  \; | \; n\geq 1, m\geq 1, n-1\geq r\geq 1, n-1\geq s \geq 1, n\geq r+s, m-1\geq l\geq 1, m-1\geq k \geq 1, m\geq k+l\}$

  \item $\mathcal{G}_{32}  =  \{G^*(k+1,l+1,1,n-r-s+1,1,r,n,s,0,m-l,m-k)  \; | \; n\geq 1, m\geq 1, n-1\geq r\geq 1, n-1\geq s \geq 1, n\geq r+s, m-1\geq l\geq 1, m-1\geq k \geq 1, m\geq k+l\}$

\end{itemize}
where $G^*$ is the twin-free graph depicted in Figure \ref{fig:MainTheorem}.
\end{definition}

In the next result, we prove one direction as follows:

\begin{proposition}\label{easyDirection2}
If $G \in \mathcal{G}$ where   $\mathcal{G} = \mathcal{G}_{11} \cup \mathcal{G}_{12} \cup \mathcal{G}_{21} \cup \mathcal{G}_{22} \cup \mathcal{G}_{23} \cup \mathcal{G}_{31} \cup \mathcal{G}_{32}$, then $G$ is  a connected triangle-free EFC-graph containing exactly two components after the removal of an induced odd cycle.
\end{proposition}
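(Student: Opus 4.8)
The plan is to follow the template of Proposition \ref{easyDirection} in spirit, since the statement is again an ``easy direction'': every listed graph must be shown to be (i) triangle-free and connected, (ii) factor-critical, (iii) equimatchable, and (iv) to split into exactly two components once the induced odd cycle is deleted. Properties (i) and (iv) I would read off directly from the two templates $G_1$ and $G_2$ of Figure \ref{fig:MainTheorem}: each family is obtained by the module blow-up $H(m_1,\dots,m_k)$, which replaces template vertices by independent sets of twins and therefore neither creates a triangle nor alters adjacency between distinct modules, so the triangle-free template stays triangle-free and the attachment pattern is preserved. In every family the induced odd cycle $C$ is the $C_5$ (for the $G_2$-based families) or $C_7$ (for the $G_1$-based families) residing in the template, and the two modules carrying the ``large'' multiplicities $n$ and $m$---subject to the balancing constraints $r+s=n$, $k+l=m$ appearing in the definitions---blow up into the two balanced complete bipartite graphs that constitute $G\setminus C$, in agreement with Theorem \ref{KnnProof}.

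For factor-criticality I would argue by an explicit near-perfect matching: each $G\in\mathcal{G}$ has odd order (an odd cycle together with two even components), and for an arbitrary vertex $v$ one matches the surviving twins inside $v$'s own module and inside the two $K_{n,n}$ and $K_{m,m}$ blocks among themselves, then saturates the exposed cycle vertices along $C$, using the attachment edges to absorb the single leftover block vertex. This is the same explicit alternating bookkeeping used throughout Section 3 and is routine once the adjacency pattern is fixed. Since Lemma \ref{independent-set} takes factor-criticality as a hypothesis, this step must precede the equimatchability argument.

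The substantive step is equimatchability, which by Lemma \ref{independent-set} reduces to showing that no independent set $I$ with $|I|=3$ leaves a perfect matching in $G\setminus I$. The main obstacle is that the multiplicities make each family an infinite collection; the resolution is modular. Because every vertex class is a module of twins, an independent set of size three is determined up to isomorphism by the multiset of at most three template vertices from which its members are drawn, subject to the constraints that two or three members may share a module only because modules are independent, and members in distinct modules must come from non-adjacent template vertices. This collapses each infinite family to a finite list of \emph{type patterns} for $I$. For each admissible pattern one then checks that deleting $I$ destroys the parity balance of one of the complete bipartite blocks $K_{n+1,n+1}$ or $K_{m+1,m+1}$ assembled in the constructions of Propositions \ref{threeVertices}--\ref{zeroVertex}: the residue either contains an odd component or exhibits a bipartition-size mismatch and hence admits no perfect matching. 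Carrying out this finite case analysis across the seven families $\mathcal{G}_{11},\dots,\mathcal{G}_{32}$---with $\mathcal{G}_{31}$ and $\mathcal{G}_{32}$ the most parameter-heavy and therefore the most laborious---is the part I expect to be tedious, but once completed it yields (iii), and together with (i), (ii), and (iv) establishes the proposition.
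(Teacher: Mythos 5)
Your proposal is correct and takes essentially the same route as the paper: the paper's proof likewise dismisses connectivity, triangle-freeness, factor-criticality, and the two-component property as ``easily verifiable,'' and establishes equimatchability by checking that no independent set of size three leaves a perfect matching, invoking Lemma \ref{independent-set} exactly as in Proposition \ref{easyDirection}. Your modular type-pattern reduction, the Hall-type imbalance argument, and the explicit note that factor-criticality must be verified before the lemma is applied merely fill in the verification details that the paper leaves to the reader.
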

\begin{proof}
As in Proposition \ref{easyDirection}, one can verify that there is no independent set $I$ of size three in $G \in \mathcal{G}$ such that $G \backslash I$ has a perfect matching; hence by Lemma \ref{independent-set},  $G$ is an EFC-graph. All the other properties are easily verifiable.
\hfill\(\qed\)
\end{proof}

To prove the other direction, we suppose that $c(G\backslash C)= 2$; that is, $G\backslash C$ contains two components, say $H_1$ and $H_2$. By Theorem \ref{KnnProof}, $H_1$ and $H_2$ are isomorphic to $K_{n,n}$ and $K_{m,m}$, respectively, for some nonnegative integers $n$ and $m$. By Lemma \ref{InducedEFCSubgraph}, both subgraphs of $G$ induced by the vertices of $C$ and $H_1$ and by the vertices of $C$ and $H_2$, i.e., $G\backslash H_2$ and $G\backslash H_1$, respectively, are connected triangle-free EFC-graphs. Besides, $(G\backslash H_2)\backslash C$ and $(G\backslash H_1)\backslash C$ are connected; indeed, $(G\backslash H_2)\backslash C = H_1$ and $(G\backslash H_1)\backslash C= H_2$ which are complete bipartite graphs $K_{n,n}$ and $K_{m,m}$, respectively.
Hence, by Theorem \ref{main1}, we deduce that $G\backslash H_2$ and $G\backslash H_1$ are members of the graph class $\mathcal{F}$ defined in Section \ref{oneComp}.
On the other hand, by Lemma \ref{DisjointEdges1}, any two edges from $H_1$ and $H_2$ cannot induce a $C_4$ with the same edge in $C$, and by Corollary \ref{covering-C}, the vertices of these edges dominate the vertices of $C$. Furthermore,  Corollary \ref{only-one-odd-cycle} states that there are no two vertex-disjoint induced odd cycles in $G$. These observations allow us to reduce the rest of this subsection to merging two graphs from $\mathcal{F}$ with each other.
Note here that the merged graphs should have the same induced odd cycle $C$ and the components of $G\backslash C$ have at most two nonconsecutive common neighbors in $C$.

In the next result, we provide the characterization of connected triangle-free EFC-graphs with ${c(G\backslash C)= 2}$ where $C$ is a cycle of length seven.

\begin{proposition}\label{c7-c7}
Let $G$ be a connected triangle-free EFC-graph having an induced odd cycle $C$ of length seven such that ${c(G\backslash C)= 2}$. Then $G \in \mathcal{G}_{11} \cup  \mathcal{G}_{12}$.
\end{proposition}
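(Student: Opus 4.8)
The plan is to realize $G$ as the union of its two single-component subgraphs $G\backslash H_2$ and $G\backslash H_1$, glued along the common cycle $C$, each of which is already characterized by Theorem \ref{main1}. Write $H_1 \cong K_{n,n}$ and $H_2 \cong K_{m,m}$ by Theorem \ref{KnnProof}. By Lemma \ref{InducedEFCSubgraph} both $G\backslash H_2$ and $G\backslash H_1$ are connected triangle-free EFC-graphs whose removal of $C$ leaves a single $K_{\cdot,\cdot}$; since $C$ has length seven, Lemma \ref{noc5c7} forbids the cases with at most one nonadjacent vertex, so by Theorem \ref{main1} each of them lies in $\mathcal{F}_{12}\cup\mathcal{F}_{22}$. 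Consequently the set $S_i\subseteq V(C)$ of cycle-vertices adjacent to $H_i$ is a block of consecutive vertices of length four (the $\mathcal{F}_{12}$ pattern of Proposition \ref{threeVertices}) or five (the $\mathcal{F}_{22}$ pattern of Proposition \ref{twoVertices}), and the precise adjacency between $H_i$ and $C$ is fixed by the corresponding family. Since $H_1$ and $H_2$ lie in different components of $G\backslash C$ there are no edges between them, so $G$ is completely determined once the two arcs $S_1,S_2$ are positioned on $C$.

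The core of the argument is therefore a short combinatorial analysis of how the two consecutive arcs $S_1,S_2$ can be overlaid on the $7$-cycle. Two constraints govern this: by Corollary \ref{covering-C} the two attaching edges dominate $C$, forcing $S_1\cup S_2=V(C)$; and by Lemma \ref{DisjointEdges1} an edge of $C$ cannot form a $C_4$ with edges from both components, which, via the path structure of Lemma \ref{EdgeEdge}, means $S_1\cap S_2$ contains no two consecutive vertices. Writing $|S_1\cap S_2|=|S_1|+|S_2|-7$, the pair $(|S_1|,|S_2|)=(5,5)$ is impossible because it forces a three-vertex, hence consecutive, intersection; the pair $(4,4)$ yields a single shared vertex; and the pair $(4,5)$ yields two shared vertices which, after discarding the overlays whose intersection is a consecutive pair, must be the two nonconsecutive endpoints of the shorter arc. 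Up to the dihedral symmetry of $C$ this leaves exactly one overlay in each of the cases $(4,4)$ and $(4,5)$.

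It then remains to read off the glued graph in each surviving overlay and match it to the claimed family. In the $(4,4)$ case both components attach in the $\mathcal{F}_{12}$ pattern and the two almost-complete-bipartite blocks meet in a single cycle vertex; identifying the two partite sides of $H_1$ and $H_2$ with the multiplicity vertices of $G_1$, exactly as in the relabeling used in Proposition \ref{threeVertices}, shows $G=G_1(n,n,1,m,m,1,1)\in\mathcal{G}_{11}$. In the $(4,5)$ case one component attaches in the $\mathcal{F}_{12}$ pattern and the other in the $\mathcal{F}_{22}$ pattern, the overlap being the two nonconsecutive endpoints of the shorter arc; the same relabeling gives $G=G_1(n,r,1,m,m,1,n-r+1)\in\mathcal{G}_{12}$.

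The main obstacle is this last bookkeeping step: one must check that the partite-set membership of the shared cycle vertices is forced consistently by triangle-freeness for both components simultaneously, and that no additional edge between an $H_i$ and the far arc is compatible with Lemma \ref{independent-set}, so that the two single-component patterns genuinely fuse into a single member of the $G_1(\cdots)$ family with exactly the stated multiplicities and parameter constraints. The converse inclusion is already supplied by Proposition \ref{easyDirection2}.
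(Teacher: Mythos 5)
Your proposal is correct and takes essentially the same route as the paper's proof: both reduce to the single-component characterization (so that $G\backslash H_1,\,G\backslash H_2 \in \mathcal{F}_{12}\cup\mathcal{F}_{22}$, using Lemma \ref{noc5c7} to exclude the other families for a $7$-cycle) and then invoke Lemma \ref{DisjointEdges1} together with Corollary \ref{covering-C} to constrain how the two attachment arcs sit on $C$. The only difference is one of detail: where the paper rules out the case that both subgraphs lie in $\mathcal{F}_{22}$ and declares the remaining identification ``easy to verify,'' you spell out the arc-overlay bookkeeping explicitly (your $(5,5)$, $(4,4)$, $(4,5)$ case analysis), which fills in rather than replaces the paper's argument.
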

\begin{proof}
Let $G$ be a connected triangle-free EFC-graph having an induced odd cycle $C$ of length seven such that ${c(G\backslash C)= 2}$. Let $H_1$ and $H_2$ be the components of $c(G\backslash C)$.  By Lemma \ref{InducedEFCSubgraph} and by Theorem \ref{main1}, induced subgraphs $G\backslash H_2$ and $G\backslash H_1$ of $G$ are both contained in the graph class $\mathcal{F}$. By Propositions \ref{threeVertices} and \ref{twoVertices}, each of $G\backslash H_2$ and $G\backslash H_1$ of $G$ is a member of the union of the graph families $\mathcal{F}_{12}$ and $\mathcal{F}_{22}$; that is, $ G\backslash H_1, G\backslash H_2 \in \mathcal{F}_{12} \cup \mathcal{F}_{22}$. However, by Lemma \ref{DisjointEdges1}, the subgraphs $G\backslash H_2$ and $G\backslash H_1$ cannot both belong to the graph family $\mathcal{F}_{22}$. It follows that at least one of $G\backslash H_2$ and $G\backslash H_1$ is a member of the graph family $\mathcal{F}_{12}$.
If $G\backslash H_2$ and $G\backslash H_1$ both belong to the graph family $\mathcal{F}_{12}$, the endpoints of any two edges from $H_1$ and $H_2$ and vertices of $C$ induce the graph in Figure \ref{fig:InducedTwoEdgesSubgraph} (A).
Then, by representing the cycle $C$ with $u_1u_2u_3u_8u_7u_6u_5$ in graph $G^*$ given in Figure \ref{fig:MainTheorem}, it is easy to see that $G$ belongs to the graph family $\mathcal{G}_{11}$.

Suppose that $G\backslash H_2$ belongs to the graph family $\mathcal{F}_{12}$ and $G\backslash H_1$  belongs to the graph family $\mathcal{F}_{22}$. Since $G\backslash H_1 \in \mathcal{F}_{22}$, there exists an edge $xy$ in $H_2$ such that the vertices $x$, $y$ and the vertices of $C$ induce the graph in Figure \ref{fig:InducedOneEdgeSubgraph} (e). It follows that
the vertices $x$, $y$, the vertices of $C$ and the endpoints of any edge of $H_1$ induce the graph in Figure \ref{fig:InducedTwoEdgesSubgraph} (B). Let $x$ be the vertex being adjacent to exactly three vertices of $C$.
Hence, by representing the cycle $C$ with $u_1u_2u_3u_8u_7u_6u_5$ and by representing $x$ and its twins with the vertex $u_4$ in graph $G^*$ given in Figure \ref{fig:MainTheorem}, we conclude that $G$ belongs to the graph family $\mathcal{G}_{12}$.
\hfill\(\qed\)
\end{proof}

We now proceed with the characterization of connected triangle-free EFC-graphs with ${c(G\backslash C)= 2}$ where $C$ is a cycle of length five. In the next lemma, we show that at least one of $G\backslash H_2$ and $G\backslash H_1$ of $G$ is a graph contained in the graph family $\mathcal{F}_{11}$ or $\mathcal{F}_{21}$.

\begin{lemma}\label{noC5}
Let $G$ be a connected triangle-free EFC-graph having an induced odd cycle $C$ of length five such that ${c(G\backslash C)= 2}$. If $H_1$ and $H_2$ are the components of $c(G\backslash C)$, then at least one of the induced subgraphs $G\backslash H_2$ and $G\backslash H_1$ belongs to the union of the graph families $\mathcal{F}_{11}$ and $\mathcal{F}_{21}$.
\end{lemma}
\begin{proof}
Let $G$ be a connected triangle-free EFC-graph having an induced odd cycle $C$ of length five such that ${c(G\backslash C)= 2}$. Let $H_1$ and $H_2$ be the components of $c(G\backslash C)$.  By Lemma \ref{InducedEFCSubgraph} and Theorem \ref{main1}, induced subgraphs $G\backslash H_2$ and $G\backslash H_1$ of $G$ are both contained in the graph class $\mathcal{F}$. Since $C$ is a cycle of length five, each of $G\backslash H_2$ and $G\backslash H_1$ belongs to one of the graph families $\mathcal{F}_{11}$, $\mathcal{F}_{21}$, $\mathcal{F}_{3}$, and $\mathcal{F}_{4}$.
Notice that $(G\backslash H_2)\backslash C = H_1$ and $(G\backslash H_1)\backslash C = H_2$.
By Lemma \ref{DisjointEdges1} and  Corollary \ref{covering-C}, the vertices of $H_1$ and $H_2$ have at most two nonconsecutive common neighbors in $C$. By definition of the graph families  $\mathcal{F}_{3}$ and $\mathcal{F}_{4}$, we observe that $G\backslash H_2$ and $G\backslash H_1$ cannot both belong to $\mathcal{F}_{3} \cup \mathcal{F}_{4}$. Therefore, it follows that at least one of $G\backslash H_2$ and $G\backslash H_1$ belongs to  the graph families $\mathcal{F}_{11}$ and $\mathcal{F}_{21}$.
  \hfill\(\qed\)
\end{proof}

\begin{proposition}\label{c5-c5}
Let $G$ be a connected triangle-free EFC-graph having an induced odd cycle $C$ of length five such that ${c(G\backslash C)}$ has the components $H_1$ and $H_2$. Then $G \in \mathcal{G}_{21} \cup  \mathcal{G}_{22} \cup \mathcal{G}_{23} \cup \mathcal{G}_{31} \cup  \mathcal{G}_{32}$.
\end{proposition}
\begin{proof}
Let $G$ be a connected triangle-free EFC-graph having an induced odd cycle $C$ of length five such that ${c(G\backslash C)= 2}$. Let $H_1$ and $H_2$ be the components of $c(G\backslash C)$. By Lemma \ref{InducedEFCSubgraph} and Theorem \ref{main1}, induced subgraphs $G\backslash H_2$ and $G\backslash H_1$ of $G$ are both contained in the graph class $\mathcal{F}$. Since $C$ is a cycle of length five, each of $G\backslash H_2$ and $G\backslash H_1$ belongs to the union of the graph families $\mathcal{F}_{11}$, $\mathcal{F}_{21}$, $\mathcal{F}_{3}$, and $\mathcal{F}_{4}$. By Lemma \ref{noC5}, at least one of the induced subgraphs $G\backslash H_2$ and $G\backslash H_1$ belongs to the union of the graph families $\mathcal{F}_{11}$ and $\mathcal{F}_{21}$.

We first suppose that one of $G\backslash H_2$ and  $G\backslash H_1$ belongs to the graph family $\mathcal{F}_{11}$; say  $G\backslash H_2 \in \mathcal{F}_{11}$. Then, by definition of $\mathcal{F}_{11}$, each edge of $H_1$ induce a $C_4$ with the same edge of $C$.  Since, by Corollary \ref{covering-C}, any two edges from $H_1$ and $H_2$ dominate the vertices of $C$, it is easy to see that $G\backslash H_1 \notin \mathcal{F}_{11}$. It follows that $G\backslash H_1$ is contained in one of the graph families $\mathcal{F}_{21}$, $\mathcal{F}_{3}$ and $\mathcal{F}_{4}$.
Hence, by merging a graph from $\mathcal{F}_{11}$ and a graph from $\mathcal{F}_{21}$ via common cycle $C$ represented with $u_1u_2u_3u_4u_5u_1$  in graph $G^*$ given in Figure \ref{fig:MainTheorem}, we conclude that $G$ belongs to the graph family $\mathcal{G}_{21}$. Similarly, by merging a graph from $\mathcal{F}_{11}$ and a graph from $\mathcal{F}_{3}$ or $\mathcal{F}_{4}$ via common cycle $C$ represented with $u_1u_2u_3u_4u_5u_1$  in graph $G^*$ given in Figure \ref{fig:MainTheorem}, we conclude that $G$ belongs to the graph family $\mathcal{G}_{22}$ or $\mathcal{G}_{23}$, respectively.

We now suppose that one of $G\backslash H_2$ and  $G\backslash H_1$ belongs to the graph family $\mathcal{F}_{21}$; say  $G\backslash H_2 \in \mathcal{F}_{21}$. In the case where $G\backslash H_1 \in \mathcal{F}_{11}$; we previously obtain the graph family $\mathcal{G}_{21}$. Then, we suppose also that $G\backslash H_1 \notin \mathcal{F}_{11}$. Note that $(G\backslash H_2)\backslash C = H_1$ and $(G\backslash H_1)\backslash C = H_2$. By Lemma \ref{DisjointEdges1} and  Corollary \ref{covering-C}, the vertices of $H_1$ and $H_2$ have at most two nonconsecutive common neighbors in $C$.
By definition of the graph families $\mathcal{F}_{4}$, it follows that $G\backslash H_1 \notin \mathcal{F}_{4}$. Therefore,
by merging a graph from $\mathcal{F}_{21}$ and a graph from $\mathcal{F}_{21}$ or $\mathcal{F}_{3}$ via common cycle $C$ represented with $u_1u_2u_3u_4u_5u_1$  in graph $G^*$ given in Figure \ref{fig:MainTheorem}, it is easy to observe that $G$ belongs to the graph family $\mathcal{G}_{31}$ or $\mathcal{G}_{32}$.
\hfill\(\qed\)
\end{proof}

We summarize the results in this subsection as follows:
\begin{theorem}\label{main2}
  A graph $G$ is a connected triangle-free EFC-graph having an induced odd cycle $C$  such that ${c(G\backslash C)= 2}$ if and only if $G \in \mathcal{G}$ where $\mathcal{G} = \mathcal{G}_{11} \cup \mathcal{G}_{12} \cup \mathcal{G}_{21} \cup \mathcal{G}_{22} \cup \mathcal{G}_{23} \cup \mathcal{G}_{31} \cup \mathcal{G}_{32}$.
\end{theorem}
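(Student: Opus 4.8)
The plan is to prove the biconditional by treating its two directions separately, with the forward (sufficiency) direction being essentially immediate and the reverse (necessity) direction reducing to an exhaustive case analysis assembled from the propositions already established in this subsection. This mirrors the structure of the proof of Theorem \ref{main1}.

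For the ``if'' direction, I would simply invoke Proposition \ref{easyDirection2}, which already guarantees that every $G \in \mathcal{G}$ is a connected triangle-free EFC-graph containing exactly two components after the removal of an induced odd cycle. Nothing further is required here.

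For the ``only if'' direction, I would let $G$ be a connected triangle-free EFC-graph whose removal of an induced odd cycle $C$ leaves exactly two components $H_1$ and $H_2$. By Corollary \ref{c5c7}, the cycle $C$ has length five or seven, and this dichotomy drives the case split. When $C$ is a cycle of length seven, Proposition \ref{c7-c7} directly yields $G \in \mathcal{G}_{11} \cup \mathcal{G}_{12}$. When $C$ is a cycle of length five, I would apply Lemma \ref{noC5}, which ensures that at least one of $G\backslash H_2$ and $G\backslash H_1$ lies in $\mathcal{F}_{11} \cup \mathcal{F}_{21}$; up to relabeling $H_1$ and $H_2$, I may assume this subgraph is $G\backslash H_2$. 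If $G\backslash H_2 \in \mathcal{F}_{11}$, then Proposition \ref{c5-c5-type1} places $G$ in $\mathcal{G}_{21} \cup \mathcal{G}_{22} \cup \mathcal{G}_{23}$; otherwise $G\backslash H_2 \in \mathcal{F}_{21}$ and Proposition \ref{c5-c5-type2} places $G$ in $\mathcal{G}_{31} \cup \mathcal{G}_{32}$. In every case $G \in \mathcal{G}$, which completes the argument.

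The bulk of the difficulty has already been absorbed into the preceding lemmas and propositions, so the remaining challenge is organizational rather than technical: I must verify that the case split is genuinely exhaustive and that the symmetry between $H_1$ and $H_2$ is applied consistently. The point that requires the most care is the overlap between the two subcases of the $C_5$ analysis, namely ensuring that the situation $G\backslash H_2 \in \mathcal{F}_{21}$ with $G\backslash H_1 \in \mathcal{F}_{11}$ is not double-counted. This is precisely why Proposition \ref{c5-c5-type2} is stated under the proviso that the companion subgraph avoids $\mathcal{F}_{11}$, so that the family $\mathcal{G}_{21}$ is produced only once, in the first subcase. Apart from this bookkeeping, the proof is a direct assembly of the established pieces.
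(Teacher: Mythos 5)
Your proposal is correct and follows essentially the same route as the paper's own proof: the forward direction by Proposition \ref{easyDirection2}, and the converse by splitting on the cycle length, invoking Proposition \ref{c7-c7} for $C_7$, and for $C_5$ using Lemma \ref{noC5} followed by Propositions \ref{c5-c5-type1} and \ref{c5-c5-type2}, with the $\mathcal{F}_{11}$ case prioritized exactly as the paper does. One trivial quibble: the proviso that the companion subgraph avoids $\mathcal{F}_{11}$ appears inside the paper's \emph{proof} of Proposition \ref{c5-c5-type2}, not in its statement, but this does not affect the validity of your case analysis since every branch still lands in $\mathcal{G}$.
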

\begin{proof}
  One direction easily follows from Proposition \ref{easyDirection2}. For the other direction, we proceed as follows: Let $G$ be a connected triangle-free EFC-graph having an induced odd cycle $C$ such that ${c(G\backslash C)= 2}$ and let $H_1$ and $H_2$ be two components of $c(G\backslash C)$. If $C$ is a cycle of length seven, then $G \in \mathcal{G}_{11} \cup  \mathcal{G}_{12}$ by Proposition \ref{c7-c7}. If $C$ is a cycle of length five, then by Proposition \ref{c5-c5}, $G$ is a member of $\mathcal{G}_{21} \cup  \mathcal{G}_{22} \cup  \mathcal{G}_{23} \cup \mathcal{G}_{31} \cup  \mathcal{G}_{32}$. Hence, we are done.
 \hfill\(\qed\)
\end{proof}

\section{Main Theorem and Recognition Algorithm}

In  this section, we first present the main result of this paper and then give an efficient recognition algorithm for nonbipartite triangle-free EFC-graphs.

The following theorem, which provides a complete structural characterization of triangle-free equimatchable graphs, is the main theorem of this paper:

\begin{theorem}\label{MainTheorem}
  A graph $G$ is a connected triangle-free equimatchable graph if and only if one of the following holds:
  \begin{description}
    \item[(i)] $G$ is an equimatchable bipartite graph, which is characterized in \cite{Lesk}.
    \item[(ii)] $G$ is a $C_5$ or a $C_7$.
    \item[(iii)] $G \in \mathcal{F} =  \mathcal{F}_{11} \cup \mathcal{F}_{12} \cup  \mathcal{F}_{21} \cup \mathcal{F}_{22} \cup \mathcal{F}_{3} \cup \mathcal{F}_{4}$.
    \item[(iv)] $G \in \mathcal{G} =  \mathcal{G}_{11} \cup  \mathcal{G}_{12} \cup  \mathcal{G}_{21} \cup  \mathcal{G}_{22} \cup  \mathcal{G}_{23} \cup  \mathcal{G}_{31} \cup  \mathcal{G}_{32}$.
  \end{description}
\end{theorem}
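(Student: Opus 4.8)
The plan is to assemble the earlier structural results into a single exhaustive case analysis; the theorem itself requires no new combinatorial argument beyond checking that the cases partition all possibilities. Since equimatchability is a component-wise property, I would restrict to connected $G$ throughout and organize both directions around the dichotomy between non-factor-critical and factor-critical graphs.

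For the forward direction, I would start from a connected triangle-free equimatchable graph $G$ and first ask whether it is factor-critical. If it is not, then Theorem \ref{odd-hole-free} forbids any induced $C_{2k+1}$ with $k \geq 2$; combined with triangle-freeness this rules out every induced odd cycle, and since a shortest odd cycle is always chordless and hence induced, $G$ has no odd cycle at all and is therefore bipartite, placing it in case (i) via Theorem \ref{lesk-thm-bipartiteEqm}. If $G$ is factor-critical, it is a connected triangle-free EFC-graph, and Corollary \ref{c5c7} supplies an induced odd cycle $C$ of length five or seven. Since $G$ is odd by factor-criticality and $C$ is odd, $G \backslash C$ has an even number of vertices, and Corollary \ref{TwoComponents} bounds its number of components by two. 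I would then branch on $c(G \backslash C)$: an empty remainder gives $G \in \{C_5, C_7\}$ and case (ii); one component invokes Theorem \ref{main1} for case (iii), $G \in \mathcal{F}$; and two components invoke Theorem \ref{main2} for case (iv), $G \in \mathcal{G}$.

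For the reverse direction, I would verify that each family yields a connected triangle-free equimatchable graph. Case (i) is immediate, as bipartite graphs are triangle-free and Theorem \ref{lesk-thm-bipartiteEqm} isolates exactly the equimatchable ones; case (ii) follows from the elementary fact that $C_5$ and $C_7$ are triangle-free and equimatchable; and cases (iii) and (iv) are precisely Propositions \ref{easyDirection} and \ref{easyDirection2}, respectively.

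Because the substantive work is carried by Theorems \ref{main1} and \ref{main2}, the only delicate point in this assembly is confirming that the forward case split is exhaustive and that each branch lands in the intended family. The crux is the identification of the non-factor-critical triangle-free equimatchable graphs with the bipartite ones, which relies on pairing Theorem \ref{odd-hole-free} with triangle-freeness and the observation that a minimal odd cycle is induced; the remaining branches merely route the factor-critical graphs through Corollaries \ref{c5c7} and \ref{TwoComponents} into the correct case, and I expect no further obstacle.
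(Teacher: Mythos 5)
Your proposal is correct and takes essentially the same route as the paper: both directions are assembled from Theorem \ref{odd-hole-free}, Corollary \ref{TwoComponents}, Theorems \ref{main1} and \ref{main2}, and Propositions \ref{easyDirection} and \ref{easyDirection2}, with no new combinatorial work. The only cosmetic difference is the organization of the case split — the paper separates bipartite graphs first and invokes Theorem \ref{frendrup} for case (ii) when the girth is at least five, whereas you separate on factor-criticality and reach case (ii) as the empty-remainder branch of $c(G\backslash C)\in\{0,1,2\}$ — but the underlying facts used are identical.
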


\begin{proof}
One direction follows from Proposition \ref{easyDirection} and Proposition \ref{easyDirection2}. We proceed with the other direction. Let $G$ be a connected triangle-free equimatchable graph. If $G$ is a bipartite equimatchable graph, then (i) holds. Now consider the case where $G$ has an induced odd cycle $C$ of length at least $5$. Hence, by Theorem \ref{odd-hole-free}, $G$ is an EFC-graph. If $G$ has girth at least $5$, then (ii) holds by \cite{Frendrup}. Otherwise, by Corollary \ref{TwoComponents}, $G \backslash C$ has at most two components. If $G \backslash C$ is connected, then (iii) holds by Theorem \ref{main1}. If $G \backslash C$ has two components, then (iv) holds by Theorem \ref{main2}.
\hfill\(\qed\)
\end{proof}

The recognition problem of triangle-free equimatchable graphs is clearly polynomial since each one of these two properties can be tested in polynomial time. Equimatchable graphs can be recognized in time
$\cal{O}$ $({m \cdot \overline{m}})$ (see \cite{DemangeEkim}), where $m$ (resp. $\overline{m}$) is the number of edges (resp. non-edges) of the graph. Triangle-freeness can be recognized in time $\cal{O}$ $(m^{\frac{2 \omega}{\omega+1}}=$ $\cal{O}$ $(m^{1.407})$ (see \cite{Alon1997}), where $w$ is the exponent of the matrix multiplication complexity (the best known exponent is $\omega \approx 2.37286$ (see \cite{LeGall})). Hence, the currently known overall complexity of the recognition of triangle-free equimatchable graphs is $\cal{O}$ $(m \cdot(\overline{m} + m^{0.407}))$.

We will now show that for non-bipartite graphs, our characterization yields a linear time recognition algorithm.

\alglanguage{pseudocode}

\begin{algorithm}[H]
\caption{Triangle-free equimatchable graph recognition for non-bipartite graphs}\label{alg:recognition}
\begin{algorithmic}[1]
\Require{A non-bipartite graph $G$.}
\Statex

\State Compute a twin-free graph $H$ and multiplicities $n_1, \ldots, n_k$ such that
$G=H(n_1, \ldots, n_k)$. \label{step:TwinElimination}

\If {$H$ is isomorphic to neither $G^*$ nor to a relevant subgraph of it}  \label{step:Isomorphism}
\State \Return \textbf{false}

\Else
\State \Return \textbf{true} if and only if $n_1, \ldots, n_k$ matches any of the multiplicity patterns in the definitions of $\cal{F}$, $\cal{G}$, $C_5$ or $C_7$ \label{step:multiplicity}

\EndIf

\end{algorithmic}
\end{algorithm}


\begin{corollary}\label{cor:algo}
Given a non-bipartite graph $G$, Algorithm \ref{alg:recognition} can recognize whether $G$ is equimatchable and triangle-free in linear time.
\end{corollary}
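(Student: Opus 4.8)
The plan is to derive the corollary from the structural characterization in Theorem \ref{MainTheorem} together with a running-time analysis of the three steps of Algorithm \ref{alg:recognition}. I would first establish correctness. Since $G$ is non-bipartite, case (i) of Theorem \ref{MainTheorem} is excluded, so a connected such $G$ is triangle-free equimatchable if and only if it is $C_5$, $C_7$, or a member of $\mathcal{F} \cup \mathcal{G}$. By Definitions \ref{Fgraphs} and \ref{Ggraphs}, every graph in $\mathcal{F} \cup \mathcal{G}$ has the form $G_1(\vec{m})$ or $G_2(\vec{m})$ for a multiplicity vector $\vec{m}$ obeying one of the finitely many listed constraint patterns, while the two exceptional cycles $C_5$ and $C_7$, being themselves twin-free, are recognized directly among the cases tested in Step \ref{step:multiplicity}. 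Because the twin-free reduction used in Step \ref{step:TwinElimination} is unique, the graph $H$ and the multiplicities $(n_1,\ldots,n_k)$ it returns are exactly the base graph (namely $G_1$, $G_2$, or the induced subgraph on the vertices of nonzero multiplicity) and the sizes of the twin classes. Consequently $G \in \mathcal{F} \cup \mathcal{G} \cup \{C_5,C_7\}$ if and only if Steps \ref{step:Isomorphism} and \ref{step:multiplicity} succeed, which is precisely the value returned; the converse reading (that a positive answer really produces an equimatchable triangle-free graph) is supplied by Propositions \ref{easyDirection} and \ref{easyDirection2} together with the trivial cycle cases.

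For the running time I would argue that Step \ref{step:TwinElimination} dominates. The twin-free graph and the class multiplicities can be obtained in $O(n+m)$ time, since modular decomposition is computable in linear time and each $n_i$ is the cardinality of the corresponding leaf module. After this I would invoke a size short-circuit: as $G_2$ has $11$ vertices and is the largest base graph, the twin-free reduction of any graph in $\mathcal{F} \cup \mathcal{G} \cup \{C_5,C_7\}$ has at most $11$ vertices. Hence if $|V(H)| > 11$ the algorithm may reject at once, and this rejection is correct precisely because $H$ cannot then be isomorphic to $G_1$, $G_2$, or any of their relevant subgraphs. If instead $|V(H)| \le 11$, then $H$ has constant size, so the isomorphism test in Step \ref{step:Isomorphism} against the two fixed graphs and their finitely many relevant subgraphs is $O(1)$, and matching $(n_1,\ldots,n_k)$ against the constantly many multiplicity patterns of Definitions \ref{Fgraphs} and \ref{Ggraphs} in Step \ref{step:multiplicity} is likewise $O(1)$, the multiplicities themselves having already been produced in Step \ref{step:TwinElimination}. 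The total cost is therefore $O(n+m)$, i.e.\ linear.

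Two points require care. The first, and the main obstacle, is to verify that the twin-free reduction is \emph{faithful}: one must check that $G_1$ and $G_2$, as well as every induced subgraph of them obtained by setting some multiplicities to zero, are themselves twin-free, and in particular that the vertices of the induced odd cycle $C$ (which always carry multiplicity one) never become twins of one another. Only then does contracting twin classes recover exactly the intended base graph and the intended vector $\vec{m}$, so that a member of one of the families is mapped to the correct pattern rather than collapsed further and misclassified. This is a finite verification on the two fixed graphs of Figure \ref{fig:MainTheorem} and their subgraphs, but it is the crux of correctness. The second point is the treatment of disconnected inputs: since equimatchability and triangle-freeness are both preserved componentwise and a non-bipartite graph has a non-bipartite component, I would either restrict attention to connected $G$ or process the components independently, observing that the linear-time bound is unaffected.
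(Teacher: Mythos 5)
Your proposal follows essentially the same route as the paper's proof: correctness is delegated to Theorem \ref{MainTheorem} (with Propositions \ref{easyDirection} and \ref{easyDirection2} supplying the converse direction), Step \ref{step:TwinElimination} is performed in linear time via modular decomposition, and the $11$-vertex bound makes Steps \ref{step:Isomorphism} and \ref{step:multiplicity} constant time, exactly as in the paper. Your two points of care --- verifying that $G_1$, $G_2$ and their relevant subgraphs are themselves twin-free so that the reduction is faithful, and handling disconnected inputs (where restricting to connected $G$ is the safer of your two options, since component-wise processing could hit bipartite components for which no linear-time equimatchability test is known) --- are legitimate refinements of details the paper passes over silently.
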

\begin{proof}
The correctness of Algorithm \ref{alg:recognition} is a direct consequence of Theorem \ref{MainTheorem}. For every graph $G$, there is a twin-free graph $H$ and a  vector $(n_1, \ldots, n_k)$ of vertex multiplicities such that $G=H(n_1, \ldots, n_k)$. The graph $H$ and the vector $(n_1, \ldots, n_k)$ can be computed from $G$ in linear time by first computing the modular decomposition of $G$ (see \cite{ChristopheModular}) and then looking for leaves of the modular decomposition tree that are independent sets. Therefore, step \ref{step:TwinElimination} can be performed in linear time.

We now note that some entries of the multiplicity vectors allowed for the graph families can be zero. In this case, $H$ is not isomorphic to $G^*$ but to an induced subgraph of it with those specific vertices removed. We refer to those graphs as \emph{relevant subgraphs} in the algorithm. Observe that $C_5$ and $C_7$ are also one of the relevant subgraphs since $C_{5}=G^*(1,1,1,1,1,0,0,0,0,0,0)$ and $C_{7}=G^*(1,1,1,0,1,1,1,1,0,0,0)$.

As for step \ref{step:Isomorphism}, it takes constant time to decide whether an isomorphism exists: if $H$ has more than $11$ vertices, it is isomorphic to neither  $G^*$  nor to a subgraph of them; otherwise, $H$ has to be compared to each one of these graphs and their relevant subgraphs, where each comparison takes constant time. Finally, step \ref{step:multiplicity} takes constant time.

We conclude that the running time of Algorithm \ref{alg:recognition} is dominated by the running time of step \ref{step:TwinElimination}, which can be performed in linear time.
 \hfill\(\qed\)
\end{proof}

\section{Conclusion and Open Questions}
Frendrup et al. \cite{Frendrup} provided a characterization of equimatchable graphs with girth at least 5, while Akbari et al. \cite{AkbariRegular} characterized regular triangle-free equimatchable graphs. In this paper, we extend both of these results by providing a complete structural characterization of triangle-free equimatchable graphs in terms of graph families. For non-bipartite graphs, our characterization implies a linear time recognition algorithm, which improves the known complexity of recognizing triangle-free equimatchable graphs.

Bipartite equimatchable graphs have been characterized by Lesk et al. \cite{Lesk} (Theorem \ref{lesk-thm-bipartiteEqm}); hence, we focus on non-bipartite triangle-free equimatchable graphs. Since it has been proved by Dibek et al. \cite{Dibek} (Theorem \ref{C-odd-free}) that equimatchable graphs do not have induced odd cycles of length at least nine, the only possible induced odd cycles in triangle-free equimatchable graphs are cycles of length five and seven. We then study such graphs by taking into account the fact that equimatchable graphs with an induced odd cycle of length at least five has to be factor-critical (Theorem \ref{odd-hole-free}). We first prove that if there is an induced cycle of length five or seven in such a graph, then every component of the subgraph induced by all vertices that are not on this cycle is isomorphic to a complete bipartite graph with equal partition sizes (Theorem \ref{KnnProof}). We then derive the structure of the graph induced by the odd cycle and the endpoints of an edge outside the cycle (Theorem \ref{C-OneEdge-Structure}) as well as the one induced by the odd cycle and two edges from different components of the subgraph outside the odd cycle (Theorem \ref{C-TwoEdge-Structure}). As a corollary, we show that the subgraph induced by the vertices that are not on the odd cycle has at most two components (Corollary \ref{TwoComponents}). We then analyze the case with one and two components in Section \ref{oneComp} and Section \ref{twoComp}, respectively. In particular, we prove that each such graph can be obtained from the twin-free graph in Figure \ref{fig:MainTheorem} or its relevant subgraphs. More precisely, we provide in Theorem \ref{main1} and Theorem \ref{main2}  the structure of triangle-free equimatchable graphs where the removal of an induced odd cycle leaves one component and two components, respectively. In Section 5, we present the main theorem of the paper (Theorem \ref{MainTheorem}). Building on this structural characterization, we present in Corollary \ref{cor:algo} a linear time algorithm which, given a non-bipartite graph, recognizes whether the graph is equimatchable and triangle-free.

An interesting open question is to obtain a similar efficient recognition algorithm for bipartite equimatchable graphs. Together with the algorithm in this paper, such an algorithm would also imply an efficient recognition algorithm for triangle-free equimatchable graphs in general. The characterization given by Lesk et al. \cite{Lesk} (Theorem \ref{lesk-thm-bipartiteEqm}) is not structural and does not lead to an efficient recognition algorithm. The characterization by Frendrup et al. \cite{Frendrup} (Theorem \ref{frendrup}) is indeed structural; however, it is only for bipartite equimatchable graphs with girth at least six. Extending this characterization to the general case of bipartite equimatchable graphs and obtaining graph families leading to an efficient recognition algorithm is an interesting research direction.

\bibliography{REFERENCES}

\end{document}